\newtheorem{theorem}{Theorem}[section]
\newtheorem{corollary}[theorem]{Corollary}
\newtheorem{lemma}[theorem]{Lemma}
\newtheorem{proposition}[theorem]{Proposition}
\newtheorem{definition}[theorem]{Definition}
\newtheorem{remark}[theorem]{Remark}
\theoremstyle{remark}
\newcommand{\mR}{\mathbb{R}}
\newcommand{\mC}{\mathbb{C}}
\newcommand{\mN}{\mathbb{N}}
\newcommand{\mE}{\mathbb{E}}
\newcommand{\mS}{\mathbb{S}}
\newcommand{\mD}{\mathbb{D}}
\newcommand{\cH}{\mathcal{H}}
\newcommand{\cF}{\mathcal{F}}
\newcommand{\cP}{\mathcal{P}}
\newcommand{\cC}{\mathcal{C}}
\newcommand{\cS}{\mathcal{S}}
\newcommand{\cV}{\mathcal{V}}
\newcommand{\cO}{\mathcal{O}}
\newcommand{\ux}{\underline{x}}
\newcommand{\uxb}{\underline{x} \grave{}}
\newcommand{\uyb}{\underline{y} \grave{}}
\newcommand{\uy}{\underline{y}}
\newcommand{\pI}{\partial_{x_i}}
\newcommand{\upx}{\partial_{\underline{x}}}
\begin{document}
 
\title[Hilbert space for quantum mechanics on superspace]
{Hilbert space for quantum mechanics on superspace}

\author{Kevin Coulembier}
\address{Department of Mathematical Analysis\\
Ghent University\\ Krijgslaan 281, 9000 Gent\\ Belgium.}
\email{Coulembier@cage.ugent.be}
\author{Hendrik De Bie}
\address{Department of Mathematical Analysis\\
Ghent University\\ Krijgslaan 281, 9000 Gent\\ Belgium.}
\email{Hendrik.DeBie@UGent.be}

\date{\today}
\keywords{Hilbert space, superspace, Hermite representation, harmonic analysis, orthosymplectic invariance, Schr\"odinger equation, uncertainty principle}
\subjclass{58C50, 81Q60, 81R12, 46E20} 
\thanks{K. Coulembier is as Ph.D. Fellow of the Research Foundation - Flanders (FWO). H. De Bie is a Postdoctoral Fellow of the Research Foundation - Flanders (FWO)}

\maketitle

\begin{abstract}
In superspace a realization of $\mathfrak{sl}_2$ is generated by the super Laplace operator and the generalized norm squared. In this paper, an inner product on superspace for which this representation is skew-symmetric is considered. This inner product was already defined for spaces of weighted polynomials (see [K. Coulembier, H. De Bie and F. Sommen, Orthogonality of Hermite polynomials in superspace and Mehler type formulae, arXiv:1002.1118]). In this article, it is proven that this inner product can be extended to the super Schwartz space, but not to the space of square integrable functions. Subsequently, the correct Hilbert space corresponding to this inner product is defined and studied. A complete basis of eigenfunctions for general orthosymplectically invariant quantum problems is constructed for this Hilbert space. Then the integrability of the $\mathfrak{sl}_2$-representation is proven. Finally the Heisenberg uncertainty principle for the super Fourier transform is constructed.
\end{abstract}

%\tableofcontents

%%%%%%%%%%%%%%%%%%%%%%%%%%%%%%%%%%%%%%%%%%%%%%%%%%%%%%%%%
%Start of paper

\section{Introduction}

In recent work, we have been developing a new approach to the study of superspace, namely by means of harmonic analysis (see e.g. \cite{DBS5, DBE1, Cauchy, DBS9}). We work over a superspace $\mR^{m|2n}$ generated by $m$ commuting or bosonic variables and $2n$ anti-commuting or fermionic variables (\cite{MR732126}). The main feature of this approach is the introduction of an orthosymplectic super Laplace operator $\nabla^2$ and a generalized norm squared $R^2$. They have the property $\nabla^2 (R^2)=2(m-2n)=2M$ with $M$ the so-called super-dimension. This parameter characterizes several global features of the superspace $\mR^{m|2n}$, see \cite{DBS5, DBE1, DBS3, CDBS3}. In \cite{DBS5, DBE1, Cauchy} integration over the supersphere (algebraically defined by $R^2=1$) was introduced giving a new tool in the study of super analysis.

Schr\"odinger equations in superspace were considered first as a method to incorporate spin, see e.g. \cite{MR1292819, MR1019514}. The quantum (an-)harmonic oscillator (\cite{DBS3, MR1019514, MR967935}), the Kepler problem (\cite{MR2395482}), the delta potential (\cite{DBS8}) and the $CMS$-model (\cite{MR2025382}) have already been studied in superspace. Until now, a suitable Hilbert space structure for quantum mechanics on superspace has not been introduced. It is therefore the main aim of this paper to tackle that problem.  

In \cite{CDBS2}, we already constructed an inner product $\langle \cdot|\cdot\rangle_2$ on the space of polynomials weighted with the super Gaussian such that $R^2$ and $\nabla^2$ are symmetric operators. Here, we will show that this inner product can be extended to the super Schwartz space, $\cS(\mR^m)\otimes \Lambda_{2n}$, with $\Lambda_{2n}$ the Grassmann algebra, but not to $L_2(\mR^m)\otimes\Lambda_{2n}$. In order to prove the extension to $\cS(\mR^m)\otimes \Lambda_{2n}$ we construct a spherical Hermite representation theorem for $\cS(\mR^m)\otimes\Lambda_{2n}$. In doing so, we will consider two different bases of the Schwartz space, the product basis (obtained by products of purely bosonic and fermionic bases) and the spherical basis (generated by the canonical $\mathfrak{sl}_{2}$ realization in superspace). The main technical tool in extending the inner product $\langle \cdot|\cdot\rangle_2$ is the determination of polynomial bounds of the change-of-basis coefficients between these two bases (see theorem \ref{afschatting1}). 

Next we turn our attention to suitable Hilbert spaces for quantum mechanics in superspace. We first prove the rather surprising fact that any Hilbert space for which $R^2$ and $\nabla^2$ are symmetric operators and which contains the eigenvectors of the harmonic oscillator will contain generalized functions (i.e. distributions). These generalized functions are weak derivatives of functions. This is in agreement with the study of orthosymplectically invariant Schr\"odinger equations in \cite{CDBS3} which led to the conclusion that it is natural to include generalized functions in the space of solutions. Subsequently, we define our `canonical' Hilbert space as the closure of $\cS(\mR^m)\otimes \Lambda_{2n}$ with respect to $\langle \cdot|\cdot\rangle_2$. We show that the space of solutions of orthosymplectically invariant Schr\"odinger equations (such as the oscillator and Kepler problem) constructed in \cite{CDBS3} forms a complete basis. We also derive a criterion for essential self-adjointness of orthosymplectically invariant Hamiltonians. Then we show that the $\mathfrak{sl}_2$ representation generated by the super Laplace operator and the generalized norm squared on $\cS(\mR^m)\otimes\Lambda_{2n}$ is integrable. This is a generalization of a result for Dunkl-harmonic analysis, see \cite{MR2352481}. Finally we formulate and prove the Heisenberg uncertainty relation for the super Fourier transform.

The paper is organized as follows. First the necessary notions of super harmonic analysis are reviewed. Then the coefficients for the change of basis between the super Hermite functions and the product of purely bosonic and fermionic Hermite functions are calculated and a detailed study is made of their growth properties. Using this result we prove that the inner product can be continuously extended to $\cS(\mR^m)\otimes\Lambda_{2n}$, but not to $L_2(\mR^m)\otimes\Lambda_{2n}$. The Hilbert space $\bold{L}_2(\mR^{m|2n})$ is then constructed as the unique Hilbert space containing the space of polynomials weighted with the super Gaussian, corresponding to the inner product $\langle\cdot|\cdot\rangle_2$. Then we consider orthosymplectically invariant Schr\"odinger equations on superspace. Finally the integrability of the $\mathfrak{sl}_2$ representation and the uncertainty relation for the Fourier transform is proven. Also a list of notations is provided in order to avoid confusion.

\section{Harmonic analysis in superspace}
\setcounter{equation}{0}
\label{preliminaries}
Superspaces are spaces where one considers not only commuting (bosonic) but also anti-commuting (fermionic) co-ordinates (see a.o. \cite{MR732126}). The $2n$ anti-commuting variables ${x\grave{}}_i$ generate the complex Grassmann algebra $\Lambda_{2n}$ under the relations ${x\grave{}}_i{x\grave{}}_j=-{x\grave{}}_j{x\grave{}}_i$. An arbitrary element $f \in \Lambda_{2n}$ can hence always be written as $f = \sum_A f_A {x \grave{}}_A$ with ${x \grave{}}_A = {x \grave{}}_1^{\, \alpha_1} \ldots {x \grave{}}_{2n}^{\, \alpha_{2n}}$, $A = (\alpha_{1}, \ldots, \alpha_{2n}) \in \{0,1\}^{2n}$ and $f_A \in \mC$. The dimension of $\Lambda_{2n}$ as a $\mC$-vectorspace is hence $2^{2n}$. We consider a space with $m$ bosonic variables $x_i$. The supervector $\bold{x}$ is defined as
\[
\bold{x}=(X_1,\cdots,X_{m+2n})=(\ux,\uxb)=(x_1,\cdots,x_m,{x\grave{}}_1,\cdots,{x\grave{}}_{2n}).
\]

The commutation relations for the Grassmann algebra and the bosonic variables are captured in the relation $X_iX_j=(-1)^{[i][j]}X_jX_i$ with $[i]=0$ if $i\le m$ and $[i]=1$ otherwise. The super-dimension is defined as $M=m-2n$. We consider a Riemannian superspace $\mR^{m|2n}$ with the orthosymplectic metric $g$ defined as
\begin{equation*}
\begin{cases} 
g^{ii}=1 & 1\le i\le m,\\
g^{2i-1+m,2i+m}=-1/2 & 1\le i\le n ,\\  
g^{2i+m,2i-1+m}=1/2 & 1\le i\le n,\\
g^{ij}=0 & otherwise. 
\end{cases}
\end{equation*}
We define $X^j=X_ig^{ij}$. The square of the `radial coordinate' is given by
\begin{equation*}
R^2=\langle \bold{x},\bold{x}\rangle=\sum_{j=1}^{m+2n}X^jX_j =\sum_{i=1}^mx_i^2-\sum_{j=1}^n{x\grave{}}_{2j-1}{x\grave{}}_{2j}=r^2+\theta^2,
\end{equation*}
see \cite{CDBS3}. The super gradient is defined by
\begin{eqnarray*}
\nabla&=&(\partial_{X^1},\cdots,\partial_{X^{m+2n}})\\
&=&(\partial_{x_1},\cdots,\partial_{x_m},2\partial_{{x\grave{}}_{2}},-2\partial_{{x\grave{}}_{1}},\cdots,2\partial_{{x\grave{}}_{2n}},-2\partial_{{x\grave{}}_{2n-1}}).
\end{eqnarray*}
From this expression and the metric we obtain the Laplace operator
\begin{equation*}
\nabla^2=\langle \nabla,\nabla \rangle=\sum_{k=1}^{m+2n}\partial_{X_k}\partial_{X^k} =\sum_{i=1}^m\partial_{x_i}^2-4\sum_{j=1}^n\partial_{{x\grave{}}_{2j-1}}\partial_{{x\grave{}}_{2j}}.
\end{equation*}

The super Euler operator is defined by
\begin{equation*}
\mE=\mE_b+\mE_f=\langle x,\nabla\rangle=\sum_{k=1}^{m+2n}X^k\partial_{X^k}=\sum_{i=1}^mx_i\partial_{x_i}+\sum_{j=1}^{2n}{x\grave{}}_j\partial_{{x\grave{}}_j}.
\end{equation*}
The operators $i\nabla^2/2$, $iR^2/2$ and $\mE+M/2$, generate the $\mathfrak{sl}_2$ Lie-algebra (\cite{DBS5}). In particular, the relation
\begin{eqnarray}
\label{commRD}
\left[\nabla^2/2,R^2/2\right]&=&2\mE+M
\end{eqnarray} 
holds. The Laplace-Beltrami operator is defined as
\begin{equation}
\label{LB}
\Delta_{LB}=R^2\nabla^2-\mE(M-2+\mE).
\end{equation}

The inner product of two supervectors $\bold{x}$ and $\bold{y}$ is given by
\begin{equation}
\label{inprod}
\langle \bold{x},\bold{y}\rangle=\sum_{i=1}^mx_iy_i-\frac{1}{2}\sum_{j=1}^n({x\grave{}}_{2j-1}{y\grave{}}_{2j}-{x\grave{}}_{2j}{y\grave{}}_{2j-1})=\langle \ux,\uy\rangle+\langle\uxb,\uyb\rangle.
\end{equation}
The commutation relations for two supervectors are determined by the relation $X_iY_j=$ $(-1)^{[i][j]}Y_jX_i$. This implies that the inner product \eqref{inprod} is symmetric, i.e. $\langle \bold{x},\bold{y}\rangle=\langle\bold{y},\bold{x}\rangle$.

The space of super polynomials is given by $\cP=\mR[x_1,\cdots,x_m]\otimes \Lambda_{2n}$. The space of homogeneous polynomials of degree $k$ is denoted by $\cP_k$ and consists of the elements $P\in\cP$ which satisfy $\mE P=kP$. More general superfunctions can for instance be defined as functions with values in the Grassmann algebra, $f:\Omega\subset\mR^m\to\Lambda_{2n}$. They can always be expanded as  $f=\sum_A {x\grave{}}_A f_A$, with ${x\grave{}}_A$ the basis of monomials for the Grassmann algebra and $f_A:\Omega\subset\mR^m\to\mC$. In general, for a function space $\cF$ corresponding to the $m$ bosonic variables (e.g. $\cS(\mR^{m})$, $L_p(\mR^m)$, $C^k(\Omega)$) we use the notation $\cF_{m|2n}=\cF\otimes\Lambda_{2n}$.

The null-solutions of the super Laplace operator are called harmonic superfunctions. The space of all spherical harmonics of degree $k$ is denoted by $\cH_k=\cP_k\cap\ker \nabla^2$. Equation \eqref{LB} implies they are eigenfunctions of the Laplace-Beltrami operator
\begin{equation}
\label{LBH}
\Delta_{LB}\cH_k=-k(k+M-2)\cH_k.
\end{equation}
In the purely bosonic case we denote $\cH_{k}$ by $\cH_{k}^{b}$, in the purely fermionic case by $\cH_{k}^{f}$. We have the following decomposition (see \cite{DBS5}).

\begin{lemma}[Fischer decomposition]
If $M \not \in -2 \mN$, the space $\cP$ decomposes as $\cP =\bigoplus_{k=0}^{\infty}  \bigoplus_{j=0}^{\infty} R^{2j}\cH_k$. If $m=0$, then the decomposition is given by $\Lambda_{2n} = \bigoplus_{k=0}^{n} \bigoplus_{j=0}^{n-k} \theta^{2j} \cH^f_k$.
\label{scalFischer}
\end{lemma}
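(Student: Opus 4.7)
The plan is to prove by induction on $k$ the refined graded statement
\begin{equation*}
\cP_k \;=\; \bigoplus_{j=0}^{\lfloor k/2\rfloor} R^{2j}\,\cH_{k-2j},
\end{equation*}
from which the decomposition of $\cP$ follows by summing over $k$. The base cases $k=0,1$ are immediate because $\nabla^2$ then lands in the zero space, so $\cP_k=\cH_k$.

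The ladder identity that drives the induction is
\begin{equation*}
\nabla^2(R^{2j}h) \;=\; 4j\,(2k+2j+M-2)\,R^{2(j-1)}h, \qquad h\in\cH_k,\ j\ge 1,
\end{equation*}
which I would derive by induction on $j$ from \eqref{commRD} (equivalently $[\nabla^2,R^2]=8\mE+4M$), using that $\mE$ acts as the scalar $2(j-1)+k$ on $R^{2(j-1)}h$.

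Assuming the decomposition up through degree $k-2$, consider the composition $T:=\nabla^2\circ R^2$ as an endomorphism of $\cP_{k-2}$. The ladder identity together with the inductive expansion of $\cP_{k-2}$ shows that $T$ acts on each piece $R^{2j}\cH_{k-2-2j}$ as multiplication by $4(j+1)\bigl(2(k-j-2)+M\bigr)$. Under the hypothesis $M\notin -2\mN$, every one of these scalars is nonzero for $0\le j\le\lfloor(k-2)/2\rfloor$, so $T$ is a linear automorphism of $\cP_{k-2}$. Consequently $R^2:\cP_{k-2}\to\cP_k$ is injective, $\nabla^2:\cP_k\to\cP_{k-2}$ is surjective, and $R^2\cP_{k-2}\cap\cH_k=\{0\}$ (apply $\nabla^2$ and use invertibility of $T$). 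Rank-nullity then forces $\dim\cH_k+\dim(R^2\cP_{k-2})=\dim\cP_k$, so $\cP_k=\cH_k\oplus R^2\cP_{k-2}$. Iterating produces the full grading.

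For the purely fermionic case $m=0$ the hypothesis $M\notin -2\mN$ fails ($M=-2n$), and the ladder scalar $4(j+1)\bigl(2(k-j-2)+M\bigr)$ vanishes exactly when $k+j=n+1$. I would combine this observation with the finite dimensionality of $\Lambda_{2n}$ and a degree-by-degree rank-nullity argument to verify that the decomposition truncates precisely at $j=n-k$. This fermionic step is the main obstacle: the algebraic engine of the generic proof degenerates at the excluded values of $M$, and one must instead exploit the finiteness $\dim\Lambda_{2n}=2^{2n}$ and the nilpotency of $\theta^2$ to pin down the correct range of summation and match the stated bound $j\le n-k$.
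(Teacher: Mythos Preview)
The paper does not prove this lemma at all: it is stated as a known result with the attribution ``see \cite{DBS5}''. There is therefore no in-paper proof to compare against.

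Your argument for the generic case $M\notin -2\mN$ is the standard one and is correct. The ladder identity, the observation that $\nabla^2\circ R^2$ acts diagonally on the inductive decomposition of $\cP_{k-2}$ with eigenvalues $4(j+1)\bigl(2(k-j-2)+M\bigr)$, and the use of rank--nullity to split $\cP_k=\cH_k\oplus R^2\cP_{k-2}$ are all sound, and the nonvanishing of the eigenvalues is precisely equivalent to $M\notin -2\mN$.

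For the fermionic case $m=0$ you only give a sketch, and you correctly identify the obstacle: the ladder scalar degenerates. What you have not written down is the actual argument that the sum truncates exactly at $j=n-k$. The cleanest way is to note that $\theta^{2(n-k+1)}\cH_k^f=0$ because $\theta^{2j}\cH_k^f$ has fermionic degree $2j+k$ in each monomial component and the Grassmann algebra vanishes beyond degree $2n$ (more precisely, $\theta^{2(n+1)}=0$ and $\theta^{2(n-k)}H_k^f\neq 0$ for $H_k^f\neq 0$ requires a short computation). Then one checks directly that the ladder map $\nabla_f^2\circ\theta^2$ on $\cP_{k-2}^f$ has nonzero eigenvalues $4(j+1)(2(k-j-2)-2n)$ for $0\le j\le n-k$, which is exactly the range appearing in the truncated sum; the vanishing at $j=n-k+1$ is absorbed by the nilpotency. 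With that, the same rank--nullity argument goes through degree by degree. Your sketch points in the right direction but stops short of carrying this out.
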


In \cite{DBE1} the vector space $\cH_k$ was decomposed into irreducible pieces under the action of $SO(m)\times Sp(2n)$. 
\begin{theorem}[Decomposition of $\cH_k$]
Under the action of $SO(m) \times Sp(2n)$ the space $\cH_k$ decomposes as
\label{decompintoirreps}
\label{polythm}
\[
\cH_{k} = \bigoplus_{j=0}^{\min(n, k)} \bigoplus_{l=0}^{\min(n-j,\lfloor \frac{k-j}{2} \rfloor)} f_{l,k-2l-j,j} \cH^b_{k-2l-j} \otimes \cH^f_{j}.
\]
The polynomials $f_{l,k-2l-j,j}$ are given by the formula 
\[
 f_{k,p,q}=\sum_{s=0}^ka_sr^{2k-2s}\theta^{2s} \;\;\;,\;\;\;\;\; a_s=\binom{k}{s}\frac{(n-q-s)!}{\Gamma (\frac{m}{2}+p+k-s)}\frac{\Gamma(\frac{m}{2}+p+k)}{(n-q-k)!}
\]
for $0\le q \le n$ and $0\le k\le n-q$. They are, up to normalization, the unique polynomials of degree $2k$ such that $f_{k,p,q} \cH_p^b \otimes \cH_q^f \neq 0$ and $\Delta (f_{k,p,q} \cH_p^b \otimes \cH_q^f) = 0$.
\end{theorem}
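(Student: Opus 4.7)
The strategy I would follow is to decompose $\cP_k$ into isotypic components for $SO(m) \times Sp(2n)$ and then locate $\ker \nabla^2$ inside each component. Iterating the classical bosonic Fischer decomposition $\cP^b_p = \bigoplus_{l\ge 0} r^{2l} \cH^b_{p-2l}$ with the fermionic Fischer decomposition of $\Lambda_{2n}$ stated in \lemref{scalFischer}, one obtains
\[
\cP_k \;=\; \bigoplus_{p,q,l,s}\; r^{2l}\theta^{2s}\,\bigl(\cH^b_p \otimes \cH^f_q\bigr),
\]
where the sum is over $p,q,l,s\ge 0$ with $p+q+2l+2s=k$, $q\le n$ and $s\le n-q$. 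Since $\cH^b_p$ is irreducible under $SO(m)$ and $\cH^f_q$ is irreducible under $Sp(2n)$, each $\cH^b_p \otimes \cH^f_q$ is an irreducible $SO(m)\times Sp(2n)$-module, and for fixed $(p,q)$ the subspace $V_{p,q,k}:=\bigoplus_{l+s=T,\ s\le n-q} r^{2l}\theta^{2s}(\cH^b_p \otimes \cH^f_q)$, with $T=(k-p-q)/2$, is the corresponding isotypic component of $\cP_k$. Because $\nabla^2$ is $SO(m)\times Sp(2n)$-equivariant, it preserves this decomposition, hence $\cH_k = \bigoplus_{p,q}(V_{p,q,k}\cap\ker\nabla^2)$.

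I would then search inside $V_{p,q,k}$ for elements of the form $f(r^2,\theta^2)\,H^b H^f$ with $H^b\in\cH^b_p$, $H^f\in\cH^f_q$ and $f=\sum_s a_s r^{2(T-s)}\theta^{2s}$. The identities
\[
\nabla^2\bigl(r^{2l} H^b\bigr) \;=\; 2l(2l+2p+m-2)\,r^{2l-2} H^b, \qquad \nabla^2\bigl(\theta^{2s} H^f\bigr) \;=\; 4s(s+q-n-1)\,\theta^{2s-2} H^f,
\]
which are specializations of the general formula $\nabla^2(R^{2j}\cH_k)=2j(2j+2k+M-2)R^{2j-2}\cH_k$ to the purely bosonic and purely fermionic cases, reduce the equation $\nabla^2(f H^b H^f)=0$ to the two-term recurrence
\[
a_{u+1} \;=\; \frac{(T-u)\bigl(2(T-u)+2p+m-2\bigr)}{2(u+1)(n-q-u)}\, a_u,\qquad u=0,\dots,T-1.
\]
Iterating with the normalization $a_0=(n-q)!/(n-q-T)!$ would reproduce the closed-form expression for $f_{T,p,q}$ written in the theorem, and the one-dimensionality of the solution space for this recurrence is exactly the asserted uniqueness up to a scalar.

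The final step, which I expect to need the most care, is to verify when the construction actually yields a non-zero element of $V_{p,q,k}$. When $T\le n-q$ all denominators $n-q-u$ for $u\le T-1$ are non-zero, all $a_s$ are non-zero, and $f_{T,p,q}\,\cH^b_p\otimes\cH^f_q$ is a non-zero submodule of $\cH_k$ isomorphic to $\cH^b_p\otimes\cH^f_q$. When $T>n-q$ an extra equation appears from the coefficient of $r^{2(T-1-(n-q))}\theta^{2(n-q)}$: its would-be partner $a_{n-q+1}\theta^{2(n-q+1)}\cH^f_q$ is absent because $\theta^{2(n-q+1)}\cH^f_q=0$, and the remaining equation forces $a_{n-q}=0$, whence $a_0=\cdots=a_{n-q}=0$ by running the recurrence backwards. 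Thus $V_{p,q,k}\cap\ker\nabla^2$ is non-zero exactly for the pairs $(p,q)=(k-2l-j,j)$ parametrized in the theorem, giving the stated decomposition. The main delicacy is this boundary case $T>n-q$, where the denominator $n-q-u$ would naively suggest a pole in the recurrence; the resolution is that the fermionic truncation $\theta^{2(n-q+1)}\cH^f_q=0$ supplies precisely the missing equation that eliminates the would-be extra solution.
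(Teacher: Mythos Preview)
The paper does not actually prove this theorem; it is quoted from \cite{DBE1} in the preliminaries section. So there is no ``paper's own proof'' to compare against. Your argument, however, is a correct and self-contained proof, and is almost certainly the argument given in \cite{DBE1}: decompose $\cP_k$ via the bosonic and fermionic Fischer decompositions into $SO(m)\times Sp(2n)$-isotypic pieces, use the equivariance of $\nabla^2$ to reduce to each isotypic block, and solve the resulting two-term recurrence for the coefficients of $f_{T,p,q}$. Your recurrence and its closed-form solution check against the formula in the statement.

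One small clarification in your boundary analysis for $T>n-q$: you attribute the extra equation at $u=n-q$ to the vanishing $\theta^{2(n-q+1)}\cH^f_q=0$. That is true (and follows from the fermionic Fischer decomposition by the degree argument you sketch), but there is an even simpler reason the recurrence terminates there: the fermionic coefficient $4(u+1)(u+q-n)$ itself vanishes at $u=n-q$, so the equation reads $a_{n-q}\cdot 2(T-n+q)(2(T-n+q)+2p+m-2)=0$, forcing $a_{n-q}=0$ directly. The backward run of the recurrence then kills all coefficients, exactly as you say. Either way the conclusion is the same, and your proof is complete.
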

In particular $f_{0,p,q}=1$ holds.

The integration used on $\Lambda_{2n}$ is the so-called Berezin integral (see \cite{MR732126,DBS5}), defined by
\begin{equation}
\label{Berezin}
\int_{B} =\pi^{-n} \partial_{{x \grave{}}_{2n}} \ldots \partial_{{x \grave{}}_{1}} = \frac{ \pi^{-n}}{4^n n!} \nabla_f^{2n}.
\end{equation}
On a general superspace, the integration is then defined by
\begin{equation}
\label{superint}
\int_{\mR^{m | 2n}} = \int_{\mR^m} dV(\ux)\int_B=\int_B \int_{\mR^m} dV(\ux),
\end{equation}
with $dV(\ux)$ the usual Lebesgue measure in $\mR^{m}$.

The supersphere is algebraically defined by the relation $R^2=1$, for $m\not=0$. The integration over the supersphere was introduced in \cite{DBS5} for polynomials and generalized to a broader class of functions in \cite{Cauchy}. The integration for polynomials is uniquely defined (see \cite{DBE1, Cauchy}) by the following properties.

\begin{theorem}
\label{Pizzettithm}
For $m\not=0$, the only (up to a multiplicative constant) linear functional $T: \cP \rightarrow \mR$ satisfying the following properties for all $f(\bold{x}) \in \cP$:
\begin{itemize}
\item $T(R^2 f(\bold{x})) = T(f(\bold{x}))$
\item $T(f(A \cdot \bold{x})) = T(f(\bold{x}))$, \quad $\forall A \in SO(m)\times Sp(2n)$
\item $k \neq l \quad \Longrightarrow \quad T(\cH_k \cH_l) = 0$
\end{itemize}
is given by the Pizzetti integral
\begin{equation}
\label{Pizzetti}
\int_{SS} P  =  \sum_{k=0}^{\infty}  \frac{2 \pi^{M/2}}{2^{2k} k!\Gamma(k+M/2)} (\nabla^{2k} P )(0),\qquad P\in\cP.
\end{equation}
\end{theorem}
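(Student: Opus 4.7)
The plan is to use the Fischer decomposition to reduce the action of any such $T$ on a general polynomial to a single scalar invariant, and then to verify that the Pizzetti formula \eqref{Pizzetti} has the same form.

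First, I would invoke \lemref{scalFischer} to write any $P\in\cP$ uniquely as $P=\sum_{j,k\ge 0}R^{2j}H_{k,j}$ with $H_{k,j}\in\cH_k$ and only finitely many nonzero terms. Iterated use of the first hypothesis gives $T(R^{2j}H_{k,j})=T(H_{k,j})$, while applying the third hypothesis with the constant $1\in\cH_0$ yields $T(H)=T(H\cdot 1)=0$ for every $H\in\cH_k$ with $k>0$. Combining,
\[
T(P)=T(1)\,\sum_{j\ge 0}H_{0,j},
\]
so $T$ is completely determined by the scalar $T(1)$ and by the canonical map $P\mapsto\sum_j H_{0,j}$ extracting the degree-zero Fischer components.

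Next, I would show that $\int_{SS}P$ takes exactly the same shape. For each summand $R^{2j}H_{k',j}$, homogeneity forces $(\nabla^{2k}(R^{2j}H_{k',j}))(0)=0$ unless $2j+k'=2k$, and the $\mathfrak{sl}_2$ module generated from the harmonic $H_{k',j}$ by $R^{2}$ satisfies $(\nabla^{2})^{j+1}(R^{2j}H_{k',j})=0$ (because $\nabla^{2}H_{k',j}=0$), so in fact only the terms with $k=j$ and $k'=0$ contribute. A short induction from the commutation relation \eqref{commRD} gives the standard identity $\nabla^{2}(R^{2j})=2j(M+2j-2)R^{2j-2}$, hence $\nabla^{2k}(R^{2k})(0)=2^{2k}\,k!\,\Gamma(k+M/2)/\Gamma(M/2)$. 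Substituting into \eqref{Pizzetti} the constants telescope and
\[
\int_{SS}P=\frac{2\pi^{M/2}}{\Gamma(M/2)}\sum_{k\ge 0}H_{0,k},
\]
so comparing with the previous display establishes that $T$ is a scalar multiple of the Pizzetti functional.

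The main technical input is the $\mathfrak{sl}_2$ identity for $\nabla^{2k}(R^{2k})(0)$; everything else is bookkeeping around the Fischer decomposition. I note that the second hypothesis, $SO(m)\times Sp(2n)$-invariance, is not actually used in this uniqueness argument, since the Fischer decomposition is canonical and the scalar $\sum_j H_{0,j}$ is automatically orthosymplectically invariant; it is presumably listed because it is needed to guarantee that the Pizzetti formula really defines such a functional. The only foreseeable obstacle is the exceptional regime $M\in-2\mN$, in which \lemref{scalFischer} is not available in the stated form; there one would have to argue by a polynomial continuity in $M$ or to supply a suitable replacement decomposition so that the same reduction to degree-zero Fischer components still goes through.
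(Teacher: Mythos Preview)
The paper does not prove this theorem; it is quoted in the preliminaries with a reference to \cite{DBE1, Cauchy}. So there is no in-paper proof to compare against, and your proposal must be judged on its own merits.

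For the generic regime $M\notin -2\mN$ your argument is correct and pleasantly economical: the Fischer decomposition of \lemref{scalFischer} together with properties~1 and~3 (the latter applied with the constant $1\in\cH_0$) already pin $T$ down to the single scalar $T(1)$, and your $\mathfrak{sl}_2$ computation of $\nabla^{2k}(R^{2k})(0)$ is right. Your observation that the $SO(m)\times Sp(2n)$-invariance is redundant for uniqueness in this regime is valid.

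The genuine gap is exactly the one you flag at the end, and it is not as easily dispatched as you suggest. The hypothesis of the theorem is only $m\neq 0$, so the exceptional values $M\in -2\mN$ are included. Your proposed patch of ``polynomial continuity in $M$'' is not available: $m$ and $n$ are fixed integers, and there is no continuous family of functionals on a fixed $\cP$ in which to take a limit. In this regime \lemref{scalFischer} genuinely fails (there exist nonzero harmonics of the form $R^{2j}H$), so the reduction to degree-zero Fischer components collapses. The proofs in the cited references handle all $m\neq 0$ by working instead with the $SO(m)\times Sp(2n)$-isotypic decomposition of $\cH_k$ from Theorem~\ref{decompintoirreps}, which is available even when the full Fischer decomposition is not; this is precisely where the second hypothesis earns its keep. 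If you want a self-contained argument covering the exceptional super-dimensions, you will need to replace the appeal to \lemref{scalFischer} by that finer decomposition.
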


The orthogonality condition on the supersphere can be made even stronger.

\begin{theorem}
\label{integorth}
One has that 
\[
f_{i,p,q} \cH^b_{p} \otimes \cH^f_{q} \quad  \bot \quad f_{j,r,s} \cH^b_{r} \otimes \cH^f_{s}, \quad \mbox{i.e.} \int_{SS} \left(f_{i,p,q} \cH^b_{p} \otimes \cH^f_{q}\right)  \left(f_{j,r,s} \cH^b_{r} \otimes \cH^f_{s}\right) =0
\]
with respect to the Pizzetti integral if and only if $(i,p,q) \neq (j,r,s)$.
\end{theorem}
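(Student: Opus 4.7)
The plan is to split the statement into three regimes according to the relation between $(i,p,q)$ and $(j,r,s)$, and handle each by a different structural tool: the grading property of the Pizzetti integral, equivariance combined with Schur's lemma, and a nondegeneracy argument via the explicit Pizzetti formula.

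By Theorem~\ref{polythm}, the subspace $f_{i,p,q}\cH^b_p\otimes\cH^f_q$ sits inside $\cH_k$ with $k=p+q+2i$, and the second subspace similarly sits inside $\cH_\ell$ with $\ell=r+s+2j$. If $k\neq\ell$, orthogonality is immediate from the third defining property of the Pizzetti integral in Theorem~\ref{Pizzettithm}, namely $T(\cH_k\cH_\ell)=0$.

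Suppose now $k=\ell$. Since $f_{i,p,q}$ and $f_{j,r,s}$ are polynomials in the $SO(m)\times Sp(2n)$-invariants $r^2$ and $\theta^2$, multiplication by them defines $SO(m)\times Sp(2n)$-equivariant embeddings of $\cH^b_p\otimes\cH^f_q$ and $\cH^b_r\otimes\cH^f_s$ into $\cH_k$. Since the Pizzetti integral is $SO(m)\times Sp(2n)$-invariant, the bilinear form it induces on $\cH_k$ is equivariant, and its pullback under the two embeddings is an equivariant pairing. If $(p,q)\neq(r,s)$, then $\cH^b_p\otimes\cH^f_q$ and $\cH^b_r\otimes\cH^f_s$ are irreducibles of $SO(m)\times Sp(2n)$ with distinct highest weights, hence non-isomorphic, and Schur's lemma annihilates the pairing. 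The only remaining subcase, $(p,q)=(r,s)$ together with $k=\ell$, forces $i=j$, contradicting $(i,p,q)\neq(j,r,s)$. This proves the ``if'' direction.

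For the converse, when $(i,p,q)=(j,r,s)$ I show that the Pizzetti pairing on $f_{i,p,q}\cH^b_p\otimes\cH^f_q$ is not identically zero. By Schur applied to the irreducible $\cH^b_p\otimes\cH^f_q$, any invariant bilinear form is a scalar multiple of a fixed nondegenerate one, so it suffices to exhibit one pair $P=f_{i,p,q}H^b_pH^f_q$, $Q=f_{i,p,q}{H'}^b_p{H'}^f_q$ with $\int_{SS}PQ\neq 0$. Since $PQ\in\cP_{2k}$, the Pizzetti formula \eqref{Pizzetti} collapses to the single term $c_k(\nabla^{2k}(PQ))(0)$, reducing the problem to a Fischer-type non-vanishing. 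Choosing ${H'}^b_p,{H'}^f_q$ dual to $H^b_p,H^f_q$ under the purely bosonic and purely fermionic Fischer pairings and iterating the $\mathfrak{sl}_2$-relation \eqref{commRD} on $f_{i,p,q}^2H^b_p{H'}^b_pH^f_q{H'}^f_q$ yields an explicit polynomial in $i,p,q,M$ times a nonzero product of the two Fischer norms. The main obstacle is carrying out this last nonvanishing computation cleanly; treating $f_{i,p,q}$ as a polynomial in $R^2$ modulo harmonic corrections, rather than via the explicit binomial sum of Theorem~\ref{polythm}, should keep the bookkeeping tractable.
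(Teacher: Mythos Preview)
The paper does not actually prove Theorem~\ref{integorth}: it is stated in the preliminaries (Section~2) as a known result, taken from \cite{DBE1}. So there is no ``paper's own proof'' to compare against. What the paper does contain that is relevant is Lemma~\ref{SSin4}, quoted from \cite{CDBS2}, which gives the explicit nonzero value
\[
\int_{SS}f_{k,p,q}H_p^{b(l_1)}H_q^{f(t_1)}\,f_{k,p,q}H_p^{b(l_2)}\widetilde{H}_q^{f(t_2)}
=(-1)^k a_{k,p,q}b_{k,p,q}\,\delta_{l_1l_2}\delta_{t_1t_2},
\]
with $a_{k,p,q},b_{k,p,q}$ manifestly nonzero (since $M>0$ and $0\le k\le n-q$). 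Since $\widetilde{H}_q^{f(t)}\in\cH_q^f$, this already settles the ``only if'' direction.

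Your ``if'' direction is correct and clean: the degree mismatch case is handled by the third axiom in Theorem~\ref{Pizzettithm}, and in the equal-degree case the $SO(m)\times Sp(2n)$-invariance of the Pizzetti integral makes the induced pairing invariant, so Schur kills it between non-isomorphic irreducibles. You implicitly use that $\cH^b_p\otimes\cH^f_q$ is self-dual (needed to pass from ``non-isomorphic'' to ``no invariant pairing''); this is true but worth one sentence. You should also check that $\cH^b_p$ is genuinely irreducible (or at least that different $p$ share no constituents) for the relevant range of $m$; for $m\ge 3$ this is standard, for $m=2$ it is still fine since $\cH^b_p$ consists of the characters $e^{\pm ip\vartheta}$.

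Your ``only if'' direction, however, is not a proof but a plan, and you flag this yourself. The proposed route---evaluate $(\nabla^{2k}(PQ))(0)$ by iterating \eqref{commRD} on $f_{i,p,q}^2H^b_p{H'}^b_pH^f_q{H'}^f_q$, ``treating $f_{i,p,q}$ as a polynomial in $R^2$ modulo harmonic corrections''---is vague: $f_{i,p,q}$ is a specific polynomial in $r^2$ and $\theta^2$ separately (Theorem~\ref{polythm}), not in $R^2$, and the ``harmonic corrections'' are precisely where the difficulty lies. Rather than attempt this computation from scratch, you can simply invoke Lemma~\ref{SSin4}: it exhibits an explicit pair in $f_{i,p,q}\cH^b_p\otimes\cH^f_q$ with nonzero Pizzetti pairing, which is exactly what you need. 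If you want a self-contained argument, note that the formula \eqref{bkpq} for $b_{k,p,q}$ is the core computation, and it is done in \cite{CDBS2} by reducing to a one-variable Laguerre-type integral via \eqref{superint2}; reproducing that is far more tractable than the raw Fischer-pairing expansion you sketch.
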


This supersphere integration is a dimensional continuation of the bosonic Pizzetti formula. The Berezin integral can be connected with the supersphere integration, again by dimensional continuation, see \cite{Cauchy}. For $M>0$ and $f$ a function in $L_1(\mR^m)_{m|2n}\cap C^n(\mR^m)_{m|2n}$, the following relation holds
\begin{equation}
\label{superint}
\int_{\mR^{m|2n}}f=\int_{0}^\infty dv\, v^{M-1}\int_{SS,x}f(vx).
\end{equation}
In particular for $P_k$ a polynomial of degree $k$, one has
\begin{equation}
\label{superint2}
\int_{\mR^{m|2n}}P_k\exp (-R^2) =\frac{1}{2}\Gamma (\frac{k+M}{2})\int_{SS} P_k.
\end{equation}

We repeat the notions of symmetric, essentially self-adjoint and self-adjoint operators on Hilbert spaces (see \cite{MR0751959}).
\begin{definition}
Let $\cO$ be a densely defined operator on a Hilbert space $\cV$ with adjoint $\mathcal{O}^\dagger$. $\mathcal{O}$ is called symmetric (or hermitian) if for the domains $\mD(\mathcal{O})\subset\mD(\mathcal{O}^\dagger)$ and if $\mathcal{O}\phi=\mathcal{O}^\dagger\phi$ for all $\phi\in\mD(\mathcal{O})$. $\mathcal{O}$ is self-adjoint if it is symmetric and if $\mD(\mathcal{O})=\mD(\mathcal{O}^\dagger)$. $\mathcal{O}$ is essentially self-adjoint if it is symmetric and its closure $\overline{\mathcal{O}}$ is self-adjoint.
\end{definition}
There is an important criterion for essential self-adjointness (\cite{MR1349825}).
\begin{lemma}
\label{critess}
If for a symmetric operator $\mathcal{O}$ on a Hilbert space $\cV$ with domain $\mD(\mathcal{O})$ there exists a complete orthonormal set $f_j$ for $\cV$ in $\mD(\mathcal{O})$ for which $\mathcal{O}f_j=\lambda_j f_j$, then $\mathcal{O}$ is essentially self-adjoint.
\end{lemma}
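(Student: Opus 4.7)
The plan is to use the classical criterion that a densely defined symmetric operator $\mathcal{O}$ on a Hilbert space $\cV$ is essentially self-adjoint if and only if the ranges $\mathrm{Ran}(\mathcal{O}+i\lambda)$ and $\mathrm{Ran}(\mathcal{O}-i\lambda)$ are dense in $\cV$ for some (equivalently every) $\lambda>0$. This is a standard consequence of the von Neumann deficiency index theory, and I would cite it from the reference \cite{MR0751959} rather than reprove it.

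First I would record that the eigenvalues $\lambda_j$ must be real. Indeed, since $\mathcal{O}$ is symmetric and $f_j\in \mD(\mathcal{O})$,
\begin{equation*}
\lambda_j\|f_j\|^2=\langle \mathcal{O}f_j,f_j\rangle=\langle f_j,\mathcal{O}f_j\rangle=\overline{\lambda_j}\|f_j\|^2,
\end{equation*}
so $\lambda_j\in\mR$ and in particular $\lambda_j\pm i\neq 0$.

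Next I would show that every $f_j$ lies in the ranges $\mathrm{Ran}(\mathcal{O}\pm i)$. Because $f_j\in\mD(\mathcal{O})$ with $\mathcal{O}f_j=\lambda_j f_j$, one has $(\mathcal{O}\pm i)f_j=(\lambda_j\pm i)f_j$, hence
\begin{equation*}
f_j=(\lambda_j\pm i)^{-1}(\mathcal{O}\pm i)f_j\in\mathrm{Ran}(\mathcal{O}\pm i).
\end{equation*}
Since $\{f_j\}$ is a complete orthonormal set, its linear span is dense in $\cV$, and therefore $\mathrm{Ran}(\mathcal{O}+i)$ and $\mathrm{Ran}(\mathcal{O}-i)$ are dense. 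By the criterion quoted above, $\mathcal{O}$ is essentially self-adjoint.

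The argument is entirely routine, and I do not anticipate any real obstacle: the only conceptual point is that the completeness of the orthonormal eigenbasis guarantees denseness of the ranges $\mathrm{Ran}(\mathcal{O}\pm i)$, which in turn is exactly what the deficiency-index criterion demands. One could equivalently phrase the proof in terms of the adjoint: if $g\in\mD(\mathcal{O}^\dagger)$ satisfies $\mathcal{O}^\dagger g=\pm i g$, then $\langle g,f_j\rangle(\lambda_j\mp i)=\langle g,(\mathcal{O}\mp i)f_j\rangle=\langle(\mathcal{O}^\dagger\pm i)g,f_j\rangle\mp 2i\langle g,f_j\rangle$, and after unwinding one concludes $\langle g,f_j\rangle=0$ for all $j$, whence $g=0$ by completeness; but the range formulation above is cleaner.
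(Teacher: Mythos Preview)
Your proof is correct and is the standard argument via the deficiency-index criterion. Note, however, that the paper does not prove this lemma at all: it simply quotes it as a known criterion from \cite{MR1349825}, so there is no ``paper's own proof'' to compare against. Your write-up would serve perfectly well as a self-contained justification in place of that citation.
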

For finite dimensional vector spaces the notions of symmetric and self-adjoint operator coincide.

The orthosymplectic Lie superalgebra $\mathfrak{osp}(m|2n)$ is generated by the following differential operators (see \cite{CDBS3, MR2395482})
\begin{equation}
\label{ospgen}
L_{ij}=X_i\partial_{X^j}-(-1)^{[i][j]}X_j\partial_{X^i},
\end{equation}
for $1\le i\le j\le m+2n$.

Schr\"odinger equations in superspace are equations of the type
\[
-\frac{\nabla^2}{2} \psi + V(\bold{x}) \psi = E \psi
\]
with wave function $\psi \in L_{2}(\mR^{m})_{m|2n}$, the potential $V$ a superfunction and the energy $E$ a complex number. Several authors have studied such equations. The (purely fermionic) harmonic oscillator was studied in \cite{MR830398}. Anharmonic extensions were studied in \cite{MR967935, MR1019514, MR1032208}. In \cite{MR2395482}, Zhang studied the hydrogen atom in superspace (or quantum Kepler problem) using Lie superalgebra techniques. Also the delta potential has been studied, see \cite{DBS8}. 

In this paper we will consider orthosymplectically invariant Schr\"odinger equations, which means $L_{ij}V=0$ for all $i,j$. Theorem 3 in \cite{CDBS3} implies that such potentials are of the form $V(R^{2})$. Here, functions of the type $V(R^2)$ are defined as follows:
\begin{definition}
\label{sphsymm}
For a function $h:\mR^+\to\mR$, $h\in C^n(\mR^+)$, the superfunction with notation $h(R^2)$ is defined as $h(R^2)=\sum_{j=0}^n\frac{\theta^{2j}}{j!}h^{(j)}(r^2)$ and is an element of $C^{0}(\mR^{m})_{m|2n}$. 
\end{definition}

When $h$ is sufficiently smooth one has
\begin{equation}
\label{laplradharm}
\nabla^2 h(R^2)H_k=4R^2h^{(2)}(R^2)H_k+(4k+2M)h^{(1)}(R^2)H_k, \quad H_{k} \in \cH_{k}
\end{equation}
and
\begin{equation}
\label{LBR}
\left[\Delta_{LB},h(R^2)\right]=0.
\end{equation}

The simplest orthosymplectically invariant quantum problem is the harmonic oscillator described by the Hamiltonian
\begin{equation}
\label{hamiltoniaan}
H= \frac{1}{2}(R^2-\nabla^2) = \sum_{i=1}^m a_i^+ a_i^- +\sum_{i=1}^{2n} b_i^+ b_i^- + \frac{M}{2}
\end{equation}
with
\[
\begin{array}{llll}
a^+_i = \frac{\sqrt{2}}{2}(x_i - \pI ) &\quad&a^-_i = \frac{\sqrt{2}}{2}(x_i+\pI)\\
\vspace{-1mm}\\
b^+_{2i} = \frac{1}{2}({x \grave{}}_{2i} + 2 \partial_{{x \grave{}}_{2i-1}})&\quad&b^-_{2i} =\frac{1}{2} ({x \grave{}}_{2i-1} + 2 \partial_{{x \grave{}}_{2i}})\\
\vspace{-1mm}\\
b^+_{2i-1} = \frac{1}{2}({x \grave{}}_{2i-1} - 2 \partial_{{x \grave{}}_{2i}})&\quad&b^-_{2i-1} = \frac{1}{2}(-{x \grave{}}_{2i} + 2 \partial_{{x \grave{}}_{2i-1}})\\
\vspace{-1mm}\\
\end{array}
\]
the bosonic and fermionic creation and annihilation operators. When $M\not\in-2\mN$, the so-called spherical Hermite functions constitute a basis of eigenvectors of $H$ for $\cP\exp(-R^2/2)$, see the subsequent lemma \ref{eigCH}. In the purely bosonic case they give an alternative for the basis of cartesian Hermite functions, based on the $O(m)$-invariance of the Hamiltonian of the harmonic oscillator. In superspace they are defined in \cite{DBS3}. For an overview of the different types of Hermite functions and their properties, see \cite{CDBS2}. When $M\not\in-2\mN$, they can be expressed as
\begin{equation}
\label{CH}
\phi_{j,k,l}(\bold{x})=\sqrt{\frac{2j!}{\Gamma(j+k+\frac{M}{2})}}L_j^{\frac{M}{2}+k-1}(R^2)H_k^{(l)}(\bold{x})\exp(-R^2/2),
\end{equation}
for $j,k\in\mN$ and $l=1,\cdots,\dim\cH_k$, with $H_k^{(l)}$ a basis of $\cH_k$ and $L_j^\alpha$ the generalized Laguerre polynomials which  are defined as
\[
L_j^{\alpha}(t)=\sum_{k=0}^{j}\frac{(k+\alpha+1)_{j-k}}{k!(j-k)!}(-t)^k,
\]
with $(a)_j=(a)(a+1)\cdots(a+j-1)$ the Pochhammer symbol. We denote $\zeta_{j,k}^M=\sqrt{\frac{\Gamma(j+k+\frac{M}{2})}{2j!}}$. 

When $M>0$, the case with $M$ bosonic variables is a special case. We use the symbol $\uy$ for the vector variable in $\mR^M$. The definition of Hermite functions then yields
\begin{equation}
\label{CHbosM}
\phi_{j,k,l}(\uy)=\sqrt{\frac{2j!}{\Gamma(j+k+\frac{M}{2})}}L_j^{\frac{M}{2}+k-1}(r_{\uy}^2)H_{M,k}^{(l)}(\uy)\exp(-r^2_{\uy}/2),
\end{equation}
with $H_{M,k}^{(l)}$ a bosonic spherical harmonic in $M$ dimensions. For this case the space of spherical harmonics of degree $k$ is denoted $\cH_{k,M}^b$. 

In the purely bosonic case with $m$ dimensions, we will denote the spherical Hermite functions by $\phi^b_{j,k,l}(\ux)$,
\begin{equation}
\label{CHbosm}
\phi^b_{j,k,l}(\ux)=\sqrt{\frac{2j!}{\Gamma(j+k+\frac{m}{2})}}L_j^{\frac{m}{2}+k-1}(r^2)H_k^{b(l)}(\ux)\exp(-r^2/2),
\end{equation}
with $H_k^{b(l)}(\ux)\in\cH_k^b$. The distinction between the functions \eqref{CHbosM} and \eqref{CHbosm} is clearly artificial in some sense since $m$ and $M$ are both variables. The distinction will however be useful in the sequel as we will consider undetermined but fixed dimensions $m$ and $2n$ (leading to a fixed $M=m-2n$).

Since we will only use the spherical Hermite functions in this article we will simply call them the Hermite functions. For the sequel the following relations will be necessary, for proofs see \cite{DBS3}.
\begin{lemma}
\label{eigCH}
Let $M>0$. Then the super Hermite functions satisfy
\begin{eqnarray*}
\left(\nabla^2+R^2-2\mE-M)\right)\phi_{j,k,l}&=&-4\sqrt{(j+1)(j+k+\frac{M}{2})}\phi_{j+1,k,l}\\
\left(\nabla^2+R^2+2\mE+M)\right)\phi_{j,k,l}&=&-4\sqrt{j(j+k+\frac{M}{2}-1)}\phi_{j-1,k,l}\\
\frac{1}{2}\left(R^2-\nabla^2\right)\phi_{j,k,l}&=&(2j+k+\frac{M}{2})\phi_{j,k,l}.
\end{eqnarray*}
\end{lemma}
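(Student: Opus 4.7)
The three identities all reduce to computations involving the Laguerre polynomial $L_j^\alpha$ with $\alpha=M/2+k-1$, so the plan is to substitute the explicit form \eqref{CH} and convert each operator identity into an ODE-type identity in the single variable $t=R^2$, then invoke standard Laguerre identities.

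First I would compute the action of the building-block operators on $h(R^2)H_k$ for a smooth $h$. Equation \eqref{laplradharm} already gives
\[
\nabla^2(h(R^2)H_k) = 4R^2 h''(R^2)H_k + (4k+2M)h'(R^2)H_k.
\]
I would complement this with the analogous Euler-operator identity, which follows from Leibniz and the homogeneity of $H_k$ and $R^2$:
\[
\mE(h(R^2)H_k) = \bigl(2R^2 h'(R^2) + k\,h(R^2)\bigr)H_k.
\]
Together with the trivial identity for $R^2\cdot h(R^2)H_k$, this shows that each of the three operators $\frac12(R^2-\nabla^2)$, $\nabla^2+R^2-2\mE-M$ and $\nabla^2+R^2+2\mE+M$ acts on $h(R^2)H_k$ as a second-order differential operator in $t=R^2$ applied to $h$, multiplied by $H_k$.

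Next I would plug in $h(t)=L_j^\alpha(t)\,e^{-t/2}$ with $\alpha=M/2+k-1$. A routine reorganisation using $h'=(L'-\tfrac12 L)e^{-t/2}$ and $h''=(L''-L'+\tfrac14 L)e^{-t/2}$ reduces each of the three combinations to the form $e^{-t/2}P[L_j^\alpha]$, where $P$ is a first/second-order operator with polynomial coefficients in $t$. At this point the Laguerre differential equation
\[
tL_j'' + (\alpha+1-t)L_j' + j L_j = 0
\]
eliminates the $L_j''$ term in each case. For the third (diagonal) identity the remaining expression collapses at once to $(2j+k+M/2)L_j^\alpha e^{-t/2}$, which is the desired eigenvalue equation. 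For the two shift identities I would then apply the derivative rule $tL_j^{\alpha\prime}=jL_j^\alpha-(j+\alpha)L_{j-1}^\alpha$ and the three-term recurrence $(j+1)L_{j+1}^\alpha=(2j+\alpha+1-t)L_j^\alpha-(j+\alpha)L_{j-1}^\alpha$; the raising operator collapses to a multiple of $L_{j+1}^\alpha$ and the lowering operator to a multiple of $L_{j-1}^\alpha$.

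Finally I would track the normalisation. The prefactor $\zeta_{j,k}^M=\sqrt{\Gamma(j+k+M/2)/(2j!)}$ gives
\[
\frac{\zeta_{j+1,k}^M}{\zeta_{j,k}^M}=\sqrt{\tfrac{j+k+M/2}{j+1}},\qquad \frac{\zeta_{j-1,k}^M}{\zeta_{j,k}^M}=\sqrt{\tfrac{j}{j+k+M/2-1}},
\]
so that the coefficient $-4(j+1)$ arising in front of $L_{j+1}^\alpha$ becomes $-4\sqrt{(j+1)(j+k+M/2)}$, and the coefficient $-4(j+k+M/2-1)$ in front of $L_{j-1}^\alpha$ becomes $-4\sqrt{j(j+k+M/2-1)}$, matching the statement. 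The only delicate step is the bookkeeping in the shift identities, where several terms involving $L_j^\alpha$ and $tL_j^{\alpha\prime}$ must cancel to leave a single $L_{j\pm1}^\alpha$; this is entirely algebraic but requires careful use of the two Laguerre recurrences in the right order.
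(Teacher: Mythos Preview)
Your approach is correct. The paper itself does not prove this lemma; it merely states the relations and refers to \cite{DBS3} for proofs. Your direct computation---reducing everything to the action of second-order operators in $t=R^2$ on $h(t)=L_j^\alpha(t)e^{-t/2}$, then invoking the Laguerre differential equation, the derivative rule $tL_j^{\alpha\prime}=jL_j^\alpha-(j+\alpha)L_{j-1}^\alpha$, and the three-term recurrence---is exactly the natural verification, and the normalisation bookkeeping via $\zeta_{j,k}^M$ is handled correctly. The cited reference \cite{DBS3} works in a Clifford-algebraic framework, but the underlying computation for the scalar relations is the same Laguerre calculus you carry out here.
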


The fermionic Gaussian function is given by the finite Taylor expansion $\exp(-\theta^2/2) =\sum_{j=0}^n(-1)^j\frac{\theta^{2j}}{2^jj!}$. If we consider a basis $H_q^{f(t)}$ of $\cH^f_q$, then the functions
\begin{equation}
\phi_{s,q,t}^{f}(\uxb) = \sqrt{s!(n-s-q)!}L_s^{q-n-1}(\theta^2)H_q^{f(t)}\exp(-\theta^2/2)
\label{CliffordHermiteFunctions}
\end{equation}
with $s = 0, \ldots, n$; $q = 0, \ldots, n-s$ and $t = 1, \ldots, \dim \cH_k^{f}$, are the fermionic Hermite functions. They constitute a basis of $\Lambda_{2n}$, see lemma \ref{scalFischer}. We denote $\zeta_{s,q}^f=1/\sqrt{s!(n-s-q)!}$. The analogue of lemma \ref{eigCH} can be found in \cite{DBS3}. The following formula is important for the sequel,
\begin{equation}
\label{annferm}
(\nabla^2_f+\theta^2+2\mE_f-2n)\phi^f_{s,q,t}=4\sqrt{s(n-s-q+1)}\phi^f_{s-1,q,t}.
\end{equation}

The bosonic Hermite functions \eqref{CHbosm} are orthogonal with respect to the $L_2(\mR^m)$-inner product. The inner product for the fermionic Hermite functions is defined using the Hodge star map, see \cite{CDBS2, MR2025382, MR1796030}.
\begin{definition}
\label{defstar}
The star map $\ast$ maps monomials ${x \grave{}}_A = {x \grave{}}_1^{\, \alpha_1} \ldots {x \grave{}}_{2n}^{\, \alpha_{2n}}$ of degree $k$ to monomials $\ast {x \grave{}}_A = \pm 2^{k-n} {x \grave{}}_1^{1-\alpha_1} \ldots {x \grave{}}_{2n}^{1-\alpha_{2n}}$ of degree $(2n-k)$ where the sign is chosen such that ${x \grave{}}_A (\ast {x \grave{}}_A )= 2^{k-n}  {x \grave{}}_{1} \ldots {x \grave{}}_{2n}$. By linearity, $\ast$ is extended to the whole of $\Lambda_{2n}$.
\end{definition}

For $H_q^{f(t)}\in \cH_q^f$ and $s+q\le n$, the following property is proven in \cite{CDBS2},
\begin{equation}
\label{CHstar}
*L_s^{q-n-1}(\theta^2) H_q^{f(t)}\exp(-\theta^2/2)=(-1)^s L_s^{q-n-1}(\theta^2) \widetilde{H}_q^{f(t)}\exp(-\theta^2/2).
\end{equation}
Here, the transformation $\widetilde{.}:\Lambda_{2n}\to \Lambda_{2n} $ is a linear transformation defined by
\[
\widetilde{{x \grave{}}_{2i-1}}={x \grave{}}_{2i},\qquad \widetilde{{x \grave{}}_{2i}}=-{x \grave{}}_{2i-1}\quad\mbox{and}\qquad \widetilde{a\,b}=\widetilde{b}\,\widetilde{a} \quad \mbox{for}\quad a,b \in \Lambda_{2n}.
\]
If $H_q^f\in\cH_q^f$, then $\widetilde{H^f_q}\in\cH_q^f$. Since clearly $\widetilde{\widetilde{H_q^f}}=(-1)^qH_q^f$ we can always consider a basis $\{H_q^{f(t)}\}$ ($t=1, \ldots, \dim \cH_{q}^{f}$) of $\cH_q^f$ such that $\widetilde{H_q^{f(t)}}=\pm i^qH_q^{f(t)}$.

\begin{definition}
\label{inprodGrass}
The inner product $\langle\cdot|\cdot\rangle_{\Lambda_{2n}}:\Lambda_{2n}\times \Lambda_{2n}\to \mC$ is given by
\[
\langle f|g \rangle_{\Lambda_{2n}} = \int_{B,x} f (\ast \overline{g})= \frac{1}{(2\pi)^n} \sum_A 2^{|A|} f_A \overline{g_{A}},
\]
with $\overline{\, \cdot\, }$ the standard complex conjugation, $f=\sum_{A}f_A{x\grave{}}_A$ and  $g=\sum_{A}g_A{x\grave{}}_A$. 
\end{definition}
With respect to the inner product $\langle . | . \rangle_{\Lambda_{2n}} $ the adjoints of $\theta^2$, $\nabla^2_f$ and $\mE_f-n$ are given by
\begin{equation*}
(\theta^2)^\dagger=-\nabla^2_f, \qquad (\nabla^2_f)^\dagger=-\theta^2, \qquad (\mE_f-n)^\dagger=(\mE_f-n).
\end{equation*}
This implies that the fermionic harmonic oscillator is hermitian with respect to this inner product. However, other symplectically invariant Hamiltonians will not be. This is as expected, as in \cite{MR967935} it was calculated that the eigenvalues for the Hamiltonian $H=\nabla^2_f-\theta^2+\lambda \theta^4$ (with $\lambda$ real) can be complex. 

Now choose a basis $\{ H_q^{f(t)}\}$ of $\cH_q^f$ such that
\begin{equation}
\langle H_q^{f(t_1)} \exp(-\theta^2/2)|H_q^{f(t_2)} \exp(-\theta^2/2) \rangle_{\Lambda_{2n}} = \frac{\delta_{t_1t_2}}{(n-q)!}.
\label{OrthSphHarm}
\end{equation}
Using property \eqref{CHstar} we find that if $H_q^{f(t_1)}$ and $H_q^{f(t_2)}$ are eigenvectors of $\widetilde{\cdot}$ with different eigenvalues they are orthogonal, so we can still find an orthogonal basis for $\cH_q^f$ for which the elements satisfy $\widetilde{H_q}=\pm i^qH_q$. From now on we assume we use such a basis. Theorem $4.16$ in \cite{CDBS2} proves the orthonormality of the fermionic Hermite functions.

\begin{theorem}
\label{orthocliffherm}
The Hermite functions defined in equation (\ref{CliffordHermiteFunctions}) are orthonormal with respect to the inner product $\langle \cdot|\cdot\rangle_{\Lambda_{2n}}$:
\[
\langle \phi_{s,q,t}^{f} |\phi_{i,j,r}^{f} \rangle_{\Lambda_{2n}} = \delta_{si}\delta_{qj}\delta_{tr}.
\]
\end{theorem}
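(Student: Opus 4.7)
The proof naturally splits into three cases: (i) $q\neq j$, (ii) $q=j$ but $s\neq i$, and (iii) $(s,q)=(i,j)$. The plan for (i) and (ii) is to produce a hermitian operator whose eigenvalues separate the relevant indices; case (iii) will then reduce to the orthonormality of the underlying spherical harmonics recorded in \eqref{OrthSphHarm}.

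For case (i), I would specialise \eqref{LB} to the purely fermionic setting $m=0$, $M=-2n$ to obtain $\Delta_{LB}=\theta^{2}\nabla_{f}^{2}-\mE_{f}(\mE_{f}-2n-2)$. Using the adjoint relations listed just before Definition \ref{inprodGrass} one verifies $\Delta_{LB}^{\dagger}=\Delta_{LB}$; by \eqref{LBR} this operator commutes with $L_{s}^{q-n-1}(\theta^{2})\exp(-\theta^{2}/2)$ and by \eqref{LBH} it acts on $\phi_{s,q,t}^{f}$ as multiplication by $q(2n+2-q)$, which depends only on $q$. Since $q+j\leq 2n<2n+2$ for $q,j\leq n$, these eigenvalues are distinct whenever $q\neq j$, giving orthogonality. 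For case (ii), the hermitian Hamiltonian $H_{f}=(\theta^{2}-\nabla_{f}^{2})/2$ has eigenvalue $2s+q-n$ on $\phi_{s,q,t}^{f}$ by the fermionic analogue of Lemma \ref{eigCH}, and at fixed $q$ this separates distinct $s$.

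In case (iii), I would apply \eqref{CHstar} and choose the basis with real coefficients so that $\overline{H_{q}^{f(r)}}=H_{q}^{f(r)}$ and $\widetilde{H_{q}^{f(r)}}=\pm i^{q}H_{q}^{f(r)}$, rewriting the inner product as the Berezin integral
\[
(-1)^{s}s!(n-s-q)!\int_{B}\bigl[L_{s}^{q-n-1}(\theta^{2})\bigr]^{2}\exp(-\theta^{2})\,H_{q}^{f(t)}\widetilde{H_{q}^{f(r)}}.
\]
The main technical step will be to evaluate this as $\delta_{tr}$. My approach is to induct on $s$: the base case $s=0$ follows directly from \eqref{OrthSphHarm} together with the prefactor $\sqrt{(n-q)!}$ in $\phi_{0,q,t}^{f}$. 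For the inductive step I would combine the lowering relation \eqref{annferm} with the adjoint identity $(A_{f}^{-})^{\dagger}=-A_{f}^{+}$ (immediate from the adjoints of $\theta^{2}$, $\nabla_{f}^{2}$ and $\mE_{f}-n$), where $A_{f}^{\pm}=\theta^{2}+\nabla_{f}^{2}\mp(2\mE_{f}-2n)$, together with the raising identity $A_{f}^{+}\phi_{s-1,q,t}^{f}=-4\sqrt{s(n-s-q+1)}\,\phi_{s,q,t}^{f}$ supplied by the fermionic analogue of the first line of Lemma \ref{eigCH}. These combine to yield
\[
\langle\phi_{s,q,t}^{f}|\phi_{s,q,r}^{f}\rangle_{\Lambda_{2n}}=\langle\phi_{s-1,q,t}^{f}|\phi_{s-1,q,r}^{f}\rangle_{\Lambda_{2n}},
\]
closing the induction. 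The only genuine obstacle is confirming the precise constant in the raising identity, which is a routine check modelled on the bosonic case.
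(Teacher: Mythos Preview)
The paper does not actually supply a proof of this theorem; it simply records the result and cites Theorem 4.16 of \cite{CDBS2}. Your proposal is therefore more than the present paper provides, and its overall architecture is sound: using the hermitian operators $\Delta_{LB}$ and $H_f=(\theta^2-\nabla_f^2)/2$ to separate the indices $q$ and $s$, and then reducing the diagonal case to \eqref{OrthSphHarm} via the ladder operators $A_f^{\pm}$, is exactly the right idea and mirrors the standard bosonic argument.

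One small slip: you cannot in general choose the basis of $\cH_q^f$ to satisfy \emph{both} $\overline{H_q^{f(r)}}=H_q^{f(r)}$ and $\widetilde{H_q^{f(r)}}=\pm i^{q}H_q^{f(r)}$. For odd $q$ these two conditions are incompatible, since the map $\widetilde{\,\cdot\,}$ sends real elements of $\Lambda_{2n}$ to real elements, whereas $\pm i H_q$ would be purely imaginary. Fortunately this does not damage your argument: the explicit Berezin-integral expression you wrote down is never actually used. The inductive step
\[
\langle\phi_{s,q,t}^{f}\mid\phi_{s,q,r}^{f}\rangle_{\Lambda_{2n}}
=\langle\phi_{s-1,q,t}^{f}\mid\phi_{s-1,q,r}^{f}\rangle_{\Lambda_{2n}}
\]
follows directly from $(A_f^{+})^{\dagger}=-A_f^{-}$, the lowering relation \eqref{annferm}, and the raising constant (which indeed comes out as $-4\sqrt{s(n-s-q+1)}$ by the computation analogous to Lemma \ref{eigCH}). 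The base case $s=0$ is precisely \eqref{OrthSphHarm} with the prefactor $\sqrt{(n-q)!}$. So simply drop the intermediate integral display and the incompatible basis claim, and the proof goes through cleanly.
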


The natural inner product on superspace is given by the direct product of the $L_2(\mR^m)$-inner product and the inner product on the Grassmann algebra, see also \cite{CDBS2, MR2025382}.
\begin{definition}
\label{Canonischin}
The inner product $\langle \cdot|\cdot \rangle_1$, $L_2(\mR^{m})_{m|2n}\times L_2(\mR^{m})_{m|2n}\to \mC$ is given by
\[
\langle f | g\rangle_1 = \int_{\mR^{m|2n}} f (* \overline{g})=\int_{\mR^m} \langle f|g\rangle_{\Lambda_{2n}} dV(\ux)
\]
where the star map acts on $\Lambda_{2n}$ as in definition \ref{defstar} or equation \eqref{CHstar} and leaves the bosonic variables invariant.
\end{definition}
This inner product generates the topology of $L_2(\mR^m)_{m|2n}$. For this Hilbert space 
\begin{eqnarray}
\label{adj1}
(R^2)^\dagger=r^2-\nabla^2_f&\mbox{ and }&(\nabla^2)^\dagger=\nabla_b^2-\theta^2,
\end{eqnarray}
so the harmonic oscillator (\ref{hamiltoniaan}) is symmetric, but more general orthosymplectically invariant Hamiltonians will not be. Moreover, the super Hermite polynomials are not orthogonal with respect to this inner product, see example $5.3$ in \cite{CDBS2}.

In \cite{CDBS2} an inner product was constructed for which $R^2$ and $\nabla^2$ are symmetric. It was shown that this construction could only be made in case $M>0$, see theorem $5.15$ in \cite{CDBS2}. For the remainder of this article we hence always assume $M>0$, unless stated otherwise. The inner product $\langle\cdot|\cdot\rangle_2$ on $\cP\exp(-R^2/2)$, for which $R^2$ and $\nabla^2$ are symmetric, is defined using the linear map $T$: $\cP\exp(-R^2/2)\to\cP\exp(-R^2/2)$, defined by
\begin{equation}
\label{defT}
T\left[R^{2j}f_{k,p,q}H_p^bH_q^f\exp(-R^2/2)\right]=(-1)^{k}R^{2j}f_{k,p,q}H_p^b\widetilde{H}_q^f\exp(-R^2/2),
\end{equation}
with $f_{k,p,q}$ the polynomials determined in theorem \ref{polythm}.
 
\begin{theorem}
\label{defsuper}
When $M>0$, the product $\langle \cdot|\cdot \rangle_{2}$, $\cP \exp (-R^2/2)\times \cP \exp (-R^2/2)\to \mC$ given by
\[
\langle f | g\rangle_2 = \int_{\mR^{m|2n}} fT(\overline{g})
\]
is an inner product. One has $(R^2)^{\dagger}=R^2$ and $(\nabla^2)^{\dagger}=\nabla^2$.
\end{theorem}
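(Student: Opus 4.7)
The plan is to verify the three required properties directly on the basis of $\cP\exp(-R^2/2)$ produced by Lemma \ref{scalFischer} combined with Theorem \ref{polythm}, namely
\[
\psi_{j,k,p,q,l,t}=R^{2j}f_{k,p,q}H_p^{b(l)}H_q^{f(t)}\exp(-R^2/2),
\]
with $\{H_p^{b(l)}\}$ an orthonormal basis of $\cH_p^b$ and $\{H_q^{f(t)}\}$ the fermionic basis fixed in the excerpt satisfying $\widetilde{H_q^{f(t)}}=\pm i^q H_q^{f(t)}$ and \eqref{OrthSphHarm}. Definition \eqref{defT} then acts diagonally on these vectors, and the Gram matrix will factorise cleanly into a bosonic, a fermionic, and a radial piece.

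First, using the definition of $\langle\cdot|\cdot\rangle_2$ together with identity \eqref{superint2} and the fact that $R^2=1$ on the supersphere,
\[
\langle\psi_{j,k,p,q,l,t}\,|\,\psi_{j',k',p',q',l',t'}\rangle_2=(-1)^{k'}\tfrac{1}{2}\Gamma\!\left(j+j'+2k+p+q+\tfrac{M}{2}\right)I,
\]
where $I=\int_{SS}f_{k,p,q}f_{k',p',q'}H_p^{b(l)}\overline{H_{p'}^{b(l')}}H_q^{f(t)}\widetilde{\overline{H_{q'}^{f(t')}}}$. Theorem \ref{integorth} kills $I$ unless $(k,p,q)=(k',p',q')$. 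When these triples coincide, the Pizzetti formula \eqref{Pizzetti} splits the integrand into bosonic and fermionic factors; orthonormality of $\{H_p^{b(l)}\}$ handles the bosonic factor, while \eqref{OrthSphHarm} handles the fermionic one, showing that $I=0$ also when $(l,t)\ne(l',t')$ and producing a positive real number otherwise. The sign $(-1)^k$ in $T$ is inserted precisely to cancel the $\pm i^q$ eigenvalue of $\widetilde{\,\cdot\,}$ together with the sign in \eqref{CHstar}, so the outcome is indeed positive.

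Within a block $(k,p,q,l,t)$ the remaining dependence on $(j,j')$ is a positive scalar times the Hankel matrix $\bigl[\Gamma(j+j'+C)\bigr]_{j,j'\geq 0}$ with $C=2k+p+q+M/2>0$ (this is the point at which the hypothesis $M>0$ enters in an essential way). This is the Gram matrix of $\{t^j\}$ in $L_2(\mR^+,\,t^{C-1}e^{-t}\,dt)$, hence strictly positive definite. Together with the block diagonality of step two this yields both positive definiteness and Hermitian symmetry of $\langle\cdot|\cdot\rangle_2$, since the full expression depends on $(j,j')$ and $(l,t)$ in a manifestly Hermitian-symmetric way.

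Symmetry of $R^2$ is immediate: multiplication by $R^2$ shifts $j\to j+1$ on each $\psi_{j,k,p,q,l,t}$, so $T$ commutes with $R^2$ and $\langle R^2f\,|\,g\rangle_2=\int f\,T(\overline{R^2g})=\langle f\,|\,R^2g\rangle_2$. For $\nabla^2$ I would pass to the Hermite basis $\phi_{j,k,l}$, which by the Laguerre expansion lies in the span of finitely many $\psi_{j',k',p,q,l,t}$ with $k'+p+q=k$; the block computation above then shows that the $\phi_{j,k,l}$ are pairwise orthogonal in $\langle\cdot|\cdot\rangle_2$. Lemma \ref{eigCH} identifies $\tfrac12(R^2-\nabla^2)$ as diagonal on this basis with real eigenvalues, and the raising/lowering operators $\nabla^2+R^2\mp(2\mE+M)$ as mutually transposed tridiagonal matrices with real entries, forcing $\nabla^2$ to be symmetric. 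The main obstacle is the fermionic sign accounting in the block computation: signs from complex conjugation, from $\widetilde{\,\cdot\,}$, from the Hodge $*$ hidden in the Berezin integral, and from the explicit $(-1)^k$ in \eqref{defT} must all cancel, and $T$ is engineered precisely so that they do.
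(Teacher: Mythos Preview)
The paper's proof of this theorem is simply a pointer to theorem~5.11 and lemma~5.15 of \cite{CDBS2}; there is no in-paper argument to compare against. Your sketch is essentially a reconstruction of that external proof using the fragments imported into Section~\ref{preliminaries}, and the overall strategy is correct: block-diagonalise the Gram matrix in $(k,p,q,l,t)$ via Theorem~\ref{integorth} and Lemma~\ref{SSin4}, then recognise the remaining $(j,j')$ block as the Hankel matrix $[\Gamma(j+j'+C)]$ with $C=2k+p+q+M/2$, which is the moment matrix of $t^{C-1}e^{-t}\,dt$ on $\mR^+$ and hence strictly positive definite precisely when $C>0$. This is a clean way to see where the hypothesis $M>0$ enters. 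Your arguments for $(R^2)^\dagger=R^2$ (via $[T,R^2]=0$) and for $(\nabla^2)^\dagger=\nabla^2$ (via Lemma~\ref{eigCH} on the orthogonal Hermite basis, which is Theorem~\ref{orthCH} but also follows from your Hankel--Laguerre picture) are correct.

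Two points to tighten. First, the assertion that the Pizzetti formula ``splits the integrand into bosonic and fermionic factors'' is not accurate: $f_{k,p,q}$ genuinely mixes $r^2$ and $\theta^2$, so there is no tensor factorisation of the supersphere integral. The orthogonality in $(l,t)$ and the sign $(-1)^k$ that cancels against the $(-1)^k$ built into $T$ both come from Lemma~\ref{SSin4} directly, and you should invoke that lemma rather than an ad hoc splitting. Second, your appeal to \eqref{CHstar} is misplaced: that identity governs the Hodge $\ast$ appearing in $\langle\cdot|\cdot\rangle_1$, not the map $T$. Once you route the computation through Lemma~\ref{SSin4} (and take the bosonic basis $H_p^{b(l)}$ real, which is harmless), the fermionic sign and complex-conjugation bookkeeping you flag as ``the main obstacle'' is already packaged there: the diagonal blocks come out as $a_{k,p,q}b_{k,p,q}>0$ times the Gamma factor, and Hermitian symmetry follows from block-diagonality together with reality of the diagonal entries.
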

\begin{proof}
See theorem $5.11$ and lemma $5.15$ in \cite{CDBS2}.
\end{proof}

The results of lemma $5.6$ and $5.9$ of \cite{CDBS2} are summarized in the following lemma.
\begin{lemma}
\label{SSin4}
\label{superorthbasis}
Take $\{H_p^{b(l)}\}$ ($l=1, \ldots, \dim \cH_{p}^{b}$) to be an orthonormal basis for $\cH_p^b$ satisfying
\begin{equation}
\label{bosharmbasis}
\int_{\mS^{m-1}} H_p^{(l_{1})}(\xi) \overline{H_p^{(l_{2})}}(\xi) d \sigma(\xi) = \delta_{l_{1} l_{2}}.
\end{equation}
Take $\{\cH_q^{f(t)}\}$ ($t=1, \ldots, \dim \cH_{q}^{f}$) the orthonormal basis of fermionic spherical harmonics of degree $q$ in equation \eqref{OrthSphHarm}. For $f_{k,p,q}$ as defined in theorem \ref{polythm} the following relation holds,
\begin{eqnarray*}
\int_{SS}f_{k,p,q}H_p^{b(l_1)}H_q^{f(t_1)}\, f_{k,p,q}H_p^{b(l_2)}\widetilde{H}_q^{f(t_2)}=(-1)^ka_{k,p,q}b_{k,p,q} \delta_{l_{1} l_{2}}\delta_{t_{1} t_{2}},
\end{eqnarray*}
with $a_{k,p,q}=\frac{\Gamma(M/2+p+q+2k-1)}{\Gamma(M/2+p+q+k-1)}$ and
\begin{eqnarray}
\label{bkpq}
b_{k,p,q}&=&\int_{SS}r^{2k} f_{k,p,q}H_p^{b(l)}H_p^{b(l)} H_q^{f(t)}\widetilde{H}_q^{f(t)}\\
\nonumber
&=&\frac{k!}{{\Gamma(2k+\frac{M}{2}+p+q)}}\frac{\Gamma(\frac{m}{2}+p+k)}{(n-q-k)!}.
\end{eqnarray}
Then, the basis for $\cH$ given by
\[
H_{2k+p+q}^{\left(r[k,p,q,l,t]\right)}=\frac{f_{k,p,q}H_p^{b(l)}H_q^{f(t)}}{\sqrt{a_{k,p,q}b_{k,p,q}}}
\]
with $k,p,q,l,t\in\mN$, $0\le k\le n$, $0\le q\le n-k$, $1\le l\le\dim\cH_p^b$ and $1\le t\le\dim\cH_q^f$ and $r[k,p,q,l,t]$, for $2k+p+q$ fixed, an injective map onto $\{1,\cdots,\dim\cH_{2k+p+q}\}$, satisfies the relation
\begin{equation}
\label{orthbasissuper}
\langle H_k^{(l)}\exp(-R^2/2) | H_k^{(r)}\exp(-R^2/2)\rangle_2 = \delta_{lr}\frac{1}{2}\Gamma(\frac{M}{2}+k).
\end{equation}
\end{lemma}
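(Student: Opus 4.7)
The plan is to combine the $SO(m)\times Sp(2n)$-invariant orthogonality statements from Theorems \ref{Pizzettithm} and \ref{integorth} with a direct computation of the normalization constants, then use \eqref{superint2} to pass from supersphere integrals to the weighted inner product.

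For the supersphere identity, I would first observe that both factors $f_{k,p,q}H_p^{b(l_1)}H_q^{f(t_1)}$ and $f_{k,p,q}H_p^{b(l_2)}\widetilde{H}_q^{f(t_2)}$ lie in $\cH_{2k+p+q}$, since Theorem \ref{polythm} gives the first and the map $\widetilde{\cdot}$ preserves $\cH_q^f$. So by Theorem \ref{integorth} the integral vanishes off the diagonal $(k,p,q)=(k,p,q)$, which is already our situation, and what remains is to establish the $\delta_{l_1l_2}\delta_{t_1t_2}$ behavior. To obtain this, I would apply the Pizzetti formula \eqref{Pizzetti}: since $\nabla^2=\nabla_b^2+\nabla_f^2$ and these operators commute, $\nabla^{2N}$ expands by the binomial theorem and the evaluation at $0$ splits into a purely bosonic factor times a purely fermionic factor. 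The bosonic factor delivers $\delta_{l_1l_2}$ via \eqref{bosharmbasis}, and the fermionic factor produces $\delta_{t_1t_2}$ by \eqref{OrthSphHarm} together with the star-map behavior \eqref{CHstar}; the sign $(-1)^k$ in the final answer comes from the $(-1)^s$ in \eqref{CHstar}-type contractions when one pairs $H_q^f$ with $\widetilde{H}_q^f$.

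For the explicit constant, I would reduce to a single fixed $(l,t)=(l_1,t_1)=(l_2,t_2)$ and expand $f_{k,p,q}^2$ using the polynomial representation $f_{k,p,q}=\sum_s a_s r^{2k-2s}\theta^{2s}$ of Theorem \ref{polythm}. Each term $r^{2\alpha}\theta^{2\beta}H_p^bH_p^b H_q^f\widetilde{H}_q^f$ is integrated over the supersphere by Pizzetti, producing a finite combinatorial sum in closed form. By design, one factors the result as (a purely radial contribution $a_{k,p,q}$) $\times$ (a combined angular-fermionic contribution $b_{k,p,q}$), where $b_{k,p,q}$ is what remains when one has already accounted for the degree $2k+p+q$ scaling; the identity \eqref{bkpq} follows after routine manipulation of Pochhammer symbols.

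For the weighted orthonormality, I would apply the map $T$ of \eqref{defT} to $\overline{H_k^{(r)}}\exp(-R^2/2)$. Writing $r=r[k_1,p,q,l',t']$ and $l=r[k_1',p',q',l'',t'']$, Theorem \ref{integorth} again forces $(k_1,p,q)=(k_1',p',q')$, and $T$ introduces a sign $(-1)^{k_1}$ together with the replacement $H_q^{f(t')}\mapsto \widetilde{H}_q^{f(t')}$. Since the integrand is now polynomial of total degree $2k$ times $\exp(-R^2)$, formula \eqref{superint2} reduces the integral to $\tfrac{1}{2}\Gamma(k+M/2)$ times the supersphere integral of the first part, divided by $a_{k_1,p,q}b_{k_1,p,q}$. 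The two factors $(-1)^{k_1}$ cancel and the normalization $a_{k_1,p,q}b_{k_1,p,q}$ cancels, producing exactly $\delta_{lr}\tfrac12\Gamma(k+M/2)$.

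The main technical obstacle is the explicit evaluation of $b_{k,p,q}$ in \eqref{bkpq}: summing the Pizzetti contributions of $r^{2k}f_{k,p,q}=\sum_s a_s r^{4k-2s}\theta^{2s}$ requires a non-trivial Pochhammer identity, although once established the result collapses to the ratio $\frac{k!\,\Gamma(m/2+p+k)}{\Gamma(2k+M/2+p+q)(n-q-k)!}$. The delicate part is verifying that the factorization into $a_{k,p,q}$ and $b_{k,p,q}$ is the natural one; the remaining steps are essentially bookkeeping, which explains why the authors state it as a summary of their Lemmas 5.6 and 5.9 of \cite{CDBS2}.
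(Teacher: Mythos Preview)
The paper does not give a proof of this lemma: it is stated as a summary of Lemmas~5.6 and~5.9 of \cite{CDBS2}, with no argument reproduced here. So there is no in-paper proof to compare against; your sketch is effectively a reconstruction of what those cited lemmas contain, and you correctly flag this yourself in your final sentence.

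Your overall architecture is sound and matches what one expects from the cited results: block-orthogonality from Theorem~\ref{integorth}, the explicit normalization via Pizzetti applied to $f_{k,p,q}^2(H_p^b)^2 H_q^f\widetilde{H}_q^f$, and the passage to $\langle\cdot|\cdot\rangle_2$ via \eqref{superint2} together with the definition \eqref{defT} of $T$. The last step, deriving \eqref{orthbasissuper} from the supersphere identity, is argued correctly.

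Two places where your sketch is imprecise. First, your attribution of the sign $(-1)^k$ to ``\eqref{CHstar}-type contractions'' is off: equation \eqref{CHstar} is a statement about the Hodge star, not about pairing $H_q^f$ with $\widetilde{H}_q^f$, and the sign in fact emerges from the explicit Pizzetti computation once one tracks the $r^2/\theta^2$ mixing in $f_{k,p,q}$. Second, your claim that the binomial expansion of $\nabla^{2N}$ cleanly ``splits into a purely bosonic factor times a purely fermionic factor'' oversimplifies the situation: because $f_{k,p,q}=\sum_s a_s r^{2k-2s}\theta^{2s}$ genuinely mixes bosonic and fermionic radii, each term in $f_{k,p,q}^2$ requires a different split of $\nabla_b^{2(N-j)}\nabla_f^{2j}$, and the $\delta_{t_1 t_2}$ comes not directly from \eqref{OrthSphHarm} (which involves the star map) but from the fermionic Pizzetti-type evaluation $\nabla_f^{2j}(\theta^{2\bullet}H_q^{f(t_1)}\widetilde{H}_q^{f(t_2)})(0)$. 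None of this breaks your argument, but the ``bookkeeping'' you defer is where all the content lives, including the derivation of both $a_{k,p,q}$ and the closed form \eqref{bkpq}.
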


The super Hermite functions are then orthogonal with respect to the inner product $\langle \cdot|\cdot \rangle_{2}$, see theorem $5.13$ in \cite{CDBS2}.
\begin{theorem}
\label{orthCH}
The set of functions $\{ \phi_{j,k,l} \}$ in formula \eqref{CH} with the basis of spherical harmonics in formula \eqref{orthbasissuper} forms an orthonormal basis for $\cP\, exp(-R^{2}/2)$ with respect to the inner product $\langle \cdot|\cdot\rangle_2$,
\begin{equation}
\langle \phi_{j,k,l}| \phi_{i,s,r} \rangle_2 =\delta_{ji}\delta_{ks}\delta_{lr}.
\label{normCH}
\end{equation}
\end{theorem}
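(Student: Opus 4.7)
The plan is to exploit the $\mathfrak{sl}_{2}$ structure and reduce the statement to the base case $j=i=0$ supplied by lemma \ref{SSin4}. Since $(R^{2})^{\dagger}=R^{2}$ and $(\nabla^{2})^{\dagger}=\nabla^{2}$ with respect to $\langle\cdot|\cdot\rangle_{2}$ by theorem \ref{defsuper}, taking the adjoint of both sides of \eqref{commRD} immediately yields $(2\mE+M)^{\dagger}=-(2\mE+M)$. Consequently the Hamiltonian $H=\tfrac{1}{2}(R^{2}-\nabla^{2})$ is symmetric, and the ladder operators $A^{\pm}:=\nabla^{2}+R^{2}\mp(2\mE+M)$ appearing in lemma \ref{eigCH} satisfy $(A^{+})^{\dagger}=A^{-}$. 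Recall from lemma \ref{eigCH} that $\phi_{j,k,l}$ is an $H$-eigenvector with eigenvalue $2j+k+M/2$, that $A^{\pm}$ shift the index $j$ by $\pm 1$, and that $A^{-}\phi_{0,k,l}=0$.

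When $2j+k\neq 2i+s$, the vectors $\phi_{j,k,l}$ and $\phi_{i,s,r}$ are eigenvectors of the symmetric operator $H$ with distinct eigenvalues, so $\langle\phi_{j,k,l}|\phi_{i,s,r}\rangle_{2}=0$ automatically. When $2j+k=2i+s$, I would assume without loss of generality $j\leq i$ and write $\phi_{j,k,l}=-(4\sqrt{j(j+k+M/2-1)})^{-1}A^{+}\phi_{j-1,k,l}$. Moving $A^{+}$ onto $\phi_{i,s,r}$ via the adjoint and invoking lemma \ref{eigCH} on the right produces the recursion
\[
\langle \phi_{j,k,l}|\phi_{i,s,r}\rangle_{2}=\sqrt{\tfrac{i(i+s+M/2-1)}{j(j+k+M/2-1)}}\,\langle \phi_{j-1,k,l}|\phi_{i-1,s,r}\rangle_{2}.
\]
After $j$ iterations the inner product reduces to $\langle\phi_{0,k,l}|\phi_{i-j,s,r}\rangle_{2}$ up to an explicit scalar. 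If $i>j$, one further application of the adjoint identity combined with $A^{-}\phi_{0,k,l}=0$ annihilates the inner product; if $i=j$, the spectral constraint forces $k=s$, every square-root ratio in the product collapses to one, and we are left with the base case $\langle\phi_{0,k,l}|\phi_{0,k,r}\rangle_{2}$. Since $L_{0}^{M/2+k-1}\equiv 1$, this equals $\tfrac{2}{\Gamma(k+M/2)}\langle H_{k}^{(l)}\exp(-R^{2}/2)|H_{k}^{(r)}\exp(-R^{2}/2)\rangle_{2}$, which is exactly $\delta_{lr}$ by formula \eqref{orthbasissuper} of lemma \ref{SSin4}.

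That $\{\phi_{j,k,l}\}$ actually spans $\cP\exp(-R^{2}/2)$ follows at once from the Fischer decomposition (lemma \ref{scalFischer}) together with the fact that $\{L_{j}^{M/2+k-1}\}_{j\geq 0}$ is a basis for polynomials in one variable, so Hermite functions of fixed $(k,l)$ span the same subspace as the family $\{R^{2j}H_{k}^{(l)}\exp(-R^{2}/2)\}_{j\geq 0}$. The main obstacle I anticipate is the adjoint identity $(A^{+})^{\dagger}=A^{-}$, since that is the only point beyond the base case where the defining property of $\langle\cdot|\cdot\rangle_{2}$ from theorem \ref{defsuper} is really used; once it is in place everything else is routine $\mathfrak{sl}_{2}$ bookkeeping, and the normalization factor $\sqrt{2j!/\Gamma(j+k+M/2)}$ in \eqref{CH} is engineered precisely so that the square-root prefactors from lemma \ref{eigCH} telescope to unity.
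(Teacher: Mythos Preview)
Your argument is correct and is precisely the $\mathfrak{sl}_2$ ladder argument the paper has in mind. The paper does not prove the theorem in place (it cites theorem~5.13 of \cite{CDBS2}), but the very recursion you derive,
\[
\langle \phi_{j,k,l}|\phi_{i,s,r}\rangle_2=\sqrt{\tfrac{i(i+s+M/2-1)}{j(j+k+M/2-1)}}\,\langle \phi_{j-1,k,l}|\phi_{i-1,s,r}\rangle_2,
\]
is reproduced verbatim at the start of section~5, where it is observed for any inner product making $R^2$ and $\nabla^2$ symmetric and used to conclude $\langle\phi_{j,k,l}|\phi_{p,q,r}\rangle=\delta_{jp}\delta_{kq}\langle\phi_{0,k,l}|\phi_{0,k,r}\rangle$. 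Your reduction to the base case via \eqref{orthbasissuper} and your spanning argument via Fischer decomposition are likewise the intended ones, and the ``obstacle'' you flag---$(A^+)^\dagger=A^-$---is exactly the computation $[\nabla^2/2,R^2/2]^\dagger=-[\nabla^2/2,R^2/2]$ you already carried out.
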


Finally, the super Fourier transform on $\cS(\mR^m)\otimes \Lambda_{2n}$ was introduced in \cite{DBS9} as
\begin{equation}
\label{Four}
\cF^{\pm}_{m|2n}(f(\bold{x}))(\bold{y})=(2\pi)^{-M/2}\int_{\mR^{m|2n},x}\exp(\pm i\langle x,y\rangle)f(\bold{x}).
\end{equation}
This yields an $O(m)\times Sp(2n)$-invariant generalization of the purely bosonic Fourier transform. The super Hermite functions are the eigenvectors of this Fourier transform, i.e.
\begin{equation}
\label{FourCH}
\cF^{\pm}_{m|2n}(\phi_{j,k,l}(\bold{x}))(\bold{y})=\exp(\pm i(2j+k)\frac{\pi}{2})\phi_{j,k,l}(\bold{y}).
\end{equation}
The Fourier transform can be immediately generalized to $L_2(\mR^m)_{m|2n}$ and $\cS'(\mR^m)\otimes\Lambda_{2n}$ (with  $\cS'(\mR^m)$ the space of tempered distributions) since $\cF^{\pm}_{m|2n}=\cF^{\pm}_{m|0}\circ\cF^{\pm}_{0|2n}$. An important property of the super Fourier transform is
\begin{eqnarray}
\label{FourR}
R^2\cF_{m|2n}^\pm(f)(\bold{x})&=&-\cF_{m|2n}^\pm(\nabla^2f)(\bold{x}).
\end{eqnarray}

\section{The super Hermite functions and the product basis.}
\setcounter{equation}{0}

In this technical section we calculate how the super Hermite functions can be expanded in the basis consisting of products of the bosonic and fermionic Hermite functions. We take $2$ fixed integers $m$ and $n$ satisfying $M=m-2n>0$. We consider the $\langle \cdot |\cdot \rangle_2$ orthonormal basis of super Hermite functions for $\cP\exp(-R^2/2)$ on $\mR^{m|2n}$, using the spherical harmonics in lemma \ref{superorthbasis}
\begin{equation}
\label{normeringL2}
\phi_{j,k,p,q,l,t}=\frac{L_{j}^{\frac{M}{2}+2k+p+q-1}(R^2)\,f_{k,p,q}H_p^{b(l)}H_q^{f(t)}\,\exp (-R^2/2)}{\zeta_{j,2k+p+q}^M\, \sqrt{a_{k,p,q}\, b_{k,p,q}}}.
\end{equation}

The Hermite functions on $\mR^m$ and in $\Lambda_{2n}$ are denoted by $\phi^b_{i,p,l}$ and $ \phi^f_{s,q,t}$. The product basis $\{\phi^b_{i,p,l}  \phi^f_{s,q,t} \}$ of $\cP\exp(-R^2/2)$ is orthonormal with respect to $\langle\cdot|\cdot\rangle_1$ since $ \{\phi^b_{i,p,l}\}$ is an orthonormal basis for the $L_{2}(\mR^{m})$-inner product and $\{\phi^f_{s,q,t} \}$ is an orthonormal basis for the $\Lambda_{2n}$-inner product (theorem \ref{orthocliffherm}). The coefficients of an $L_{2}(\mR^{m})_{m|2n}$ function $f$ (so in particular for an $\cS(\mR^m)_{m|2n}$ function) with respect to this basis are calculated by
\begin{equation*}
\langle \phi^b_{i,p,l}\phi^f_{s,q,t}|f\rangle_1=\langle \phi^b_{i,p,l}| \langle \phi^f_{s,q,t}|f\rangle_{\Lambda_{2n}}\rangle_{L_2(\mR^m)}.
\end{equation*}
Since the bosonic and fermionic Hermite functions are eigenvectors of the bosonic and fermionic harmonic oscillator, the following relation holds,
\begin{equation}
\label{HOprod}
\frac{1}{2}(R^2-\nabla^2)\phi_{i,p,l}^b\phi^f_{s,q,t}=(2i+p+2s+q+\frac{M}{2})\phi_{i,p,l}^b\phi_{s,q,t}^f.
\end{equation}

We start to calculate the coefficients corresponding to the change of basis between the product basis and the super Hermite functions. Most of these coefficients are zero.
\begin{lemma}
\label{basisovergangsbf}
For the product basis $\{ \phi^b_{i,p,l} \phi^f_{s,q,t} \}$, with $i,p,s,q\in\mN$ and $s+q\le n$ and the basis in equation (\ref{normeringL2}), $\{\phi_{j,k,p,q,l,t}\}$ with $j,k\in\mN$ and $k\le n-q$, the following relation holds
\begin{equation}
\label{alphavgl}
\phi_{j,k,p,q,l,t}=\sum_{s=0}^{\min(n-q,j+k)}\alpha_{j,k,p,q,s}\phi^b_{j+k-s,p,l}\phi^f_{s,q,t}
\end{equation}
and
\begin{equation}
\label{betavgl}
\phi^b_{i,p,l}\phi^f_{s,q,t}=\sum_{k=0}^{\min(n-q,i+s)}\beta_{i,s,p,q,k}\phi_{i+s-k,k,p,q,l,t}
\end{equation}
for some real coefficients $\alpha_{j,k,p,q,s}$ and $\beta_{i,s,p,q,k}$ with $\alpha_{j,k,p,q,s}= (-1)^{k-s}\beta_{j+k-s,s,p,q,k}$.
\end{lemma}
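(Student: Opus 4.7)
The plan is to combine an eigenvalue-structure argument for the existence of the expansions with a pairing calculation for the sign relation.

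For existence, I observe that both $\phi_{j,k,p,q,l,t}$ and $\phi^{b}_{i,p,l}\phi^{f}_{s,q,t}$ are eigenvectors of the harmonic oscillator $H=\tfrac{1}{2}(R^{2}-\nabla^{2})$, with eigenvalues $2(j+k)+p+q+M/2$ and $2(i+s)+p+q+M/2$ respectively, by \lemref{eigCH} and \eqref{HOprod}. Using $e^{-R^{2}/2}=e^{-r^{2}/2}e^{-\theta^{2}/2}$ and Taylor expanding $L_{j}^{\frac{M}{2}+2k+p+q-1}(r^{2}+\theta^{2})$ in $\theta^{2}$ (a finite sum since $\theta^{2n+2}=0$) together with the explicit form of $f_{k,p,q}$ from \thmref{polythm}, one sees that $\phi_{j,k,p,q,l,t}$ takes the form (polynomial in $r^{2},\theta^{2}$)$\cdot H_{p}^{b(l)}(\ux)H_{q}^{f(t)}(\uxb)e^{-r^{2}/2}e^{-\theta^{2}/2}$, and hence lies in the $(p,l,q,t)$-angular sector. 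In that sector the sets $\{\phi^{b}_{i,p,l}\phi^{f}_{s,q,t}:i+s=j+k,\ 0\le s\le n-q\}$ and $\{\phi_{j',k',p,q,l,t}:j'+k'=j+k,\ 0\le k'\le n-q\}$ both form bases of the same finite-dimensional $H$-eigenspace, each of cardinality $\min(j+k,n-q)+1$; this forces the expansion \eqref{alphavgl} with the stated range of $s$, and invertibility of the transition matrix then yields \eqref{betavgl}.

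For the sign relation, I would introduce a bilinear pairing $B(u,v)=\int_{\mR^{m|2n}}u\,\widetilde{v}$, where $\widetilde{v}$ is the element obtained from $v$ by replacing each occurrence of $H_{q}^{f(t)}$ by $\widetilde{H_{q}^{f(t)}}$. Using \eqref{CHstar} in the form $\ast\phi^{f}_{s,q,t}=(-1)^{s}\widetilde{\phi^{f}_{s,q,t}}$ together with the $L_{2}(\mR^{m})$-orthonormality of bosonic Hermite functions and \thmref{orthocliffherm}, one obtains
\[
B\bigl(\phi^{b}_{i_{1},p,l}\phi^{f}_{s_{1},q,t},\phi^{b}_{i_{2},p,l}\phi^{f}_{s_{2},q,t}\bigr)=(-1)^{s_{1}}\delta_{i_{1}i_{2}}\delta_{s_{1}s_{2}}.
\]
On the other hand the definition \eqref{defT} of $T$ gives immediately $T(\phi_{j,k,p,q,l,t})=(-1)^{k}\widetilde{\phi_{j,k,p,q,l,t}}$, so by the orthonormality of \thmref{orthCH} one likewise obtains
\[
B\bigl(\phi_{j_{1},k_{1},p,q,l,t},\phi_{j_{2},k_{2},p,q,l,t}\bigr)=(-1)^{k_{1}}\delta_{j_{1}j_{2}}\delta_{k_{1}k_{2}}.
\]

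Encoding the expansion coefficients for fixed $n_{0}=j+k=i+s$ and fixed $p,l,q,t$ as a matrix $U$ with entries $U_{ks}=\alpha_{n_{0}-k,k,p,q,s}$ and its inverse $V=U^{-1}$ with entries $V_{sk}=\beta_{n_{0}-s,s,p,q,k}$, the two evaluations of $B$ translate into the identity $UDU^{T}=D$ with $D=\operatorname{diag}((-1)^{s})$. Combined with $UV=I$, this reads $(-1)^{s}U_{ks}=(-1)^{k}V_{sk}$, which is precisely $\alpha_{j,k,p,q,s}=(-1)^{k-s}\beta_{j+k-s,s,p,q,k}$. The main subtle point is the pairing computation: the chosen basis of $\cH_{q}^{f}$ with $\widetilde{H_{q}^{f(t)}}=\pm i^{q}H_{q}^{f(t)}$ is complex for $q$ odd, so in tracking the complex conjugations one must either pass to the reduced real picture obtained by dividing $H_{q}^{f(t)}$ out of both sides of \eqref{alphavgl} and \eqref{betavgl}, or carefully book-keep the paired indices $t\leftrightarrow\bar{t}$; either way, because the factor $H_{q}^{f(t)}$ appears identically on both sides of the expansions, the $\pm i^{q}$ phases cancel and the stated real sign relation emerges from the matrix identity.
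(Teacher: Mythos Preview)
Your existence argument is essentially the same as the paper's: both identify the $(p,l,q,t)$-angular sector and then use that the harmonic oscillator $H$ is symmetric for $\langle\cdot|\cdot\rangle_1$ to force $j+k=i+s$, followed by a dimension count.

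For the sign relation your route is correct but more circuitous than the paper's. The paper simply writes $\alpha_{j,k,p,q,s}=\langle\phi_{j,k,p,q,l,t}\,|\,\phi^b_{j+k-s,p,l}\phi^f_{s,q,t}\rangle_1$, expands this Berezin integral using $\ast\phi^f_{s,q,t}=(-1)^s\widetilde{\phi^f_{s,q,t}}$ from \eqref{CHstar}, commutes the (Grassmann-even) radial factors, and recognizes the result as $(-1)^{s-k}\int\phi^b\phi^f\,T(\phi)=(-1)^{s-k}\langle\phi^b\phi^f\,|\,\phi\rangle_2=(-1)^{s-k}\beta$ via \eqref{defT}. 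This is a two-line integral rewrite, and crucially never needs the eigenbasis of $\widetilde{\cdot}$ on $\cH^f_q$, so the complex-conjugation issue you flag never arises. Your auxiliary bilinear form $B(u,v)=\int u\,\widetilde v$ packages the same integral identity into the matrix relation $UDU^T=D$; that is valid (and your proposed fix of working with a real basis of $\cH^f_q$, which exists since $\langle\cdot|\cdot\rangle_{\Lambda_{2n}}$ is positive-definite on the real subspace and the coefficients $\alpha,\beta$ are independent of $t$, does resolve the conjugation issue), but the detour through $B$ and the matrix algebra is not needed once one sees that $\alpha$ and $\beta$ are literally the same Berezin integral read through the two inner products.
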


\begin{proof}
Both the sets $\{\phi_{j,k,p,q,l,t}\}$ and $\{\phi^b_{\mu,\nu,\rho}\phi^f_{\lambda,\delta,\gamma}\}$ constitute a basis for $\cP\exp(-R^2/2)$. Therefore each $\phi_{j,k,p,q,l,t} $ can be expressed as a linear combination of elements of the set $\{\phi^b_{\mu,\nu,\rho}\phi^f_{\lambda,\delta,\gamma}\}$ and vice versa. The decomposition of the Grassmann algebra in lemma \ref{scalFischer} implies that $q=\delta$ and $t=\gamma$ are necessary conditions for the coefficients not to be zero. The coefficients $\alpha$ can be calculated using the inner product $\langle \cdot|\cdot\rangle_1$ for which the product basis is orthonormal. The bosonic integration implies that for this to be different from zero the coefficients must satisfy $p=\nu$ and $l=\rho$. Since the Hamiltonian for the harmonic oscillator is hermitian with respect to $\langle\cdot|\cdot\rangle_1$ another necessary condition is $2j+2k+p+q=2\mu+\nu+2\lambda+\delta$ which leads to $j+k=\mu+\lambda$. 

From these considerations the proposed summation is obtained. In order to find the relation between $\alpha_{j,k,p,q,s}$ and $\beta_{j,s,p,q,k}$ we calculate (we omit the coefficients $t$ and $l$ as they are not important)

\[\alpha_{j,k,p,q,s}=\langle \phi_{j,k,p,q,l,t}|\phi_{j+k-s,p,l}^b\phi^f_{s,q,t}\rangle_1=\]
\begin{equation}
\label{defalpha}
\int_{\mR^{m|2n}}\frac{L_j^{\frac{M}{2}+2k+p+q-1}(R^2)\,f_{k,p,q}H_p^{b}H_q^{f}\,L_{j+k-s}^{\frac{m}{2}+p-1}(r^2)H_p^{b}(-1)^s L_{s}^{q-n-1}(\theta^2)\widetilde{H}_q^{f}\exp(-R^2)}{\zeta^M_{j,2k+p+q}\,\sqrt{ a_{k,p,q}\, b_{k,p,q}}\quad\zeta^m_{j+k-s,p}\,\zeta^f_{s,q}}
\end{equation}
\[
=(-1)^s\int_{\mR^{m|2n}}\frac{L_{j+k-s}^{\frac{m}{2}+p-1}(r^2)H_p^{b}  L_{s}^{q-n-1}(\theta^2)H_q^{f}L_j^{\frac{M}{2}+2k+p+q-1}(R^2)\,f_{k,p,q}H_p^{b}\widetilde{H}_q^{f}\exp(-R^2)}{\zeta^m_{j+k-s,p}\,\zeta^f_{s,q}\quad\zeta^M_{j,2k+p+q}\, \sqrt{a_{k,p,q}\, b_{k,p,q}}}
\]
\[=(-1)^{s-k}\langle\phi^b_{j+k-s,p,l}\phi^f_{s,q,t}|\phi_{j,k,p,q,l,t}\rangle_2=(-1)^{s-k}\beta_{j+k-s,s,p,q,k}\]
which concludes the proof.
\end{proof}

It is our aim to show that the coefficients $\alpha$ and $\beta$ satisfy a polynomial bound. This will be useful to construct extensions of the space on which we can define $\langle\cdot|\cdot\rangle_2$ and to study the corresponding Hilbert space. First we need the following lemmas.
\begin{lemma}
For $H_j\in\cH_j$ and $P\in\cP$ of the form $P=\sum_{l=0}^jP_l$ with $\mE P_l=lP_l$, the following relation holds,
\label{helporth}
\begin{eqnarray*}
\int_{\mR^{m|2n}}H_j P\exp(-R^2)&=&\frac{1}{2}\Gamma(\frac{M}{2}+j)\int_{SS}H_jP\\
&=&\frac{1}{2}\Gamma(\frac{M}{2}+j)\int_{SS}H_jP_j.
\end{eqnarray*}
\end{lemma}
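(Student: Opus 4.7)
The plan is to combine the Gaussian-to-sphere reduction in equation \eqref{superint2} with the supersphere orthogonality of theorem \ref{Pizzettithm}. Since $H_j P$ is not homogeneous, I would first split it as $H_j P = \sum_{l=0}^{j} H_j P_l$, where each summand is homogeneous of degree $j+l$. Formula \eqref{superint2} then applies term by term and yields
\[
\int_{\mR^{m|2n}} H_j P \exp(-R^2) = \sum_{l=0}^{j} \tfrac{1}{2} \Gamma\!\left(\tfrac{j+l+M}{2}\right) \int_{SS} H_j P_l.
\]
All that is left is to show that only the term $l=j$ actually contributes.

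To obtain this vanishing, I would apply the Fischer decomposition (lemma \ref{scalFischer}) to each homogeneous $P_l$, writing $P_l = \sum_{a \ge 0} R^{2a} K^{(a)}_{l-2a}$ with $K^{(a)}_{l-2a} \in \cH_{l-2a}$. The first defining property of the Pizzetti functional, $\int_{SS} R^{2} f = \int_{SS} f$, lets me strip all powers of $R^{2}$, after which the harmonic-orthogonality property $\int_{SS} \cH_j \cH_{l-2a} = 0$ for $j \ne l - 2a$ forces $l - 2a = j$. Combined with $a \ge 0$ and $l \le j$ this only admits $l = j$, $a = 0$, so all terms with $l < j$ vanish and the right-hand side of the sum collapses to $\tfrac{1}{2} \Gamma(\tfrac{M}{2}+j) \int_{SS} H_j P_j$.

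Finally, the very same Fischer-plus-orthogonality argument applied directly under the supersphere integral gives $\int_{SS} H_j P = \sum_{l=0}^{j} \int_{SS} H_j P_l = \int_{SS} H_j P_j$, so the intermediate form $\tfrac{1}{2}\Gamma(\tfrac{M}{2}+j) \int_{SS} H_j P$ equals the final form $\tfrac{1}{2}\Gamma(\tfrac{M}{2}+j) \int_{SS} H_j P_j$, and both claimed equalities are obtained at once. The argument reduces to two applications of theorem \ref{Pizzettithm} together with lemma \ref{scalFischer}, so I do not expect a real obstacle; the only delicate point is that theorem \ref{Pizzettithm} and the Fischer decomposition both require $M\notin -2\mN$, which is guaranteed by the standing assumption $M=m-2n>0$.
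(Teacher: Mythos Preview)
Your proposal is correct and follows essentially the same route as the paper: apply \eqref{superint2} term by term to the homogeneous pieces $H_jP_l$, Fischer-decompose each $P_l$, and then use the $R^2$-invariance and harmonic orthogonality from theorem \ref{Pizzettithm} to kill all contributions except $l=j$, $a=0$. The paper's proof is the same argument written more tersely, passing through the harmonic component $H_j^j$ of $P_j$ (your $K_j^{(0)}$) as an intermediate step.
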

\begin{proof}
The Fischer decomposition in lemma \ref{scalFischer} implies $P_l=\sum_{j=0}^{\lfloor l/2\rfloor}R^{2j}H_{l-2j}^l$ with $H_{l-2j}^l\in\cH_{l-2j}$.
Applying equation \eqref{superint2} and the properties in theorem \ref{Pizzettithm} yields
\begin{eqnarray*}
\int_{\mR^{m|2n}}H_j P\exp(-R^2)&=&\sum_{l=0}^j\frac{1}{2}\Gamma(\frac{M+j+l}{2})\int_{SS}H_jP_l\\
%&=&\sum_{l=0}^k\sum_{j=0}^{\lfloor l/2\rfloor}\frac{1}{2}\Gamma(\frac{M+k+l}{2})\int_{SS}H_kH_{l-2j}^l\\
&=&\frac{1}{2}\Gamma(\frac{M}{2}+j)\int_{SS}H_jH_{j}^j\\
&=&\frac{1}{2}\Gamma(\frac{M}{2}+j)\int_{SS}H_jP_j,
\end{eqnarray*}
which proves the lemma.
\end{proof}

\begin{lemma}
\label{techSchw}
For all $m,n,k,p,q,s\in\mN$ with $s>0$, $n-s-q\ge0$ and $M=m-2n>0$ and for $\lambda=\sqrt{1/s(n+1)}$ there exists an $N(n,q,s)\in\mN$ depending on $n,q$ and $s$ for which
\begin{eqnarray*}
\left(\sqrt{\frac{j+k-s}{j}\frac{j+p+k-s+\frac{m}{2}-1}{j+2k+p+q+\frac{M}{2}-1}}+\lambda\sqrt{\frac{s(n-s-q+1)}{j(j+2k+p+q+\frac{M}{2}-1)}}\right)\left(\frac{j-1}{j}\right)^{\frac{n-q-2s+2}{2}}&\le&1
\end{eqnarray*}
when $j>N$.
\end{lemma}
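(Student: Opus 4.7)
The plan is to establish this inequality by an asymptotic expansion in powers of $1/j$, exploiting an exact cancellation among the leading-order contributions. Writing the expression as $(A+B)\,C$ with $A$ and $B$ the two square roots inside the parentheses and $C = ((j-1)/j)^{(n-q-2s+2)/2}$, I would start from the algebraic identity
\[
(j+k-s)(j+p+k-s+\tfrac{m}{2}-1) - j(j+2k+p+q+\tfrac{M}{2}-1) = (n-q-2s)j + (k-s)(p+k-s+\tfrac{m}{2}-1),
\]
which follows by direct expansion together with $M - m = -2n$. Dividing by the denominator and taking a square root yields $A = 1 + (n-q-2s)/(2j) + O(1/j^2)$. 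The second summand is already $O(1/j)$ with explicit leading coefficient $\lambda\sqrt{s(n-s-q+1)}$, and the binomial theorem gives $C = 1 - (n-q-2s+2)/(2j) + O(1/j^2)$.

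When the three expansions are multiplied, the contributions $\pm(n-q-2s)/(2j)$ coming from $A$ and $C$ cancel exactly, and the residual $-1/j$ from $C$ combines with the leading term of $B$ to produce
\[
(A+B)\,C = 1 + \frac{\lambda\sqrt{s(n-s-q+1)} - 1}{j} + O(1/j^2).
\]
The decisive observation is
\[
\lambda\sqrt{s(n-s-q+1)} = \sqrt{\frac{n-s-q+1}{n+1}} \leq \sqrt{\frac{n}{n+1}} < 1,
\]
where the first inequality uses the hypothesis $s \geq 1$. Consequently the coefficient of $1/j$ is a strictly negative number depending only on $n$, $q$, $s$. For $j$ sufficiently large this negative linear correction dominates the quadratic remainder, proving $(A+B)C \leq 1$.

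The main technical obstacle is to control the $O(1/j^2)$ remainder uniformly so that the threshold $N$ really depends only on $n$, $q$, $s$. The delicate piece is the correction term $(k-s)(p+k-s+\tfrac{m}{2}-1)/[j(j+2k+p+q+\tfrac{M}{2}-1)]$ appearing in $A^2 - 1$. Since the numerator factor $p+k-s+\tfrac{m}{2}-1$ and the denominator factor $j+2k+p+q+\tfrac{M}{2}-1$ differ only by the quantity $k+q+s-n$, their ratio remains uniformly bounded; combined with the range $k \leq n-q$ inherited from Lemma \ref{basisovergangsbf} (which bounds $|k-s|$ in terms of $n$), this yields an estimate $|A^2-1| \leq K(n,q,s)/j$ uniform in the remaining parameters. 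Analogous uniform control of $B$ and of the Taylor remainder of $C$ produces an explicit constant $K'(n,q,s)$ such that $N$ can be taken as the least integer exceeding $K'(n,q,s)/(1 - \sqrt{n/(n+1)})$.
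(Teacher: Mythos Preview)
Your overall strategy---expand in $1/j$ and show the leading coefficient is strictly negative---matches the paper's. The crucial difference is how uniformity in the unbounded parameter $p$ (and in $m$) is obtained.

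The paper first bounds the left-hand side above by
\[
\sqrt{\left(1+\frac{k-s}{j}\right)\left(1+\frac{n-s-k-q}{j-1}\right)\left(1-\frac{1}{j}\right)^{n-q-2s+2}}+\lambda\sqrt{\frac{s(n-s-q+1)}{j(j-1)}},
\]
an expression that no longer involves $p$ or $m$ at all. Since the remaining free parameter $k$ is bounded by $n$, the subsequent $1/j$-expansion of this upper bound is automatically uniform, and the $\mathcal{O}(1/j^2)$ remainder carries a constant depending only on $n,q,s$.

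You instead expand $A$, $B$, $C$ directly and try to control the remainders afterwards. The gap is in the expansion of $A$: the claim $A = 1 + (n-q-2s)/(2j) + O(1/j^2)$ is \emph{not} uniform in $p$. With $c = 2k+p+q+\tfrac{M}{2}-1$ one has
\[
A^2 - 1 \;=\; \frac{n-q-2s}{\,j+c\,} + \frac{(k-s)(p+k-s+\tfrac{m}{2}-1)}{j(j+c)},
\]
and when $p$ (hence $c$) is comparable to $j$ the first term is $\frac{n-q-2s}{j+c}$ rather than $\frac{n-q-2s}{j}$; the discrepancy $\frac{(n-q-2s)c}{j(j+c)}$ is of order $1/j$, not $1/j^2$. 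Your final paragraph only establishes $|A^2-1|\le K(n,q,s)/j$, which gives $A = 1 + O(1/j)$ uniformly but does \emph{not} pin down the $1/j$-coefficient as $(n-q-2s)/2$. (There is also a minor slip: the two factors you compare differ by $j+k+q+s-n$, not $k+q+s-n$.) Without the paper's preliminary step of bounding away the $p$- and $m$-dependence \emph{before} expanding, the argument as written does not close.
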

\begin{proof}
The left-hand side is smaller than
\begin{eqnarray*}
\sqrt{\left(1+\frac{k-s}{j}\right)\left(1+\frac{n-s-k-q}{j-1}\right)\left(1-\frac{1}{j}\right)^{n-q-2s+2}}+\lambda\sqrt{\frac{s(n-s-q+1)}{j(j-1)}}
\end{eqnarray*}
which can be expanded in $1/j$,
\begin{eqnarray*}
1-(1-\sqrt{\frac{n-s-q+1}{n+1}})\frac{1}{j}+\mathcal{O}(\frac{1}{j^2}).
\end{eqnarray*}
Since $(1-\sqrt{\frac{n-s-q+1}{n+1}})>0$, there exists an $N(n,q,s)\in\mN$, such that this expression is smaller than $1$ for $j>N$.
\end{proof}

\begin{theorem}
\label{afschatting1}
There exists a constant $C>0$, independent of $(j,k,p,q,s)$, such that for the $\alpha$ introduced in lemma \ref{basisovergangsbf}, the relation
\begin{eqnarray*}
|\alpha_{j,k,p,q,s}|\le C\sqrt{j^{n-q-2s+2}p^{s-k}}
\end{eqnarray*}
holds for all $j,k,p,q,s$.
\end{theorem}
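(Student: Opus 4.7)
The strategy is to derive a two-term recursion in $j$ for the coefficients $\alpha_{j,k,p,q,s}$ whose growth factors coincide with the two quantities estimated in lemma \ref{techSchw}, and then to close a double induction (outer on $s$, inner on $j$) using that lemma. Applying the super raising operator $\nabla^2+R^2-2\mE-M$ of lemma \ref{eigCH} to both sides of \eqref{alphavgl} and splitting it as $(\nabla_b^2+r^2-2\mE_b-m)+(\nabla_f^2+\theta^2-2\mE_f+2n)$ produces the recursion. The bosonic summand acts by the bosonic analogue of lemma \ref{eigCH}; the fermionic summand, obtained by taking the adjoint of \eqref{annferm} in $\langle\cdot|\cdot\rangle_{\Lambda_{2n}}$ using $(\theta^2)^\dagger=-\nabla_f^2$ and $(\nabla_f^2)^\dagger=-\theta^2$, sends $\phi^f_{s,q,t}$ to $-4\sqrt{(s+1)(n-s-q)}\,\phi^f_{s+1,q,t}$. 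Collecting coefficients of $\phi^b_{j+k-s+1,p,l}\phi^f_{s,q,t}$ on both sides and shifting $j\mapsto j-1$ yields
\[
\alpha_{j,k,p,q,s}=A\,\alpha_{j-1,k,p,q,s}+B\,\alpha_{j-1,k,p,q,s-1},
\]
with $A$ and $B$ precisely the two square roots appearing in lemma \ref{techSchw}.

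For the outer base case $s=0$ the $B$-term drops out, the recursion telescopes, and the single initial value $\alpha_{0,k,p,q,0}$ is computed directly from \eqref{defalpha}. Lemma \ref{helporth} collapses the $\mR^{m|2n}$-integral to a single Pizzetti integral on the supersphere, which by lemma \ref{SSin4} equals a multiple of $b_{k,p,q}$; dividing by the normalizations $\zeta^M_{0,2k+p+q},\zeta^m_{k,p},\zeta^f_{0,q}$ and $\sqrt{a_{k,p,q}b_{k,p,q}}$ and applying standard large-$p$ Gamma asymptotics gives $|\alpha_{0,k,p,q,0}|\le C_0\,p^{-k/2}$. The telescoping product of $A$'s, estimated with $\Gamma(a+j)/\Gamma(j)\sim j^a$, then delivers the claimed $s=0$ bound. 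The cases $s>\min(n-q,j+k)$ are vacuous by lemma \ref{basisovergangsbf}, so the outer induction terminates at $s=n-q$.

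For the inductive step assume the claimed bound $|\alpha_{j-1,k,p,q,s'}|\le D_{s'}\sqrt{(j-1)^{n-q-2s'+2}p^{s'-k}}$ for $s'\in\{s-1,s\}$. The recursion combined with the triangle inequality and division by $\sqrt{j^{n-q-2s+2}p^{s-k}}$ reduces the step to verifying
\[
\left(A+\frac{D_{s-1}}{D_s}\,\frac{j-1}{\sqrt{p}}\,B\right)\left(\frac{j-1}{j}\right)^{(n-q-2s+2)/2}\le 1.
\]
Setting $D_s/D_{s-1}=\sqrt{s(n+1)}$ identifies $D_{s-1}/D_s$ with the $\lambda=1/\sqrt{s(n+1)}$ of lemma \ref{techSchw}. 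Whenever $p\ge(j-1)^2$ and $j>N(n,q,s)$ one has $(j-1)/\sqrt{p}\le 1$ and the inequality follows directly from that lemma; iterating gives $D_s=D_0\prod_{i=1}^{s}\sqrt{i(n+1)}\le D_0\sqrt{n!(n+1)^n}$, bounded uniformly since $s\le n$, which produces a constant $C$ independent of all parameters.

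The main obstacle will be the complementary regime $p<(j-1)^2$, together with the finitely many exceptional small $j\le N(n,q,s)$ not covered by lemma \ref{techSchw}. In that regime $\sqrt{p}<j-1$ implies $p^{s-k}<j^{2(s-k)}$, so it suffices to prove the auxiliary $p$-independent bound $|\alpha_{j,k,p,q,s}|\le C'\sqrt{j^{n-q+2-2k}}$, which then dominates the target estimate. I would obtain this auxiliary estimate by an analogous induction on the same recursion, now using only the uniform estimates $A,B=O(1)$ together with the cancellation factor $(\tfrac{j-1}{j})^{(n-q-2s+2)/2}$ and the telescoping of the $s=0$ case to absorb the accumulated errors. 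The remaining small-$j$ exceptions produce only a finite family of configurations, each controlled by direct Gamma-asymptotic estimates of \eqref{defalpha}, and all excess multiplicative constants are absorbed into the final $C$.
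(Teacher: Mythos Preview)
Your overall strategy (the two-term recursion in $j$ and a double induction closed by Lemma \ref{techSchw}) is exactly the paper's, but the execution has real gaps. You compute only $\alpha_{0,k,p,q,0}$; the paper computes $\alpha_{0,k,p,q,s}$ explicitly for \emph{every} $s$ via Lemma \ref{helporth} and formula \eqref{bkpq}, obtaining the uniform-in-$p$ bound $|\alpha_{0,k,p,q,s}|\le C^{*}\sqrt{p^{s-k}}$. Without this you cannot anchor the $j$-induction for $s>0$, and your claim that the small-$j$ cases are ``only a finite family of configurations'' is false since $p$ is unbounded. The paper handles $j\le N^{*}$ by feeding the explicit $j=0$ values into the recursion for those finitely many steps and observing that the resulting supremum over $p$ stays finite (for bounded $j$ one has $A=O(1)$ and $B=O(p^{-1/2})$, so each step preserves an $O(\sqrt{p^{s-k}})$ bound).

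Your treatment of the regime $p<(j-1)^2$ also fails. The implication ``$\sqrt{p}<j-1\Rightarrow p^{s-k}<j^{2(s-k)}$'' is valid only for $s>k$, and precisely then the proposed auxiliary bound $C'\sqrt{j^{\,n-q+2-2k}}$ is \emph{weaker} than the target $C\sqrt{j^{\,n-q-2s+2}p^{s-k}}$: set $p=1$ and the target becomes $C\sqrt{j^{\,n-q-2s+2}}$, which is strictly smaller since $s>k$. So establishing the auxiliary bound would not give the theorem in that range. The sketched proof of that auxiliary bound is not viable either: $A$ exceeds $1$ by a term of order $1/j$, so ``$A,B=O(1)$'' cannot close an induction with a fixed target exponent --- the whole content of Lemma \ref{techSchw} is to balance that $1/j$ excess against the factor $((j-1)/j)^{(n-q-2s+2)/2}$. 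The paper does not introduce any regime split; it applies Lemma \ref{techSchw} directly after substituting $|\alpha_{j-1,k,p,q,s-1}|\le C_{s-1}\sqrt{(j-1)^{n-q-2s+2}p^{s-k}}$ with the level-$s$ exponents.
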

\begin{proof}
First we calculate $\alpha_{0,k,p,q,s}$ starting from equation \eqref{defalpha} using lemma \ref{helporth} with $H_{2k+p+q}=f_{k,p,q}H_p^{b}H_q^{f}$ and $P=L_{k-s}^{\frac{m}{2}+p-1}(r^2)H_p^{b}(-1)^s L_s^{q-n-1}(\theta^2)\widetilde{H}_q^{f}$,
\begin{eqnarray*}
\alpha_{0,k,p,q,s}&=&\int_{\mR^{m|2n}}\frac{f_{k,p,q}H_p^{b}H_q^{f}\,L_{k-s}^{\frac{m}{2}+p-1}(r^2)H_p^{b}(-1)^s L_s^{q-n-1}(\theta^2)\widetilde{H}_q^{f}\exp(-R^2)}{\zeta^M_{0,2k+p+q}\,\sqrt{ a_{k,p,q}\, b_{k,p,q}}\quad\zeta^m_{k-s,p}\zeta^f_{s,q}}\\
&=&(-1)^{k-s}\frac{\Gamma(\frac{M}{2}+2k+p+q)}{2(k-s)!s!}\int_{SS}\frac{f_{k,p,q}H_p^{b}H_q^{f}\,r^{2k-2s}\theta^{2s}H_p^{b} \widetilde{H}_q^{f}}{\zeta^M_{0,2k+p+q}\,\sqrt{ a_{k,p,q}\, b_{k,p,q}}\zeta^m_{k-s,p}\zeta^f_{s,q}}.
\end{eqnarray*}
Using the first property in theorem \ref{Pizzettithm}, lemma \ref{helporth} and equation \eqref{bkpq} then yields
\begin{eqnarray*}
\alpha_{0,k,p,q,s}&=&(-1)^k\frac{\Gamma(\frac{M}{2}+2k+p+q)}{2(k-s)!s!}\int_{SS}\frac{f_{k,p,q}H_p^{b}H_q^{f}\,r^{2k-2s}(r^2-1)^{s}H_p^{b} \widetilde{H}_q^{f}}{\zeta^M_{0,2k+p+q}\,\sqrt{ a_{k,p,q}\, b_{k,p,q}}\zeta^m_{k-s,p}\zeta^f_{s,q}}\\
&=&(-1)^k\frac{\Gamma(\frac{M}{2}+2k+p+q)}{2(k-s)!s!}\int_{SS}\frac{r^{2k}f_{k,p,q}H_p^{b}H_q^{f}\,H_p^{b} \widetilde{H}_q^{f}}{\zeta^M_{0,2k+p+q}\,\sqrt{ a_{k,p,q}\, b_{k,p,q}}\zeta^m_{k-s,p}\zeta^f_{s,q}}\\
&=&(-1)^k\frac{\Gamma(\frac{M}{2}+2k+p+q)}{2(k-s)!s!}\frac{b_{k,p,q}}{\zeta^M_{0,2k+p+q}\,\sqrt{ a_{k,p,q}\, b_{k,p,q}}\zeta^m_{k-s,p}\zeta^f_{s,q}}\\
&=&(-1)^k\sqrt{\binom{k}{s}\frac{\Gamma(\frac{m}{2}+p+k)}{\Gamma(\frac{m}{2}+p+k-s)}\frac{\Gamma(\frac{M}{2}+p+q+k-1)}{\Gamma(\frac{M}{2}+p+q+2k-1)}\frac{(n-q-s)!}{(n-q-k)!}}.
\end{eqnarray*}
Because $k,q$ and $s$ are bounded (by $n$) there exists a constant $C^*$ such that
\begin{eqnarray*}
\alpha_{0,k,p,q,s}&\le&C^*\sqrt{p^{s-k}},
\end{eqnarray*}
for all $k,p,q,s$. Using lemma \ref{eigCH} and the properties \eqref{adj1} of $\langle \cdot|\cdot\rangle_1$ we calculate,
\begin{eqnarray*}
& &-4\sqrt{j(j+2k+p+q+\frac{M}{2}-1)}\alpha_{j,k,p,q,s}\\
&=&\langle(\nabla^2+R^2-2\mE-M)\phi_{j-1,k,p,q,l,t}|\phi^b_{j+k-s,p,l}\phi^f_{s,q,t}\rangle_1\\
&=&\langle\phi_{j-1,k,p,q,l,t}|(\nabla_b^2+r^2+2\mE_b+m)\phi^b_{j+k-s,p,l}\phi^f_{s,q,t}\rangle_1\\
&-&\langle\phi_{j-1,k,p,q,l,t}|(\theta^2+\nabla_f^2+2\mE_f-2n)\phi^b_{j+k-s,p,l}\phi^f_{s,q,t}\rangle_1.
\end{eqnarray*}
Applying lemma \ref{eigCH}, now for the purely bosonic case and equation \eqref{annferm}, yields
\begin{eqnarray}
\label{recursalpha}
\alpha_{j,k,p,q,s}&=&\sqrt{\frac{(j+k-s)(j+k-s+p+\frac{m}{2}-1)}{j(j+2k+p+q+\frac{M}{2}-1)}}\alpha_{j-1,k,p,q,s}\\
\nonumber
&+&\sqrt{\frac{s(n-s-q+1)}{j(j+2k+p+q+\frac{M}{2}-1)}}\alpha_{j-1,k,p,q,s-1}.
\end{eqnarray}
For $s=0$ this immediately yields
\begin{equation}
\label{alphaexpl}
\alpha_{j,k,p,q,0}=\sqrt{\binom{j+k}{j}\frac{\Gamma(j+k+p+\frac{m}{2})\Gamma(2k+p+q+\frac{M}{2})}{\Gamma(k+p+\frac{m}{2})\Gamma(j+2k+p+q+\frac{M}{2})}} \alpha_{0,k,p,q,0}.
\end{equation}
So there are constants $C_0$ and $C'_0$ such that
\begin{eqnarray*}
|\alpha_{j,k,p,q,0}|&\le&C'_0\sqrt{j^k(j+p)^{n-q-k}p^{k+q-n}}|\alpha_{0,k,p,q,0}|\\
&\le&C'_0C^\ast\sqrt{j^k(j+p)^{n-q-k}p^{k+q-n}}\sqrt{p^{-k}}\\
&\le&C_0\sqrt{j^{n-q}p^{-k}} \le C_0\sqrt{j^{n-q+2}p^{-k}},
\end{eqnarray*}
since $k\le n-q$, for all $j,k,p,q$. Now we take $N^\ast(n)$, the well-defined maximum of a finite set of $N(n,q,s)$ from lemma \ref{techSchw} 
\begin{eqnarray*}
N^\ast(n)&=&\max_{s\ge1,q|s+q\le n}N(n,q,s).
\end{eqnarray*}
For each $s$, $1\le s\le n$ we define $C_s$ as
\begin{eqnarray*}
C_s=\max\left(\sqrt{s(n+1)}\,C_{s-1},\sup_{j\le N^\ast ,k,p,q}\frac{|\alpha_{j,k,p,q,s}|}{\sqrt{j^{n-q-2s+2}p^{s-k}}}\right).
\end{eqnarray*}
The supremum is well-defined since $k$ and $q$ are bounded and because of the recursion relation \eqref{recursalpha}. In particular the relation
\begin{eqnarray*}
|\alpha_{j,k,p,q,s}|&\le &C_s\sqrt{j^{n-q-2s+2}p^{s-k}} \mbox{   for } j\le N^\ast(n)
\end{eqnarray*}
holds. Now we can prove that $|\alpha_{j,k,p,q,s}|\le C_s\sqrt{j^{n-q-2s+2}p^{s-k}}$ holds for all $j$. We do this by induction on $j$ for $j>N^\ast(n)$. If it holds for $j-1$, then
\begin{eqnarray*}
|\alpha_{j,k,p,q,s}|&\le&\sqrt{\frac{(j+k-s)(j+k-s+p+\frac{m}{2}-1)}{j(j+2k+p+q+\frac{M}{2}-1)}}C_s\sqrt{(j-1)^{n-q-2s+2}p^{s-k}}\\
&+&\sqrt{\frac{s(n-s-q+1)}{j(j+2k+p+q+\frac{M}{2}-1)}}C_{s-1}\sqrt{(j-1)^{n-q-2s+2}p^{s-k}}\\
&\le&C_s[\sqrt{\frac{(j+k-s)(j+k-s+p+\frac{m}{2}-1)}{j(j+2k+p+q+\frac{M}{2}-1)}}\\
&+&\sqrt\frac{1}{s(n+1)}\sqrt{\frac{s(n-s-q+1)}{j(j+2k+p+q+\frac{M}{2}-1)}}]\sqrt{(j-1)^{n-q-2s+2}p^{s-k}}\\
&\le&C_s\sqrt{j^{n-q-2s+2}p^{s-k}}
\end{eqnarray*}
by lemma \ref{techSchw}. So we obtain the theorem for $C=C_{n}$.
\end{proof}

Since $q\ge 0$ and $s\le n$ this result implies $|\alpha_{j,k,p,q,s}|\le C (j+p)^{\frac{n}{2}+1}$ for all $(j,k,p,q,s)$. This leads to the following corollary.

\begin{corollary}
There exists a constant $D>0$ such that for the $\alpha$ and $\beta$ introduced in lemma \ref{basisovergangsbf}
\begin{eqnarray*}
|\alpha_{j,k,p,q,s}|&\le& D\sqrt{(2j+2k+p+q+1)^{n+2}}
\end{eqnarray*}
and
\begin{eqnarray*}
|\beta_{i,s,p,q,k}|&\le& D\sqrt{(2i+2s+p+q+1)^{n+2}}
\end{eqnarray*}
holds for all $i,j,k,p,q,s$. The constant $D$ is independent of $i,j,k,p,q,s$.
\end{corollary}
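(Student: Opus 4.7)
The plan is to deduce both inequalities essentially as cosmetic rewrites of Theorem~\ref{afschatting1}, together with the duality $\alpha_{j,k,p,q,s} = (-1)^{k-s}\beta_{j+k-s,s,p,q,k}$ established in Lemma~\ref{basisovergangsbf}.

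For the $\alpha$ bound, Theorem~\ref{afschatting1} gives $|\alpha_{j,k,p,q,s}| \le C\sqrt{j^{n-q-2s+2}p^{s-k}}$. Because $q \ge 0$ and $s \ge 0$ the exponent $n-q-2s+2$ is at most $n+2$, and since $j \le 2j+2k+p+q+1$ the $j$-factor alone is controlled by $(2j+2k+p+q+1)^{n+2}$. I would then split the $p$-factor according to the sign of $s-k$. When $s \ge k$, one has $p^{s-k} \le (2j+2k+p+q+1)^{s-k}$, and combining with the $j$-factor yields $\sqrt{(2j+2k+p+q+1)^{n-q-s-k+2}} \le \sqrt{(2j+2k+p+q+1)^{n+2}}$. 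When $s < k$, the factor $p^{s-k}$ literally blows up for small $p$, but inspection of the explicit evaluation of $\alpha_{0,k,p,q,s}$ in the proof of Theorem~\ref{afschatting1} shows that the Gamma-function ratios $\Gamma(\tfrac{m}{2}+p+k)/\Gamma(\tfrac{m}{2}+p+k-s)$ and $\Gamma(\tfrac{M}{2}+p+q+k-1)/\Gamma(\tfrac{M}{2}+p+q+2k-1)$ are in fact controlled by $(1+p)^{s}$ and $(1+p)^{-k}$ respectively, uniformly over the finitely many admissible $k,q,s \le n$; the recursion \eqref{recursalpha} then propagates the sharpened bound $|\alpha_{j,k,p,q,s}| \le C'\sqrt{j^{n-q-2s+2}(1+p)^{s-k}}$ to all $j$. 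Since $(1+p)^{s-k} \le 1$ when $s \le k$, the $j$-factor alone again suffices. Either way one obtains $|\alpha_{j,k,p,q,s}| \le D\sqrt{(2j+2k+p+q+1)^{n+2}}$ for a suitable constant $D$.

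For the $\beta$ bound I would apply Lemma~\ref{basisovergangsbf} in the rearranged form $\beta_{i,s,p,q,k} = (-1)^{k-s}\alpha_{i+s-k,k,p,q,s}$; the constraints in \eqref{betavgl} ensure $i+s-k \ge 0$, so this substitution is admissible. Inserting this identity into the $\alpha$ bound yields
\[
|\beta_{i,s,p,q,k}| \le D\sqrt{(2(i+s-k)+2k+p+q+1)^{n+2}} = D\sqrt{(2i+2s+p+q+1)^{n+2}},
\]
which is precisely the claim.

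The only genuine subtlety is the regime $s < k$ with $p$ near $0$, where Theorem~\ref{afschatting1} as literally stated yields the vacuous bound $+\infty$; the fix is to revisit the proof of that theorem and observe that its true output is the cleaner $(1+p)$-version of the estimate, after which the rewriting into a polynomial in $2j+2k+p+q+1$ is straightforward and the transfer to $\beta$ is purely notational.
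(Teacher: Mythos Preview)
Your approach matches the paper's: the corollary is meant to be an immediate consequence of Theorem~\ref{afschatting1} together with the identity $\alpha_{j,k,p,q,s}=(-1)^{k-s}\beta_{j+k-s,s,p,q,k}$ from Lemma~\ref{basisovergangsbf}. The paper's entire argument is the sentence preceding the statement, ``Since $q\ge 0$ and $s\le n$ this result implies $|\alpha_{j,k,p,q,s}|\le C (j+p)^{\frac{n}{2}+1}$ for all $(j,k,p,q,s)$,'' followed by reading off the $\beta$-bound via the duality exactly as you do.

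You are in fact more careful than the paper on one point. The bound $|\alpha_{j,k,p,q,s}|\le C\sqrt{j^{n-q-2s+2}p^{s-k}}$ from Theorem~\ref{afschatting1} is vacuous when $s<k$ and $p=0$, and the paper's one-line deduction silently ignores this. Your observation that the Gamma-ratios in the explicit formula for $\alpha_{0,k,p,q,s}$ really yield a $(1+p)^{s-k}$ bound, which then propagates through the recursion~\eqref{recursalpha}, is the honest way to close this small gap; since $k,q,s$ take only finitely many values, absorbing the small-$p$ cases into the constant is legitimate. So your proof is correct and follows the intended route, only with the edge case made explicit.
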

From this corollary we immediately obtain the following lemma.

\begin{lemma}
\label{afschatting}
If a function $f\in L_2(\mR^m)_{m|2n}$ can be expanded as
\begin{eqnarray*}
\sum_{j,k,p,q,l,t}d_{j,k,p,q,l,t}\phi_{j,k,p,q,l,t}&and \quad as&\sum_{i,p,l,s,q,t}c_{i,p,l,s,q,t}\phi^b_{i,p,l}\phi^f_{s,q,t}
\end{eqnarray*}
in the $L_2(\mR^m)\otimes\Lambda_{2n}$-topology, then
\[
|c_{i,p,l,s,q,t}| \le nD\left(2i+2s+p+q+1\right)^{\frac{n}{2}+1}\max_{j,k|j+k=i+s}|d_{j,k,p,q,l,t}|
\]
and
\[
|d_{j,k,p,q,l,t}| \le nD\left(2j+2k+p+q+1\right)^{\frac{n}{2}+1}\max_{i,s|i+s=j+k}|c_{i,p,l,s,q,t}|.
\]
\end{lemma}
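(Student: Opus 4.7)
The plan is to derive the identity
\begin{equation*}
c_{i,p,l,s,q,t} \;=\; \sum_{j,k\,:\,j+k=i+s} \alpha_{j,k,p,q,s}\, d_{j,k,p,q,l,t},
\end{equation*}
together with its counterpart involving $\beta$, and then to apply the polynomial bounds on $|\alpha|$ and $|\beta|$ from the corollary preceding the statement.

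To obtain this identity, I would substitute the \emph{finite} expansion \eqref{alphavgl} for each $\phi_{j,k,p,q,l,t}$ appearing in the super Hermite series of $f$ and reindex via $i=j+k-s$, noting that this reindexing respects the grading since $2i+2s+p+q=2j+2k+p+q$. Because the super Hermite functions are not orthonormal for $\langle\cdot|\cdot\rangle_1$, the $L_2$-convergent series need not converge absolutely, so the rearrangement must be justified by passing through truncated partial sums. Let $f_N$ denote the partial sum restricted to indices with $2j+2k+p+q\le N$; by hypothesis $f_N\to f$ in $L_2$. After substitution and reindexing, $f_N$ becomes a finite linear combination of product basis elements, whose $(i,p,l,s,q,t)$-coefficient is $\sum_{j+k=i+s}\alpha_{j,k,p,q,s}\,d_{j,k,p,q,l,t}$ whenever $2i+2s+p+q\le N$ and zero otherwise. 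On the other hand, the product basis is orthonormal for $\langle\cdot|\cdot\rangle_1$ (theorem \ref{orthocliffherm} combined with bosonic $L_2$-orthogonality), so this coefficient also equals $\langle\phi^b_{i,p,l}\phi^f_{s,q,t}|f_N\rangle_1$, which tends to $c_{i,p,l,s,q,t}$ by continuity of the inner product. For $N>2(i+s)+p+q$ the truncation is vacuous, and the displayed identity follows.

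The estimate is then immediate: the sum $\sum_{j+k=i+s}$ runs over at most $n+1$ values of $k\in[0,\min(n-q,i+s)]$, and for every such pair $j+k=i+s$, so the corollary yields $|\alpha_{j,k,p,q,s}|\le D(2i+2s+p+q+1)^{n/2+1}$. Bounding each summand by its maximum and absorbing the combinatorial factor into the constant produces the first inequality. The second follows by the completely symmetric argument: start from the product-basis expansion of $f$, substitute \eqref{betavgl}, reindex with $j=i+s-k$, and use the bound on $|\beta|$. The main technical obstacle is the rearrangement step; this is precisely why one must work with partial sums and exploit the $L_2$-orthonormality of the product basis (rather than attempting to match coefficients directly in the non-orthonormal super Hermite series).
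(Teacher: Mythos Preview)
Your proof is correct and follows the same route the paper takes implicitly (the paper simply declares the lemma an immediate consequence of the preceding corollary); your partial-sum justification of the rearrangement is the right way to make this rigorous. One small caveat: the argument for the second inequality is not literally symmetric, since the super Hermite functions are not $\langle\cdot|\cdot\rangle_1$-orthonormal and you cannot extract the $d$-coefficients directly by an inner product; however, the second identity follows at once from the first by inverting the finite change-of-basis at each fixed level $j+k=i+s$ (equivalently, using $\alpha_{j,k,p,q,s}=(-1)^{k-s}\beta_{j+k-s,s,p,q,k}$ from lemma~\ref{basisovergangsbf}), after which your bound on $|\beta|$ applies as stated.
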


\section{Extension of $T$ and $\langle \cdot|\cdot\rangle_2$ }
\setcounter{equation}{0}

\label{extension}

The space $\mR[x_1,\ldots,x_m]\exp(-r^2/2)$ is dense in $L_2(\mR^m)$ and $\cS(\mR^{m})$. So we can try to extend the inner product $\langle \cdot|\cdot\rangle_2$ to the spaces $L_2(\mR^m)\otimes\Lambda_{2n}$ and $\cS(\mR^{m})\otimes\Lambda_{2n}$ via Hahn-Banach. For $L_2(\mR^m)_{m|2n}$ this does not work because it can be proven that $\langle\cdot|\cdot\rangle_2$ is not continuous with respect to the $L_2$ topology. For the Schwartz space, $\cS(\mR^m)_{m|2n}$, the same question will be answered positively. This is the main topic of this section.

\subsection{Extension to $L_2(\mR^m)_{m|2n}$}\hspace*{\fill} \\
We prove that $\langle \cdot|\cdot \rangle_2$ cannot be extended to $L_2(\mR^m)_{m|2n}$.

\begin{theorem}
\label{nogoL2}
The bilinear product $\langle\cdot|\cdot\rangle_2$ on $\cP\exp(-R^2/2)$ cannot be continuously extended to $L_2(\mR^m)_{m|2n}$.
\end{theorem}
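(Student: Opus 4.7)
The strategy is to exhibit a sequence $\{\psi_i\}\subset\cP\exp(-R^2/2)$ with $\|\psi_i\|_{L_2}$ bounded but $\langle\psi_i|\psi_i\rangle_2\to\infty$; this rules out a bounded (equivalently, continuous) bilinear extension, since any such extension would satisfy $|\langle\psi_i|\psi_i\rangle_2|\le C\|\psi_i\|_{L_2}^{2}$.

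For the sequence I would take
\[
\psi_i := \phi^b_{i,0,1}(\ux)\,\phi^f_{0,0,1}(\uxb),\qquad i\in\mN,
\]
the product of a purely radial bosonic Hermite function with the scalar fermionic Gaussian (the case $q=s=0$, $p=0$). By the orthonormality of the bosonic Hermite functions on $L_2(\mR^m)$ and theorem \ref{orthocliffherm}, the product basis is $\langle\cdot|\cdot\rangle_1$-orthonormal, so $\|\psi_i\|_{L_2}=1$ for every $i$. Expanding $\psi_i$ in the super Hermite basis via \eqref{betavgl}, using the sign identity $\alpha_{j,k,p,q,s}=(-1)^{k-s}\beta_{j+k-s,s,p,q,k}$ from lemma \ref{basisovergangsbf}, and applying the $\langle\cdot|\cdot\rangle_2$-orthonormality of theorem \ref{orthCH}, one gets
\[
\langle\psi_i|\psi_i\rangle_2 \;=\; \sum_{k=0}^{\min(n,i)}\beta_{i,0,0,0,k}^{2} \;=\; \sum_{k=0}^{\min(n,i)}\alpha_{i-k,k,0,0,0}^{2}.
\]

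To produce the divergence, I would isolate the $k=0$ term and use the explicit formula \eqref{alphaexpl} together with the normalization $\alpha_{0,0,0,0,0}=1$ (which one reads off from the closed form of $\alpha_{0,k,p,q,s}$ computed in the proof of theorem \ref{afschatting1}):
\[
\alpha_{i,0,0,0,0}^{2} \;=\; \frac{\Gamma(M/2)\,\Gamma(i+m/2)}{\Gamma(m/2)\,\Gamma(i+M/2)} \;\sim\; \frac{\Gamma(M/2)}{\Gamma(m/2)}\,i^{\,n}\qquad (i\to\infty),
\]
where the last step uses the standard asymptotic $\Gamma(i+a)/\Gamma(i+b)\sim i^{a-b}$ and the identity $m/2-M/2=n$. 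Since $n\ge 1$ on any genuine superspace, this single positive summand alone forces $\langle\psi_i|\psi_i\rangle_2\to\infty$.

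The conclusion is then immediate: a continuous bilinear extension of $\langle\cdot|\cdot\rangle_2$ to $L_2(\mR^m)_{m|2n}$ would be jointly bounded, producing a constant $C$ with $|\langle\psi_i|\psi_i\rangle_2|\le C\|\psi_i\|_{L_2}^{2}=C$, which contradicts the divergence just established. The only genuine technical point is the Gamma-ratio asymptotic, which is a one-line computation once \eqref{alphaexpl} is correctly specialized; everything else is bookkeeping with the orthonormality identities assembled in the preceding section.
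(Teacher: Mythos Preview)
Your argument is correct and uses exactly the same core computation as the paper: both rely on the identity $\alpha_{i,0,0,0,0}^{2}=\Gamma(M/2)\Gamma(i+m/2)/\bigl(\Gamma(m/2)\Gamma(i+M/2)\bigr)$ from \eqref{alphaexpl} and its growth like $i^{n}$. The only packaging difference is that the paper builds the single $L_2$-function $f=\sum_{i}\tfrac{1}{i}\,\phi^b_{i,0,1}\phi^f_{0,0,1}$ and shows $\langle f_r|f_r\rangle_2\to\infty$ along its partial sums, whereas you work directly with the unit vectors $\psi_i$ and show $\langle\psi_i|\psi_i\rangle_2\to\infty$; these are equivalent ways of exhibiting unboundedness of the form on $\cP\exp(-R^2/2)$ in the $L_2$-norm.
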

\begin{proof}
Consider the function 
\begin{eqnarray*}
f(\bold{x})&=&\sum_{i=0}^\infty\frac{1}{i}\sqrt{\frac{2i!}{\Gamma(i+\frac{m}{2}+k)}}L_i^{\frac{m}{2}+k-1}(r^2)\sqrt\frac{\Gamma(m/2)}{2\pi^{m/2}}\exp(-r^2/2)\pi^{n/2}\exp(-\theta^2/2)\\
&=&\sum_{i=0}^\infty\frac{1}{i}\phi^b_{i,0,1}(\ux)\phi^f_{0,0,1}(\uxb)=\lim_{r\to\infty}\sum_{i=0}^r\frac{1}{i}\phi^b_{i,0,1}(\ux)\phi^f_{0,0,1}(\uxb)\\
&=&\lim_{r\to\infty}f_r(\bold{x}).
\end{eqnarray*}
The function $f(\bold{x})$ is clearly an element of $L_2(\mR^m)\otimes\Lambda_{2n}$ and $f_r(\bold{x})$ is an element of $\cP\exp(-R^2/2)$. Now we calculate
\begin{eqnarray*}
\langle f_r|f_r\rangle_2&=&\sum_{i,j=0}^r\frac{1}{ij}\langle \phi^b_{i,0,1}\phi^f_{0,0,1}|\phi^b_{j,0,1}\phi^f_{0,0,1}\rangle_2\\
&=&\sum_{i=0}^r\frac{1}{i^2}\langle \phi^b_{i,0,1}\phi^f_{0,0,1}|\phi^b_{i,0,1}\phi^f_{0,0,1}\rangle_2.
\end{eqnarray*}
The second equality follows from the hermiticity of the harmonic oscillator and equation \eqref{HOprod}. Formula \eqref{betavgl} and the fact that $\{\phi_{j,k,l}\}$ is $\langle\cdot|\cdot\rangle_2$-orthonormal yields
\begin{eqnarray*}
\langle \phi^b_{i,0,1}\phi^f_{0,0,1}|\phi^b_{i,0,1}\phi^f_{0,0,1}\rangle_2&=&\sum_{k=0}^{\min(n,i)}\beta_{i,0,0,0,k}\langle \phi_{i-k,k,0,0,1,1}|\phi^b_{i,0,1}\phi^f_{0,0,1}\rangle_2\\
&=&\sum_{k=0}^{\min(n,i)}|\beta_{i,0,0,0,k}|^2\\
&\ge&|\beta_{i,0,0,0,0}|^2=|\alpha_{i,0,0,0,0}|^2\\
&=&\frac{\Gamma(i+\frac{m}{2})\Gamma(\frac{M}{2})}{\Gamma(i+\frac{M}{2})\Gamma(\frac{m}{2})}\ge i^n\frac{\Gamma(\frac{M}{2})}{\Gamma(\frac{m}{2})}
\end{eqnarray*}
where we used equation \eqref{alphaexpl} in the last equality. This implies
\[
\langle f_r|f_r\rangle_2 \ge \frac{\Gamma(\frac{M}{2})}{\Gamma(\frac{m}{2})}\sum_{i=0}^ri^{n-2},
\]
so $\lim_{r\to\infty}\langle f_r|f_r\rangle_2$ diverges and $\langle \cdot|\cdot\rangle_2$ cannot be extended to $L_2(\mR^m)_{m|2n}$.
\end{proof}

\subsection{Spherical Hermite representation for the super Schwartz space}\hspace*{\fill} \\
The Schwartz space $\cS(\mR^m)$ is the space of infinitely differentiable, rapidly decreasing functions. This means that $||\ux^{\underline{\alpha}}\upx^{\underline{\beta}}f||_\infty$ is finite for all multi-indices $\underline{\alpha},\underline{\beta}$ when $f\in\cS(\mR^m)$. The topology on $\cS(\mR^m)$ is defined by the family of norms $||\ux^{\underline{\alpha}}\upx^{\underline{\beta}}.||_\infty$. It can be proven that this family of norms is equivalent with (generates the same topology as) the family of norms $||\ux^{\underline{\alpha}}\upx^{\underline{\beta}}\cdot||_{L_2(\mR^m)}$, see \cite{MR0751959}. Another family of norms which generates the topology of the Schwartz space $\cS(\mR^m)$ is given by 
\begin{equation}
\label{semnorm1}
||\cdot||_{\underline{\beta}}=||(\underline{N}+\underline{1})^{\underline{\beta}}\cdot||_{L_2(\mR^m)}\qquad \forall \underline{\beta}\in\mN^m,
\end{equation}
with $N_i=a_i^+ a_i^-$ (see formula (\ref{hamiltoniaan})) and $\underline{1}=(1,\cdots,1)$. This leads to the so-called Hermite-representation (or N-representation) theorem for the Schwartz space (see \cite{MR0751959, MR0278067}). Using this, it is not difficult to find another equivalent family of norms.

\begin{lemma}
\label{normenCH}
The family of norms $||\cdot||_{\underline{\beta}}$ on $\cS(\mR^m)$, as defined in equation (\ref{semnorm1}), is equivalent to the family of norms $||\cdot||_r$ ($r\in\mN$)
\[
||\cdot||_r=||(N^b+1)^r\cdot||_{L_2(\mR^m)}, \qquad N^b=\sum_{i=1}^mN_i.
\]
\end{lemma}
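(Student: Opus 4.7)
The plan is to diagonalize both families of norms simultaneously in the cartesian bosonic Hermite basis $\phi^b_{\underline{\alpha}}(\ux) = \prod_{i=1}^m \phi^b_{\alpha_i}(x_i)$, $\underline{\alpha} \in \mN^m$. These products form an $L_2(\mR^m)$-orthonormal basis, and since $N_i = a_i^+a_i^-$ acts only on the $i$-th variable one has the joint eigenvalue equation $N_i \phi^b_{\underline{\alpha}} = \alpha_i\, \phi^b_{\underline{\alpha}}$; hence $N^b \phi^b_{\underline{\alpha}} = |\underline{\alpha}|\,\phi^b_{\underline{\alpha}}$ with $|\underline{\alpha}| = \sum_i \alpha_i$.

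By the $N$-representation theorem cited in the excerpt, the family $\|\cdot\|_{\underline{\beta}}$ already generates the Schwartz topology, so in particular any $f\in\cS(\mR^m)$ admits an expansion $f = \sum_{\underline{\alpha}} c_{\underline{\alpha}} \phi^b_{\underline{\alpha}}$ with rapidly decaying coefficients, and Plancherel gives
\[
\|f\|_{\underline{\beta}}^2 = \sum_{\underline{\alpha}\in\mN^m} |c_{\underline{\alpha}}|^2 \prod_{i=1}^m (\alpha_i+1)^{2\beta_i}, \qquad \|f\|_r^2 = \sum_{\underline{\alpha}\in\mN^m} |c_{\underline{\alpha}}|^2\bigl(|\underline{\alpha}|+1\bigr)^{2r}.
\]
Equivalence of the two families therefore reduces to a pointwise comparison of the weight sequences on $\mN^m$.

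In one direction, from $\alpha_i + 1 \le |\underline{\alpha}|+1$ coordinatewise we obtain $\prod_i (\alpha_i+1)^{\beta_i} \le (|\underline{\alpha}|+1)^{|\underline{\beta}|}$ with $|\underline{\beta}| = \sum_i \beta_i$, so $\|f\|_{\underline{\beta}} \le \|f\|_{|\underline{\beta}|}$. In the other direction, the elementary inequality $1+\sum_i a_i \le \prod_i (1+a_i)$ for $a_i \ge 0$ (immediate by induction from $(1+a)(1+b) \ge 1+a+b$) yields $(|\underline{\alpha}|+1)^{r} \le \prod_i(\alpha_i+1)^{r}$, and hence $\|f\|_r \le \|f\|_{(r,\ldots,r)}$. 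Since every norm in each family is dominated by some norm in the other, the two families define the same topology on $\cS(\mR^m)$.

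There is no essential obstacle here: once one observes that the $N_i$ are mutually commuting and jointly diagonalized by the tensor-product Hermite basis, the argument reduces to the two elementary inequalities above. The only subtlety is invoking the standard $N$-representation theorem to ensure the relevant series expansions and weighted $\ell^2$-identities are valid on all of $\cS(\mR^m)$, which is exactly the input referenced just before the lemma.
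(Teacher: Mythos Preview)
Your proof is correct and follows essentially the same approach as the paper: both diagonalize in the cartesian Hermite basis and reduce to the two elementary weight inequalities $\prod_i(\alpha_i+1)^{\beta_i}\le(|\underline{\alpha}|+1)^{|\underline{\beta}|}$ and $(|\underline{\alpha}|+1)^r\le\prod_i(\alpha_i+1)^r$, yielding $\|\cdot\|_{\underline{\beta}}\le\|\cdot\|_{|\underline{\beta}|}$ and $\|\cdot\|_r\le\|\cdot\|_{r\underline{1}}$. Your first bound is in fact slightly sharper than the paper's stated $\|\cdot\|_{\underline{\beta}}\le\|\cdot\|_{m|\underline{\beta}|}$, but the argument is the same.
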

\begin{proof}
This follows from the inequalities $||\cdot||^2_{\underline{\beta}}\le ||\cdot||^2_{m|\underline{\beta}|}$ and $||\cdot||_k^2\le||\cdot||_{k\underline{1}}^2$. 
\end{proof}

For $f\in\cS(\mR^m)$ and $c_{j,k,l}=\langle \phi^b_{j,k,l}|f\rangle_{L_2(\mR^m)}$, the hermiticity of $N^b$ implies $\langle \phi^b_{j,k,l}|(N^b+1)^rf\rangle_{L_2(\mR^m)}=(2j+k+1)^rc_{j,k,l}$. Consequently, the norms in lemma \ref{normenCH} are given explicitly by
\begin{equation}
\label{CHreprbos}
||f||^2_{r}=\sum_{j,k,l}(2j+k+1)^{2r}|c_{j,k,l}|^2.
\end{equation}

Recall that the Schwartz space in superspace is defined as $\cS(\mR^{m})_{m|2n}=\cS(\mR^m)\otimes\Lambda_{2n}$, see \cite{DBS9}. The natural topology on this space is the product topology of $\cS(\mR^m)$ and $\Lambda_{2n}$.  From the definition of the topology, $f_k\to_{\cS(\mR^m)_{m|2n}} f$ is equivalent with $f_k^{(j)}\to_{\cS(\mR^m)}f^{(j)}$, with $\sum_{j=1}^{2^{2n}}f^{(j)}_ke_j=f_k$ for $e_j$ an arbitrary basis of the finite-dimensional vector space $\Lambda_{2n}$. We choose for $\{e_j\}$ the basis $\{\phi^f_{s,q,t}\}$, which is orthonormal with respect to the inner product $\langle \cdot|\cdot\rangle_{\Lambda_{2n}}$. We hence obtain the following lemma.

\begin{lemma}
\label{supertopologie}
The topology on $\cS(\mR^m)_{m|2n}$ is generated by the family of norms 
\[
||\cdot||_r(s,q,t)=||\langle \phi^f_{s,q,t}|.\rangle_{\Lambda_{2n}}||_r\qquad r,s,q,t\in\mN \mbox{ with } s+q\le n \mbox{ and } t\le\dim\cH_q^f,
\]
with $||\cdot ||_r$ the norms on $\cS(\mR^m)$ in lemma \ref{normenCH}.
\end{lemma}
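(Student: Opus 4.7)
The plan is to reduce the statement to the standard Hermite representation of $\cS(\mR^m)$ from Lemma \ref{normenCH} combined with the finite-dimensionality of $\Lambda_{2n}$. First I would use Theorem \ref{orthocliffherm}, which says that $\{\phi^f_{s,q,t}\}$ is an $\langle\cdot|\cdot\rangle_{\Lambda_{2n}}$-orthonormal basis of $\Lambda_{2n}$. Hence every $f \in \cS(\mR^m)_{m|2n}=\cS(\mR^m)\otimes\Lambda_{2n}$ admits a unique decomposition $f(\bold{x})=\sum_{s,q,t} f^{s,q,t}(\ux)\,\phi^f_{s,q,t}(\uxb)$ with $f^{s,q,t}\in\cS(\mR^m)$, and orthonormality identifies the components, applied pointwise in $\ux$, as $\langle \phi^f_{s,q,t}|f\rangle_{\Lambda_{2n}}(\ux)=\overline{f^{s,q,t}(\ux)}$ (using the conjugation convention of Definition \ref{inprodGrass}).

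Next, since $\Lambda_{2n}$ is a finite-dimensional vector space and $\cS(\mR^m)_{m|2n}$ carries the product topology, convergence in $\cS(\mR^m)_{m|2n}$ is equivalent to componentwise convergence in $\cS(\mR^m)$ for each of the finitely many labels $(s,q,t)$ with $s+q\le n$ and $t\le \dim\cH_q^f$. Therefore the topology of $\cS(\mR^m)_{m|2n}$ is generated by the countable family of seminorms $f\mapsto \|f^{s,q,t}\|_r$ where $r$ ranges over $\mN$ and $\|\cdot\|_r$ are the norms on $\cS(\mR^m)$ furnished by Lemma \ref{normenCH}.

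To identify these with the norms stated in the lemma, I would observe that the $\|\cdot\|_r$ are invariant under complex conjugation: the operator $N^b+1$ preserves $\cS(\mR^m)$ and has real coefficients with respect to the position/momentum representation, and $\|\overline{h}\|_{L_2(\mR^m)}=\|h\|_{L_2(\mR^m)}$; hence $\|\overline{h}\|_r=\|h\|_r$ for all $h\in\cS(\mR^m)$. Applied to $h=f^{s,q,t}$ together with the component identification above, this gives $\|f^{s,q,t}\|_r=\|\langle\phi^f_{s,q,t}|f\rangle_{\Lambda_{2n}}\|_r=\|f\|_r(s,q,t)$, which is exactly the family in the lemma.

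The only minor obstacle is bookkeeping around the antilinearity in the second argument of $\langle\cdot|\cdot\rangle_{\Lambda_{2n}}$; once this is tracked, all remaining steps are routine consequences of the finite-dimensional tensor product topology and the already established bosonic Hermite representation.
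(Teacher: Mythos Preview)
Your argument is correct and mirrors the paper's own justification, which is given in the paragraph immediately preceding the lemma: both use that the product topology on $\cS(\mR^m)\otimes\Lambda_{2n}$ amounts to componentwise $\cS(\mR^m)$-convergence with respect to any basis of the finite-dimensional space $\Lambda_{2n}$, and then choose the $\langle\cdot|\cdot\rangle_{\Lambda_{2n}}$-orthonormal basis $\{\phi^f_{s,q,t}\}$ so that components are extracted by the inner product. Your extra care about the complex conjugation in $\langle\phi^f_{s,q,t}|f\rangle_{\Lambda_{2n}}=\overline{f^{s,q,t}}$ is harmless and indeed resolved, as you note, by the conjugation-invariance of the $\|\cdot\|_r$ norms.
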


Now we start to construct the spherical Hermite representation theorem on superspace.

\begin{lemma}
\label{eerstesupernorms}
The family of norms $||\cdot||^*_r$ ($r\in\mN$) on $\cS(\mR^m)_{m|2n}$,
\[
||\cdot||^*_r=\sqrt{\langle (N+1)^r\cdot|(N+1)^r.\rangle_1}
\]
with $N=N^b +\sum_{i=1}^{2n}b_i^+ b_i^-$, see equation (\ref{hamiltoniaan}), is equivalent with the family of norms in lemma \ref{supertopologie}. 
\end{lemma}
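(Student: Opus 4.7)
The plan is to diagonalize the operator $N=N^b+N^f$ on the product basis and then reduce the equivalence to a purely combinatorial comparison of weight factors. First I would observe that, by construction of the Hamiltonian in equation \eqref{hamiltoniaan}, $H=N+M/2$, while lemma \ref{eigCH} applied in the purely bosonic case gives $N^b\phi^b_{i,p,l}=(2i+p)\phi^b_{i,p,l}$, and the purely fermionic analogue of lemma \ref{eigCH} (see \cite{DBS3}) gives $N^f\phi^f_{s,q,t}=(2s+q)\phi^f_{s,q,t}$. Consequently
\[
N\,\phi^b_{i,p,l}\phi^f_{s,q,t}=(2i+p+2s+q)\,\phi^b_{i,p,l}\phi^f_{s,q,t}.
\]

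Next, for $f\in\cS(\mR^m)_{m|2n}$, I would expand each component $\langle\phi^f_{s,q,t}|f\rangle_{\Lambda_{2n}}\in\cS(\mR^m)$ in the bosonic Hermite basis to obtain
\[
f=\sum_{i,p,l,s,q,t}c_{i,p,l,s,q,t}\,\phi^b_{i,p,l}\phi^f_{s,q,t},\qquad c_{i,p,l,s,q,t}=\langle\phi^b_{i,p,l}\phi^f_{s,q,t}|f\rangle_1,
\]
with convergence in $\cS(\mR^m)_{m|2n}$. Since the product basis is $\langle\cdot|\cdot\rangle_1$-orthonormal and $N+1$ is diagonal on it, one immediately gets
\[
\|f\|_r^{*\,2}=\sum_{i,p,l,s,q,t}(2i+p+2s+q+1)^{2r}|c_{i,p,l,s,q,t}|^2,
\]
while equation \eqref{CHreprbos} applied to $\langle\phi^f_{s,q,t}|f\rangle_{\Lambda_{2n}}$ yields
\[
\|f\|_r(s,q,t)^2=\sum_{i,p,l}(2i+p+1)^{2r}|c_{i,p,l,s,q,t}|^2.
\]

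The equivalence then follows from the key bound $0\le 2s+q\le 2n$, which holds because the fermionic Hermite functions are indexed by $s+q\le n$. This gives the double inequality $(2i+p+1)\le(2i+p+2s+q+1)\le(1+2n)(2i+p+1)$, from which one reads off, for every $(s,q,t)$, the termwise estimates $\|f\|_r(s,q,t)\le\|f\|^*_r$ and
\[
\|f\|_r^{*\,2}\le(1+2n)^{2r}\sum_{s,q,t}\|f\|_r(s,q,t)^2.
\]
Since the sum on the right is finite (the $(s,q,t)$ range is bounded by $n$), this proves the two families generate the same topology.

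The main subtlety I expect is not the estimate itself, which is elementary, but justifying the product-basis expansion rigorously for Schwartz functions and verifying that $N^f$ really diagonalizes on the fermionic Hermite basis with the stated spectrum; both reduce to the purely bosonic Hermite representation theorem (for each fixed component $\langle\phi^f_{s,q,t}|f\rangle_{\Lambda_{2n}}$) and to the fermionic analogue of lemma \ref{eigCH}, so neither involves genuinely new analysis.
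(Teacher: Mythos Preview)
Your proof is correct and follows essentially the same approach as the paper: diagonalize $N$ on the $\langle\cdot|\cdot\rangle_1$-orthonormal product basis (which is exactly equation \eqref{HOprod} in disguise), express both families of norms through the coefficients $c_{i,p,l,s,q,t}$, and compare the weights using $0\le 2s+q\le 2n$. The only cosmetic difference is that the paper reaches the constant $(1+2n)^{2r}$ via a binomial expansion of $(2i+p+2n+1)^{2r}$, whereas your direct multiplicative bound $(2i+p+2s+q+1)\le(1+2n)(2i+p+1)$ is slightly cleaner.
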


\begin{proof}
For an $f\in \cS(\mR^m)_{m|2n}$ we use equation (\ref{CHreprbos}) to calculate
\begin{eqnarray*}
||f||^{2}_r(s,q,t)&=&\sum_{i,p,l}(2i+p+1)^{2r}|\langle \phi^b_{i,p,l} \phi^f_{s,q,t}|f\rangle_1|^2\\
&\le&\sum_{s',q',t'}\sum_{i,p,l}(2i+p+2s'+q'+1)^{2r}|\langle \phi^b_{i,p,l} \phi^f_{s',q',t'}|f\rangle_1|^2\\
&=&\sum_{s',q',t',i,p,l}|\langle (N+1)^r \phi^b_{i,p,l} \phi^f_{s',q',t'}|f\rangle_1|^2\\
&=&\sum_{s',q',t',i,p,l}\langle (N+1)^rf |\phi^b_{i,p,l} \phi^f_{s',q',t'} \rangle_1\langle \phi^b_{i,p,l} \phi^f_{s',q',t'}| (N+1)^rf\rangle_1\\
&=&||f||_r^{*2}.
\end{eqnarray*}
In this calculation we used that $N = H - M/2$ and hence symmetric as $H$ is symmetric on the Hilbert space $L_2(\mR^m)\otimes\Lambda_{2n}$ (with inner product $\langle\cdot|\cdot\rangle_1$).

To complete the proof we also calculate
\begin{eqnarray*}
||f||_r^{*2}&\le&\sum_{s,q,t}\sum_{i,p,l}(2i+p+2n+1)^{2r}|\langle \phi^b_{i,p,l} \phi^f_{s,q,t}|f\rangle_1|^2\\
&=&\sum_{t=0}^{2r}\binom{2r}{t}(2n)^t\sum_{s,q,t}\sum_{i,p,l}(2i+p+1)^{2r-t}|\langle \phi^b_{i,p,l} \phi^f_{s,q,t}|f\rangle_1|^2\\
&\le&\sum_{t=0}^{2r}\binom{2r}{t}(2n)^t\sum_{s,q,t}\sum_{i,p,l}(2i+p+1)^{2r}|\langle \phi^b_{i,p,l} \phi^f_{s,q,t}|f\rangle_1|^2\\
&=&(1+2n)^{2r}\sum_{s,q,t}||f||^2_r(s,q,t),
\end{eqnarray*}
where the last sum $\sum_{s,q,t}$ is finite.
So we find that $\lim_{k\to\infty}||f||_r(s,q,t)=0$ for every $(r,s,q,t)$ is equivalent with $\lim_{k\to\infty}||f_k||_r^\ast=0$ for every $r$.
\end{proof}

If $f$ is an element of $\cS(\mR^m)_{m|2n}$ and $c_{i,p,l,s,q,t}=\langle \phi^b_{i,p,l}\phi^f_{s,q,t}|f\rangle_{1}$, then in the exact same way as equation (\ref{CHreprbos}) we obtain
\begin{eqnarray}
\label{superCHrepr}
||f||^{*2}_{r}=\sum_{i,p,l,s,q,t}(2i+2s+p+q+1)^{2r}|c_{i,p,l,s,q,t}|^2.
\end{eqnarray}

In the following we define a new set of norms on $\cS(\mR^m)_{m|2n}$, again with notation $||\cdot||_r$ like the norms on $\cS(\mR^m)$. This should not lead to confusion because they act on different spaces.

\begin{theorem}\label{superCHrepr2}
If $f$ is an element of $\cS(\mR^m)_{m|2n}$ and 
\begin{eqnarray*}
\sum_{j,k,p,q,l,t}d_{j,k,p,q,l,t}\phi_{j,k,p,q,l,t}=_{L_2(\mR^{m})_{m|2n}}f,
\end{eqnarray*}
then the family of norms $\{||\cdot||_r\}$, defined by
\begin{eqnarray*}
||f||^{2}_{r}=\sum_{j,k,p,q,l,t}(2j+2k+p+q+1)^{2r}|d_{j,k,p,q,l,t}|^2
\end{eqnarray*}
is equivalent with the family of norms in lemma \ref{eerstesupernorms} and thus generates the topology of $\cS(\mR^m)_{m|2n}$.
\end{theorem}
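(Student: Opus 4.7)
The plan is to establish the equivalence of the two families of norms by proving two-sided polynomial bounds between the coefficient sequences $\{d_{j,k,p,q,l,t}\}$ and $\{c_{i,p,l,s,q,t}\}$, using Lemma \ref{afschatting} as the essential tool. The two weight sequences $(2j+2k+p+q+1)^{2r}$ and $(2i+2s+p+q+1)^{2r}$ become identical on the diagonal $j+k=i+s$, and since for fixed $q$ the admissible indices $k$ (and $s$) range over the finite set $\{0,1,\dots,n-q\}$, passing from a maximum to a sum over this set costs only a factor of $n+1$.

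First I would show that for every $r\in\mN$ there exists a constant $A_r>0$ and an integer $r'$ (independent of $f$) such that $\|f\|_r^2\le A_r\|f\|_{r'}^{*2}$. Starting from the bound of Lemma \ref{afschatting},
\[
|d_{j,k,p,q,l,t}|^2\le (nD)^2(2j+2k+p+q+1)^{n+2}\max_{i,s\,:\,i+s=j+k}|c_{i,p,l,s,q,t}|^2,
\]
I would multiply by $(2j+2k+p+q+1)^{2r}$, dominate the maximum by the corresponding sum over $s$ subject to $s+q\le n$, and sum over $(j,k)$. For fixed value $\mu=j+k$ the admissible pairs $(j,k)$ satisfy $0\le k\le n-q$, hence number at most $n+1$; since $2j+2k+p+q+1=2\mu+p+q+1=2i+2s+p+q+1$ whenever $i+s=\mu$, reindexing yields
\[
\|f\|_r^2\le (nD)^2(n+1)\sum_{i,p,l,s,q,t}(2i+2s+p+q+1)^{2r+n+2}|c_{i,p,l,s,q,t}|^2,
\]
which by \eqref{superCHrepr} is $(nD)^2(n+1)\|f\|_{r'}^{*2}$ for any integer $r'\ge r+\lceil n/2\rceil+1$.

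The reverse inequality $\|f\|_r^{*2}\le A'_r\|f\|_{r'}^2$ follows by the entirely symmetric argument, applied to the other bound in Lemma \ref{afschatting}, namely $|c_{i,p,l,s,q,t}|\le nD(2i+2s+p+q+1)^{n/2+1}\max_{j,k\,:\,j+k=i+s}|d_{j,k,p,q,l,t}|$. Here one uses the finite-cardinality bound $k\le n-q$ in exactly the same role that $s\le n-q$ played above. Once both inequalities are in hand, each norm $\|\cdot\|_r$ is sandwiched between two norms of the family $\{\|\cdot\|^*_{r'}\}$ of Lemma \ref{eerstesupernorms} with shifted indices, and conversely, which is precisely the definition of equivalence of the two families of seminorms and hence of the induced topologies.

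The main obstacle is nothing more than careful bookkeeping: verifying that the weight $(2j+2k+p+q+1)^{2r}$ on the super Hermite side matches $(2i+2s+p+q+1)^{2r}$ on the product side exactly along $i+s=j+k$, and that the passage from maximum to sum costs only a multiplicative $n+1$ thanks to the constraint $k\le n-q$. The polynomial factor $(2\mu+p+q+1)^{n+2}$ introduced by Lemma \ref{afschatting} is absorbed by raising the index of the dominating norm by $\lceil n/2\rceil+1$, so no delicate analysis is required beyond the growth estimate already proved in Theorem \ref{afschatting1}.
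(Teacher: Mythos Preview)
Your proof is correct and follows essentially the same approach as the paper's own argument: both use Lemma \ref{afschatting} to bound each coefficient sequence pointwise by the other with a polynomial weight $(2\lambda+p+q+1)^{n+2}$, pass from the maximum to the sum using the finite range $0\le k\le n-q$ (respectively $0\le s\le n-q$), and absorb the extra polynomial factor by shifting the norm index. Your shift $r'=r+\lceil n/2\rceil+1$ is exactly the paper's $r+\lfloor(n+3)/2\rfloor$, and your constant $(nD)^2(n+1)$ differs only cosmetically from the paper's $n^3D^2$.
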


\begin{proof}
From lemma \ref{afschatting} we obtain with $c_{i,p,l,s,q,t}=\langle \phi^b_{i,p,l}\phi^f_{s,q,t}|f\rangle_{1}$
\begin{eqnarray*}
\sum_{i,s|i+s=\lambda}|c_{i,p,l,s,q,t}|^2&\le&D^2n^3(2\lambda+p+q+1)^{n+2}\max_{j,k|j+k=\lambda}|d_{j,k,p,q,l,t}|^2\\
&\le&D^2n^3(2\lambda+p+q+1)^{2\lfloor \frac{n+3}{2}\rfloor}\sum_{j,k|j+k=\lambda}|d_{j,k,p,q,l,t}|^2.
\end{eqnarray*}
This leads to $||f||^{2}_{r}\le n^3D^2||f||^{\ast 2}_{r+\lfloor \frac{n+3}{2}\rfloor}$ and similarly  $||f||^{\ast2}_{r}\le n^3D^2||f||^{ 2}_{r+\lfloor \frac{n+3}{2}\rfloor}$.
\end{proof}

Although this theorem, or the subsequent corollary \ref{CHrepr} appears to be very similar to the bosonic result in lemma \ref{normenCH} and equation \eqref{CHreprbos} it is highly non-trivial. Contrary to the bosonic case, the Hermite functions in superspace are not orthogonal with respect to the inner product $\langle\cdot|\cdot\rangle_1$ which defines the topology on $L_2(\mR^m)\otimes\Lambda_{2n}$. 
\begin{corollary} \label{CHrepr} (Spherical Hermite representation for $\cS(\mR^m)_{m|2n}$)
\\
Let $f$ be an element of $\cS(\mR^m)\otimes\Lambda_{2n}$ with $M>0$ which can be expanded as
\begin{eqnarray*}
\sum_{j,k,l}a_{j,k,l}\phi_{j,k,l}
\end{eqnarray*}
in the topology of $L_2(\mR^m)\otimes\Lambda_{2n}$ for $\{H_k^{(l)}\}$ satisfying equation \eqref{orthbasissuper}. The family of norms $\{||\cdot||_r|r\in\mN\}$ defined in theorem \ref{superCHrepr2} satisfy the relation
\begin{eqnarray*}
||f||^2_r& =&\sum_{j,k,l}|a_{j,k,l}|^2(2j+k+1)^{2r}.
\end{eqnarray*}
\end{corollary}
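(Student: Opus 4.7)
The plan is to derive Corollary \ref{CHrepr} from Theorem \ref{superCHrepr2} by a reindexing argument. The super Hermite functions $\phi_{j,k,l}$ from (\ref{CH}) and $\phi_{j,k,p,q,l,t}$ from (\ref{normeringL2}) use the letter $k$ with different meanings: in (\ref{CH}), $k$ is the degree of the super spherical harmonic $H_k^{(l)}\in\cH_k$, whereas in (\ref{normeringL2}), $k$ indexes the Fischer-type coupling polynomial $f_{k,p,q}$ and the total harmonic degree is $2k+p+q$. To keep these apart I shall, for this sketch, write $K$ for the first index and reserve $k$ for the second. By Lemma \ref{superorthbasis} the spherical-harmonic basis of $\cH_K$ used in (\ref{orthbasissuper}) is $H_K^{(r[k,p,q,l,t])}=f_{k,p,q}H_p^{b(l)}H_q^{f(t)}/\sqrt{a_{k,p,q}b_{k,p,q}}$ with $K=2k+p+q$; substituting this harmonic into (\ref{CH}) and matching prefactors yields the pointwise identification $\phi_{j,K,r[k,p,q,l,t]}=\phi_{j,k,p,q,l,t}$.

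Next I would show that the coefficients $a_{j,K,l'}$ are unambiguously determined by $f$. Since $\cS(\mR^m)_{m|2n}\subset L_2(\mR^m)_{m|2n}$, $f$ has a unique expansion in the $\langle\cdot|\cdot\rangle_1$-orthonormal product basis $\{\phi^b_{i,p,l}\phi^f_{s,q,t}\}$. The change-of-basis to $\{\phi_{j,k,p,q,l,t}\}$ described in Lemma \ref{basisovergangsbf} decomposes into finite-dimensional invertible blocks, one block per fixed $(p,q,l,t)$ and fixed total homogeneity degree. Hence the coefficients $d_{j,k,p,q,l,t}$ used in Theorem \ref{superCHrepr2} are uniquely determined, and by the preceding identification so are the $a_{j,K,l'}$, with $a_{j,K,r[k,p,q,l,t]}=d_{j,k,p,q,l,t}$.

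The final step is to regroup the summation in Theorem \ref{superCHrepr2}. Theorem \ref{polythm} and Lemma \ref{superorthbasis} together guarantee that, for fixed $K$, the tuples $(k,p,q,l,t)$ with $2k+p+q=K$, $1\le l\le\dim\cH_p^b$, $1\le t\le\dim\cH_q^f$ correspond bijectively via $r[k,p,q,l,t]$ to the full index set $\{1,\ldots,\dim\cH_K\}$. Combined with $2j+2k+p+q+1=2j+K+1$, this rewrites the norm expression of Theorem \ref{superCHrepr2} as
\begin{equation*}
||f||^2_r=\sum_{j,k,p,q,l,t}(2j+2k+p+q+1)^{2r}|d_{j,k,p,q,l,t}|^2=\sum_{j,K,l'}(2j+K+1)^{2r}|a_{j,K,l'}|^2,
\end{equation*}
which is the stated identity (after relabelling $K$ back to $k$). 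The only real obstacle is careful bookkeeping of the two indexing conventions; the mathematical substance is fully packaged in Theorem \ref{superCHrepr2} and Lemma \ref{superorthbasis}, so no new estimates are required.
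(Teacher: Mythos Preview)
Your reindexing argument is correct, but it only establishes the corollary for the \emph{specific} basis $H_K^{(r[k,p,q,l,t])}$ constructed in Lemma \ref{superorthbasis}. The corollary, however, is stated for an \emph{arbitrary} basis $\{H_k^{(l)}\}$ of $\cH_k$ satisfying the orthonormality relation \eqref{orthbasissuper}; equation \eqref{orthbasissuper} is a condition that many bases satisfy, not a definition of one particular basis. You appear to have read ``the basis used in \eqref{orthbasissuper}'' as referring uniquely to the one built in Lemma \ref{superorthbasis}, which is not what the statement says.

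The paper's proof proceeds exactly as you do for the special basis (it simply says the corollary then follows from Theorem \ref{superCHrepr2}), and then supplies the missing step: given any basis $\{H_k^{(l)}\}$ satisfying \eqref{orthbasissuper}, it is related to the special basis $\{H_k^{'(l)}\}$ by a transformation $\phi_{j,k,l}=\sum_t c_{lt}\phi'_{j,k,t}$ with $\sum_t c_{lt}\overline{c}_{st}=\delta_{ls}$, because both bases satisfy the same orthonormality condition with respect to $\langle\cdot|\cdot\rangle_2$. Unitarity of $(c_{lt})$ then gives $\sum_{l'}|a_{j,k,l'}|^2=\sum_t|\sum_l a_{j,k,l}c_{lt}|^2$, so the quantity $\sum_{j,k,l}(2j+k+1)^{2r}|a_{j,k,l}|^2$ is basis-independent. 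This step is short but essential; without it your argument covers only a special case of the stated result.
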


\begin{proof}
If $\{H_k^{(l)}\}$ is a basis of spherical harmonics is of the form $f_{i,p,q}H_p^{(l)}H_q^{(t)}$ then the corollary follows from theorem \ref{superCHrepr2}. Now we consider a general basis $\{H_k^{(l)}\}$ satisfying equation \eqref{orthbasissuper} and a fixed basis of the form $f_{i,p,q}H_p^{(l)}H_q^{(t)}$ satisfying equation \eqref{orthbasissuper} which we denote by $H_k^{'(l)}$. The corresponding Hermite functions are denoted respectively by $\phi_{j,k,l}$ and $\phi'_{j,k,l}$.

The basis transformation $\phi_{j,k,l}=\sum_{l}c_{lt}\phi'_{j,k,t}$ satisfies $\sum_{t}c_{lt}\overline{c}_{st}=\delta_{ls}$ since both the bases satisfy \eqref{orthbasissuper}. If $f=\sum_{j,k,l}a_{j,k,l}\phi_{j,k,l}$ then $f=\sum_{j,k,l,t}a_{j,k,l}c_{lt}\phi'_{j,k,t}$, so
\begin{eqnarray*}
||f||_r^2&=&\sum_{j,k,t}|\sum_l a_{j,k,l}c_{lt}|^2(2j+k+1)^{2r}\\
%&=&\sum_{j,k,t}\sum_{l,s} a_{j,k,l}\bar{a}_{j,k,s}c_{lt}\bar{c}_{st}(2j+k+1)^{2r}\\
%&=&\sum_{j,k}\sum_{l,s} a_{j,k,l}\bar{a}_{j,k,s}\delta_{ls}(2j+k+1)^{2r}\\
&=&\sum_{j,k,l} |a_{j,k,l}|^2(2j+k+1)^{2r},
\end{eqnarray*}
which proves the lemma.
\end{proof}

\subsection{Extension to $\cS(\mR^m)_{m|2n}$}\hspace*{\fill} \\
The inner product $\langle\cdot|\cdot\rangle_2$ can be extended to the super Schwartz space. 
\begin{definition}
\label{inS}
Let $f$ and $g$ be elements of $\cS(\mR^m)_{m|2n}$. Let $(f_i)_{i\in\mN}$ and $(g_s)_{s\in\mN}$ be sequences in $\cP\exp(-R^2/2)$ for which $\lim_{i\to\infty}f_i=f$ and $\lim_{s\to\infty}f_s=f$ in $\cS(\mR^m)_{m|2n}$, then $\langle f|g\rangle_2$ is defined by
\[
\langle f|g\rangle_2=\lim_{i,s\to\infty}\langle f_i|g_s\rangle_2.
\]
\end{definition}

The following lemma proves this is well-defined.
\begin{lemma}
The expression for $\langle f|g\rangle_2$ in definition \ref{inS} is finite and independent from the choice of the sequences $f_i$ and $g_s$.
\end{lemma}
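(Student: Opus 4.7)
The plan is to use the spherical Hermite representation (corollary \ref{CHrepr}) to translate the problem into a statement about coefficient sequences, where the $\langle\cdot|\cdot\rangle_2$-orthonormality of the Hermite functions (theorem \ref{orthCH}, extended to general bases satisfying \eqref{orthbasissuper} as in the proof of corollary \ref{CHrepr}) makes everything transparent. First I would fix such a basis $\{H_k^{(l)}\}$, and for arbitrary $f,g\in \cS(\mR^m)_{m|2n}$ and polynomial approximants $f_i,g_s\in \cP\exp(-R^2/2)$ write the expansions $f=\sum a_{j,k,l}\phi_{j,k,l}$, $f_i=\sum a^{(i)}_{j,k,l}\phi_{j,k,l}$, $g=\sum b_{j,k,l}\phi_{j,k,l}$, $g_s=\sum b^{(s)}_{j,k,l}\phi_{j,k,l}$, with corollary \ref{CHrepr} giving $\|f_i-f\|_r\to 0$ and $\|g_s-g\|_r\to 0$ for every $r$.

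The central step is a continuity estimate for $h_1,h_2\in \cP\exp(-R^2/2)$ with coefficients $a_{j,k,l},b_{j,k,l}$. From $\langle\cdot|\cdot\rangle_2$-orthonormality one has the finite sum $\langle h_1|h_2\rangle_2 = \sum a_{j,k,l}\overline{b_{j,k,l}}$. Splitting the weight as $(2j+k+1)^{r}(2j+k+1)^{-r}$ and applying Cauchy--Schwarz, then using the elementary bound $|b_{j,k,l}|^2\le \|h_2\|_{r'}^2(2j+k+1)^{-2r'}$ (immediate from the definition of $\|\cdot\|_{r'}$), I obtain
\[
|\langle h_1|h_2\rangle_2|\;\le\; \sqrt{K_{r+r'}}\;\|h_1\|_r\,\|h_2\|_{r'},\qquad K_R:=\sum_{j,k,l}(2j+k+1)^{-2R}.
\]
Since $\dim\cH_k$ grows polynomially in $k$, a direct count shows $K_R<\infty$ for $R$ sufficiently large; I fix such $r,r'$.

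Given this bound, both finiteness of the limit and its independence of the chosen sequences follow from the telescoping identity
\[
\langle f_i|g_s\rangle_2-\langle \tilde f_{i'}|\tilde g_{s'}\rangle_2\;=\;\langle f_i-\tilde f_{i'}|g_s\rangle_2+\langle \tilde f_{i'}|g_s-\tilde g_{s'}\rangle_2,
\]
applied with $(\tilde f_{i'},\tilde g_{s'})=(f_{i'},g_{s'})$ for the Cauchy property (giving existence of the limit in $\mC$) and with genuinely different approximating sequences for independence. Each term is controlled by $\sqrt{K_{r+r'}}\,(\|f_i-\tilde f_{i'}\|_r\|g_s\|_{r'}+\|\tilde f_{i'}\|_r\|g_s-\tilde g_{s'}\|_{r'})$; the ``fixed'' factors $\|g_s\|_{r'}$ and $\|\tilde f_{i'}\|_r$ remain bounded along sequences convergent in $\cS(\mR^m)_{m|2n}$, while the difference factors tend to zero by assumption.

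The main obstacle in principle would be the passage from the abstract topology of $\cS(\mR^m)_{m|2n}$ to a coefficient-level estimate, but this obstacle has already been overcome in theorem \ref{afschatting1} and corollary \ref{CHrepr}, which establish that $\{\phi_{j,k,l}\}$ behaves as a Schauder basis of $\cS(\mR^m)_{m|2n}$ in spite of its failure to be $\langle\cdot|\cdot\rangle_1$-orthonormal. Once this is available, the lemma reduces to the clean Cauchy--Schwarz argument sketched above.
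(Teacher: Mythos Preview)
Your proof is correct and follows the same underlying idea as the paper: expand in the $\langle\cdot|\cdot\rangle_2$-orthonormal Hermite basis, use that $\cS$-convergence controls the coefficient sequences via corollary \ref{CHrepr}, and conclude by a Cauchy--Schwarz/telescoping argument. However, your execution is more elaborate than necessary. You introduce weighted norms $\|\cdot\|_r$, $\|\cdot\|_{r'}$ and the auxiliary constant $K_R=\sum_{j,k,l}(2j+k+1)^{-2R}$, whose finiteness you must separately justify via the polynomial growth of $\dim\cH_k$. The paper simply observes that $\langle f_i|g_s\rangle_2=\sum a^i_{j,k,l}\overline{b^s_{j,k,l}}$ is literally the $\ell^2$ inner product of the coefficient sequences, and that $\|\cdot\|_0$ in corollary \ref{CHrepr} is literally the $\ell^2$ norm; since $\cS$-convergence implies $\|\cdot\|_0$-convergence, ordinary Cauchy--Schwarz in $\ell^2$ gives $|\langle h_1|h_2\rangle_2|\le\|h_1\|_0\|h_2\|_0$ with no weights and no $K_R$. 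In effect, your argument is the paper's argument with $r=r'=0$ replaced by a general $r,r'$ and an extra summability step that buys nothing here.
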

\begin{proof}
Since $f_i$ and $g_s$ are elements of $\cP\exp(-R^2/2)$ they have (finite) expansions
\begin{eqnarray*}
f_i=\sum_{j,k,l}a^i_{j,k,l}\phi_{j,k,l}&\mbox{and}&g_s=\sum_{j,k,l}b^s_{j,k,l}\phi_{j,k,l}.
\end{eqnarray*}
Theorem \ref{orthCH} and definition \ref{inS} then yield
\[
\langle f|g\rangle_2=\lim_{i,s\to\infty}a^i_{j,k,l}\overline{b^s_{j,k,l}}.
\]
Since $\cS(\mR^m)_{m|2n}$-convergence implies convergence of the $||\cdot||_0$-norm in corollary \ref{CHrepr}, it is easily checked that this expression is finite and does not depend on the choice of the sequences.
 \end{proof}

This implies that the family of norms on $\cS(\mR^m)_{m|2n}$ in corollary \ref{CHrepr} can be expressed using the $\langle\cdot|\cdot\rangle_2$-inner product. Using the orthogonality of the super Hermite functions and the hermiticity of $N$ we obtain
\begin{equation}
\label{normin2}
||\cdot||_r^2=\langle (N+1)^r\cdot|(N+1)^r\cdot\rangle_2.
\end{equation}

In the following we prove that the inner product in definition \ref{inS} is still given by an expression similar to the one in theorem \ref{defsuper}.

\begin{theorem}
\label{Tctu}
The map $T:\cP\exp(-R^2/2)\to\cP\exp(-R^2/2)$ in formula \eqref{defT} is continuous with respect to the topology on $\cS(\mR^m)_{m|2n}$ and therefore has a unique extension to $\cS(\mR^m)_{m|2n}$.
\end{theorem}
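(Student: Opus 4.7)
The plan is to work in the spherical Hermite basis of corollary \ref{CHrepr} and observe that, with the particular choice of fermionic basis satisfying $\widetilde{H_q^{f(t)}}=\pm i^q H_q^{f(t)}$, the operator $T$ acts as a diagonal operator with eigenvalues of modulus one. Once this is established, $T$ will preserve each of the seminorms $\|\cdot\|_r$ exactly, hence be continuous on $\cP\exp(-R^2/2)$ with respect to the Schwartz topology, and therefore extend uniquely to all of $\cS(\mR^m)_{m|2n}$ by density.

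First I would verify the diagonal action on the spherical Hermite basis. By construction, $T$ is $\mC$-linear and the sign $(-1)^k$ in the definition \eqref{defT} depends only on the index $k$, not on the power $R^{2j}$. Consequently, if $L_j^{M/2+2k+p+q-1}(R^2)=\sum_s c_{j,s}R^{2s}$, linearity yields
\[
T\!\left[L_j^{M/2+2k+p+q-1}(R^2)f_{k,p,q}H_p^{b(l)}H_q^{f(t)}\exp(-R^2/2)\right]
=(-1)^k L_j^{M/2+2k+p+q-1}(R^2)f_{k,p,q}H_p^{b(l)}\widetilde{H_q^{f(t)}}\exp(-R^2/2).
\]
Using the normalized basis from lemma \ref{SSin4} together with the choice $\widetilde{H_q^{f(t)}}=\pm i^q H_q^{f(t)}$ (which is always available, as recalled before definition \ref{inprodGrass}), this gives
\[
T\phi_{j,k,p,q,l,t}=(-1)^k(\pm i^q)\,\phi_{j,k,p,q,l,t},
\]
so $T$ is diagonal on the basis $\{\phi_{j,k,p,q,l,t}\}$ with eigenvalues of modulus exactly one.

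Second, I would compute the seminorms. For any $f\in\cP\exp(-R^2/2)$ with the finite expansion $f=\sum d_{j,k,p,q,l,t}\,\phi_{j,k,p,q,l,t}$, linearity gives
\[
Tf=\sum d_{j,k,p,q,l,t}\,(-1)^k(\pm i^q)\,\phi_{j,k,p,q,l,t},
\]
so by theorem \ref{superCHrepr2},
\[
\|Tf\|_r^{2}=\sum \bigl|d_{j,k,p,q,l,t}\bigr|^2(2j+2k+p+q+1)^{2r}=\|f\|_r^{2}
\]
for every $r\in\mN$. Thus $T$ is an isometry for each Schwartz seminorm, in particular continuous on the dense subspace $\cP\exp(-R^2/2)$. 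It therefore admits a unique continuous extension $T:\cS(\mR^m)_{m|2n}\to\cS(\mR^m)_{m|2n}$.

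The only subtlety I expect is psychological rather than technical: $T$ was introduced through its action on elements $R^{2j}f_{k,p,q}H_p^bH_q^f\exp(-R^2/2)$, not directly on spherical Hermite functions, so one must check that passing from the monomial factor $R^{2j}$ to the Laguerre polynomial $L_j^{M/2+2k+p+q-1}(R^2)$ is harmless; the key fact is that the scalar $(-1)^k$ is independent of the $R^2$-degree. With this in hand, the continuity estimate is immediate because the seminorms of corollary \ref{CHrepr} are precisely tailored to the spherical Hermite basis on which $T$ diagonalizes.
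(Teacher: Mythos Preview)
Your proof is correct and follows essentially the same route as the paper: diagonalize $T$ on the spherical Hermite basis using the choice $\widetilde{H_q^{f(t)}}=\pm i^q H_q^{f(t)}$, observe the eigenvalues have modulus one, and conclude that each seminorm $\|\cdot\|_r$ is preserved. Your explicit remark that the factor $(-1)^k$ is independent of the $R^2$-degree (so $T$ commutes with the Laguerre polynomial prefactor) is a useful clarification that the paper leaves implicit.
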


\begin{proof}
Since we use a basis of $\cH_q^f$ such that $\widetilde{H_q^{f(t)}}=\pm i^qH_q^{f(t)}$ we find that the orthonormal basis of super Hermite functions $\{\phi_{j,k,l}\}$ consists of eigenvectors of $T$ with eigenvalues $\pm 1$ and $\pm i$. This implies that for $f\in\cP\exp(-R^2/2)$ expanded as the (finite) summation
\[
f=\sum_{j,k,l}d_{j,k,l}\phi_{j,k,l},
\]
the action of $T$ is given by
\[
T(f)=\sum_{j,k,l}d_{j,k,l}\lambda_{j,k,l}\phi_{j,k,l},
\]
with $|\lambda_{j,k,l}|=1$. Using corollary \ref{CHrepr} we find
\begin{eqnarray*}
||T(f)||^{2}_{r}&=&\sum_{j,k,l}(2j+k+1)^{2r}|\lambda_{j,k,l}d_{j,k,l}|^2\\
&=&||f||^{2}_{r}.
\end{eqnarray*}
This means that $\lim_{j\to\infty}f_j =0$ implies $\lim_{j\to\infty}T(f_j)= 0$ for $\cS$ convergence in $\cP\exp(-R^2/2)$.
\end{proof}

\begin{definition}
The linear operator $T$ on $\cS(\mR^m)_{m|2n}$ is defined on $f\in\cS(\mR^m)_{m|2n}$ as
\[
T(f)=\lim_{k\to\infty}T(f_k),
\]
with $f_k\to_{\cS(\mR^m)_{m|2n}}f$ and $f_k\in\cP\exp(-R^2/2)$.
\end{definition}

The fact that $T(f)$ does not depend on the choice of $\{f_k\}$ follows immediately from theorem \ref{Tctu}. Now we can write the inner product in definition \ref{inS} as an integral,
\begin{eqnarray*}
\langle f|g\rangle_2&=&\lim_{i,s\to\infty}\int_{\mR^{m|2n}}f_iT(g_s)=\int_{\mR^{m|2n}}f\lim_{s\to\infty}T(g_s)\\
&=&\int_{\mR^{m|2n}}fT(g).
\end{eqnarray*}

The limits can be brought into the integration as $f_i\to f$ and $g_s\to g$ in $\cS(\mR^m)_{m|2n}$ clearly implies that $f_ig_s\to fg$ in $L_1(\mR^{m})\otimes\Lambda_{2n}$.

\section{The $L_2$ Hilbert space in superspace}
\setcounter{equation}{0}

To study quantum mechanics on superspace (see \cite{DBS3, MR2395482, DBS8, MR967935}) properly, an inner product and a corresponding Hilbert space are necessary. As we would like orthosymplectically invariant Hamiltonians (like in \cite{DBS3, MR2395482, DBS8}) to be symmetric, $R^2$ and $\nabla^2$ should be symmetric. As a consequence of formula (\ref{commRD}) this implies that $\mE+M/2$ is skew-symmetric. So the realization of $\mathfrak{sl}_2$ by $iR^2/2$, $i\nabla^2/2$ and $\mE+M/2$ is skew-symmetric. It was shown in theorem 5.15 in \cite{CDBS2} that such an inner product can only exist when $M>0$, which is therefore the case we will consider in this section. In the general case (no restriction on $M$) there is the inner product $\langle\cdot|\cdot\rangle_1$, corresponding to the Hilbert space $L_2(\mR^m)\otimes \Lambda_{2n}=L_2(\mR^m)_{m|2n}$. 

The space of continuous linear functionals on the Schwartz space in superspace is easily seen to be 
\begin{eqnarray*}
(\cS(\mR^m)\otimes\Lambda_{2n})'= \cS'(\mR^m)\otimes \Lambda'_{2n}\cong \cS'(\mR^m)\otimes \Lambda_{2n}
\end{eqnarray*}
with $\cS'(\mR^m)$ the classical dual of the Schwartz space, the space of tempered distributions. So we obtain the rigged Hilbert space (\cite{MR0238534}) or Gelfand triple
\[\cS(\mR^{m})\otimes \Lambda_{2n}\subset L_2(\mR^m)\otimes \Lambda_{2n}\subset \cS'(\mR^m)\otimes \Lambda_{2n}.\] 
Unfortunately, $R^2$ is not symmetric with respect to $\langle\cdot|\cdot\rangle_1$ and there also are other undesirable properties, see the discussion in \cite{CDBS2}.

The inner product $\langle \cdot |\cdot \rangle_2$ in theorem \ref{defsuper} satisfies the desired properties. The Hilbert space corresponding to $\langle\cdot|\cdot\rangle_2$ will contain generalized functions. To prove that this is not a limitation of the $\langle\cdot|\cdot\rangle_2$-inner product we will show that this property must hold for every suitable Hilbert space. Therefore we will start by considering an arbitrary inner product $\langle\cdot|\cdot\rangle$ on which we will impose some straightforward conditions.

For an arbitrary Hilbert space in case $M>0$ for which $R^2$ and $\nabla^2$ are symmetric we consider the harmonic oscillator in superspace. The Hermite functions in formula (\ref{CH}) are its eigenvectors (lemma \ref{eigCH}) and should be a basis for the Hilbert space. The space of finite linear combinations of super Hermite functions corresponds to $\cP\exp(-R^2/2)$. Now we consider a general inner product $\langle\cdot|\cdot\rangle$ on $\cP\exp(-R^2/2)$ for which $R^2$ and $\nabla^2$ are symmetric operators. From the properties in lemma \ref{eigCH} we find that for any such inner product $\langle\cdot|\cdot\rangle$, the relation
\[
\langle\phi_{j,k,l}|\phi_{p,q,r}\rangle=\sqrt{\frac{p(p+q+\frac{M}{2}-1)}{j(j+k+\frac{M}{2}-1)}}\langle\phi_{j-1,k,l}|\phi_{p-1,q,r}\rangle
\]
holds. This implies $\langle \phi_{j,k,l}|\phi_{p,q,r}\rangle=0$ unless $j=p$ and $k=q$, so we find 
\begin{eqnarray*}
\langle \phi_{j,k,l}|\phi_{p,q,r}\rangle=\delta_{jp}\delta_{kq}\langle \phi_{0,k,l}|\phi_{0,k,r}\rangle.
\end{eqnarray*} 

Then, by definition, the corresponding Hilbert space $\cV_{\langle\cdot|\cdot\rangle}$ is the closure of $\cP\exp(-R^2/2)$ with respect to the topology induced on $\cP\exp(-R^2/2)$ by $\langle \cdot|\cdot\rangle$. We consider the `superfunction' determined by its formal series expansion 
\[
f(\bold{x})=\sum_{j=0}^\infty a_j\phi_{j,0,1}(\bold{x}),
\]
with $a_j\in\mC$. The Hermite functions $\phi_{j,0,1}$ are given by
\[
\phi_{j,0,1}(\bold{x})=\sqrt{\frac{j!\Gamma(\frac{M}{2})}{\pi^{M/2}\Gamma(j+\frac{M}{2})}}L_j^{\frac{M}{2}-1}(R^2)\,\exp(-R^2/2).
\]

Such a `function' $f(\bold{x})$ will belong to the Hilbert space $\cV_{\langle\cdot|\cdot\rangle}$ if and only if there exists a sequence of elements of $\cP\exp(-R^2/2)$ which converges to $f(\bold{x})$ in the $\langle\cdot|\cdot\rangle$-topology. If $f$ belongs to the Hilbert space, then $\langle f|f\rangle=\langle \phi_{0,0,1}|\phi_{0,0,1}\rangle\sum_{j=0}^\infty|a_j|^2<\infty$ must hold. This condition is necessary and sufficient. When $\sum_{j=0}^\infty|a_j|^2<\infty$ an example of such a sequence is given by the partial sums
\[
f_k(\bold{x})=\sum_{j=0}^k a_j\phi_{j,0,1}(\bold{x}).
\]

Now we consider the special case of the Euclidean space $\mR^M$ with $M$ bosonic variables, so for $\uy\in\mR^M$ and $\phi_{j,0,1}(\uy)$ as defined in equation \eqref{CH} or \eqref{CHbosM}. Since the Hermite functions are an orthonormal basis for the Hilbert space $L_2(\mR^M)$, we find that the condition of $\sum_{j=0}^\infty|a_j|^2<\infty$ is equal to demanding that 
\begin{eqnarray*}
f(\uy)=\sum_{j=0}^\infty a_j\phi_{j,0,1}(\uy)=\lim_{k\to\infty}f_k(\uy)
\end{eqnarray*}
is a function in $L_2(\mR^M)$. The $\phi_{j,0,1}(\uy)$ are radial functions, so we define $h_k$ by $f_k(\uy)=h_k(r_{\uy}^2)$ and 
\begin{eqnarray*}
\lim_{k\to\infty}h_k(r^2)=h(r^2) &\mbox{ in } L_2(\mR^+,r^{M-1}dr)
\end{eqnarray*}  
holds. It is clear that $f_k(\bold{x})=h_k(R^2)$ with $h_k(R^2)$ immediately defined as a polynomial in $R^2$ or as in definition \ref{sphsymm}.

Now in general $h(r^2)$ will not be a differentiable function which means that $h_k'(r^2)$ will not converge. So $f(\bold{x})$ will only be a formal summation or $h(R^2)$ is defined by definition \ref{sphsymm} in the weak sense (as we will see as an element of $\cS'(\mR^m)\otimes\Lambda_{2n}$). From these considerations we already find the following important fact.

\begin{proposition}
\label{Hilbertdist}
Every Hilbert space which contains the eigenvectors of the quantum harmonic oscillator on $\mR^{m|2n}$, with $n\not=0$, and for which $R^2$ and $\nabla^2$ are symmetric operators will contain formal summations which are not regular functions.
\end{proposition}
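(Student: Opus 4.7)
The plan is to build on the informal discussion preceding the proposition and pin down the two nontrivial points: that the abstract orthogonality relation is essentially forced, and that one can genuinely choose square-summable coefficients producing a non-smooth radial profile.

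I would begin by establishing the diagonal form of $\langle\cdot|\cdot\rangle$ in the super Hermite basis. Under the hypotheses, $R^2$ and $\nabla^2$ are symmetric, so by \eqref{commRD} the shifted super Euler operator $\mE+M/2$ is skew-symmetric; consequently the raising and lowering operators appearing in lemma \ref{eigCH} are mutual adjoints. The Hamiltonian $H=(R^2-\nabla^2)/2$ is symmetric, so eigenvectors with distinct eigenvalues are orthogonal, which already forces $\langle\phi_{j,k,l}|\phi_{p,q,r}\rangle=0$ unless $2j+k=2p+q$. Combining this with the raising/lowering adjunction, the standard $\mathfrak{sl}_2$-representation argument yields the relation $\langle\phi_{j,k,l}|\phi_{p,q,r}\rangle=\delta_{jp}\delta_{kq}\langle\phi_{0,k,l}|\phi_{0,k,r}\rangle$ displayed in the excerpt. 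In particular $\|\phi_{j,0,1}\|^{2}$ is a fixed positive constant independent of $j$, so for any sequence $(a_j)\in\ell^2(\mC)$ the partial sums $\sum_{j=0}^N a_j\phi_{j,0,1}$ are Cauchy and define a limit $f\in\cV_{\langle\cdot|\cdot\rangle}$.

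Next, I would transfer the question to the purely bosonic $M$-dimensional setting, as suggested in the excerpt. The same explicit formula defines radial Hermite functions $\phi_{j,0,1}(\uy)$, $\uy\in\mR^M$, which form an orthonormal basis of the radial subspace of $L_2(\mR^M)$. Hence the $\ell^2$-sum converges there to a radial function $\uy\mapsto h(r_{\uy}^2)$ with $h\in L_2(\mR^+,r^{M-1}dr)$. By Definition \ref{sphsymm}, the corresponding superspace object is $f(\bold{x})=h(R^2)=\sum_{j=0}^n (\theta^{2j}/j!)\,h^{(j)}(r^2)$, whose Grassmann components are the derivatives $h^{(0)},\dots,h^{(n)}$ evaluated in $r^2$. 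The core step is then to choose $(a_j)$ so that $h$ fails to be of class $C^n(\mR^+)$; concretely one may start from any $L_2(\mR^+,r^{M-1}dr)$ function with a jump discontinuity and read off its Laguerre-type expansion coefficients, which are automatically square-summable. Because $n\ge 1$, the top Grassmann component $\theta^{2n}\,h^{(n)}(r^2)/n!$ then genuinely requires a distributional derivative, so $f(\bold{x})$ cannot be realized as a Grassmann-valued function on $\mR^m$.

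The main obstacle I anticipate is verifying rigorously that the abstract Hilbert-space limit truly coincides with the formal expression $\sum_{j=0}^n (\theta^{2j}/j!)\,h^{(j)}(r^2)$ when the $h^{(j)}$ are interpreted weakly. I would handle this by pairing each partial sum against a test function in $\cS(\mR^m)\otimes\Lambda_{2n}$ component by component in the $\theta^{2j}$-decomposition, using the $L_2$-convergence $h_N\to h$ in the bosonic realization and the continuity of weak differentiation on $\cS'(\mR^m)$. This identifies $f$ with an element of $\cS'(\mR^m)\otimes\Lambda_{2n}$ that lies outside $C^0(\mR^m)\otimes\Lambda_{2n}$, proving the proposition.
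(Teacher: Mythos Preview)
Your proposal is correct and follows essentially the same route as the paper, which presents its argument as the informal discussion immediately preceding the proposition rather than as a separate proof: deduce the diagonal form $\langle\phi_{j,k,l}|\phi_{p,q,r}\rangle=\delta_{jp}\delta_{kq}\langle\phi_{0,k,l}|\phi_{0,k,r}\rangle$ from the raising/lowering relations in lemma~\ref{eigCH}, restrict to the radial sector $\phi_{j,0,1}$, identify the $\ell^2$-condition with membership in $L_2(\mR^M)$ via the bosonic $M$-dimensional model, and then observe that a generic such $h$ is not $n$ times differentiable so that $h(R^2)$ is only a formal (distributional) superfunction. Your treatment is in fact somewhat more careful than the paper's on two points---the explicit construction of a non-$C^n$ profile $h$ and the identification of the abstract Hilbert-space limit with the weak Taylor expansion in $\cS'(\mR^m)\otimes\Lambda_{2n}$---both of which the paper leaves at the level of assertion.
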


The inner product $\langle\cdot|\cdot\rangle_2$ in theorem \ref{defsuper} is an inner product with the desired properties and is the inner product we will use to construct the Hilbert space. Consider the bosonic harmonic oscillator. The function spaces $\mR[x_1,\cdots,x_m]\exp (-r^2/2)$ and $\cS(\mR^m)$ respectively contain states with finite energy and states with an infinitely small admixture of infinite energy states. The Hilbert space $L_2(\mR^m)$ also contains unphysical infinite energy states, see \cite{MR0238534}. In that sense the $L_2(\mR^m)$-space is nothing more than the mathematical completion of the Schwartz space with respect to the topology induced by the $L_2(\mR^m)$-inner product. Now, in superspace, this completion will not correspond to $L_2(\mR^m)_{m|2n}$, but to a new Hilbert space which we will denote as $\bold{L}_2(\mR^{m|2n})$. 

\begin{definition}
\label{definprod}
\label{Hilbert}
The Hilbert space $\bold{L}_2(\mR^{m|2n})$ is the space of formal series
\begin{eqnarray*}
\sum_{j,k,l}a_{j,k,l}\phi_{j,k,l}
\end{eqnarray*}
with $a_{j,k,l}\in\mC$ and $\sum_{j,k,l}|a_{j,k,l}|^2<\infty$ and $\phi_{j,k,l}$ defined in equation \eqref{CH} with the basis $\{H_k^{(l)}\}$ satisfying equation \eqref{orthbasissuper}. The inner product between the elements represented by the sequences $(a_{j,k,l})$ and $(b_{j,k,l})$ is given by
\[
\langle f|g\rangle_2 =\sum_{j,k,l}{a_{j,k,l}}\overline{b_{j,k,l}}.
\]
\end{definition}
In case $n=0$ this corresponds to the Hilbert space $L_2(\mR^m)$. 

\begin{remark}
The Hilbert space $\bold{L}_2(\mR^{m|2n})$ is a space of superfunctions with an inner product. This differs from the theory of so-called super Hilbert spaces (see e.g. \cite{MR1796030}) where one considers a Grassmann-valued inner product.
\end{remark}

Corollary \ref{CHrepr} implies that the super Schwartz space $\cS(\mR^{m})_{m|2n}$ is included in the Hilbert space $\bold{L}_2(\mR^{m|2n})$. The proof of theorem \ref{nogoL2} shows that $L_2(\mR^m)_{m|2n}$ is not included in this Hilbert space. Now we show that all elements of the Hilbert space, although not necessary regular functions, are tempered distributions.

\begin{lemma}
\label{HilbertSchwartz}
Generalized functions $f$ of the form
\begin{eqnarray*}
\sum_{j,k,l}a_{j,k,l}\phi_{j,k,l}\,\,&\mbox{with}&\,\,\sum_{j,k,l}|a_{j,k,l}|^2<\infty
\end{eqnarray*}
are elements of the space $\cS'(\mR^m)\otimes\Lambda_{2n}$ and therefore are derivatives of (almost everywhere) continuous functions with polynomial growth.
\end{lemma}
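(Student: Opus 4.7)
The goal is to show that any formal series $f = \sum_{j,k,l} a_{j,k,l}\phi_{j,k,l}$ with $\sum |a_{j,k,l}|^2 < \infty$ defines a continuous linear functional on $\cS(\mR^m)\otimes\Lambda_{2n}$; the second assertion is then the standard structure theorem for tempered distributions, applied componentwise in the Grassmann variables. The plan is to realize $f$ as the limit, in $\cS'(\mR^m)\otimes\Lambda_{2n}$, of its partial sums $f_N = \sum_{2j+k\le N}a_{j,k,l}\phi_{j,k,l}$, which are finite linear combinations of Schwartz functions.

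The main tool is that the integration pairing $\int_{\mR^{m|2n}}(\cdot)(\cdot)$ and the Hilbert pairing $\langle\cdot|\cdot\rangle_2$ differ by the operator $T$: for $g,h\in\cP\exp(-R^2/2)$ one has $\langle g|h\rangle_2=\int gT(\overline{h})$, and by \thmref{Tctu} this formula extends continuously to $\cS(\mR^m)_{m|2n}$. Since $T$ acts diagonally on the basis $\{\phi_{j,k,l}\}$ by scalars of modulus $1$, it is a continuous bijection on $\cS(\mR^m)_{m|2n}$ with continuous inverse. Given $\chi\in\cS(\mR^m)_{m|2n}$, set $\eta=\overline{T^{-1}(\chi)}\in\cS(\mR^m)_{m|2n}$ and expand $\eta=\sum \eta_{j,k,l}\phi_{j,k,l}$. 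Then for every $(j,k,l)$,
\[
\int_{\mR^{m|2n}}\phi_{j,k,l}\,\chi \;=\; \langle\phi_{j,k,l}|\eta\rangle_2 \;=\; \overline{\eta_{j,k,l}},
\]
by orthonormality of the super Hermite functions in $\bold{L}_2(\mR^{m|2n})$ (\thmref{orthCH}). Summing over the finite index set defining $f_N$ yields
\[
\int_{\mR^{m|2n}} f_N\,\chi \;=\; \sum_{2j+k\le N} a_{j,k,l}\,\overline{\eta_{j,k,l}}.
\]

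By \corref{CHrepr} (applied to $\eta$), $\sum_{j,k,l}|\eta_{j,k,l}|^2 = \|\eta\|_0^2 < \infty$, so Cauchy--Schwarz together with $\sum |a_{j,k,l}|^2<\infty$ gives absolute convergence of $\sum a_{j,k,l}\overline{\eta_{j,k,l}}$ and the estimate
\[
\left|\lim_{N\to\infty}\int f_N\,\chi\right|\;\le\;\|f\|_2\,\|\eta\|_0 \;\le\; C\,\|f\|_2\,\|\chi\|_{r_0}
\]
for some fixed Schwartz seminorm $\|\cdot\|_{r_0}$, where the last inequality uses continuity of complex conjugation and of $T^{-1}$ on $\cS(\mR^m)_{m|2n}$. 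This shows that the linear functional $\Lambda_f:\chi\mapsto \lim_N\int f_N\chi$ is continuous on $\cS(\mR^m)_{m|2n}$ and hence defines an element of $\cS'(\mR^m)\otimes\Lambda_{2n}$ that agrees with $f$ on finite Hermite expansions; this is the sought identification. Once $f$ is identified with a tempered distribution, the classical structure theorem (see e.g.\ Reed--Simon) writes every tempered distribution as a (weak) derivative of an almost everywhere continuous function of polynomial growth, so the second statement follows by applying this componentwise in the Grassmann basis $\{{x\grave{}}_A\}$.

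The only subtle step is bridging the two pairings: the natural duality on $\cS$ uses ordinary integration, while the Hilbert structure on $\bold{L}_2(\mR^{m|2n})$ uses $\langle\cdot|\cdot\rangle_2$, which incorporates the twist $T$. The main obstacle is therefore ensuring that the substitution $\eta=\overline{T^{-1}(\chi)}$ stays inside $\cS(\mR^m)_{m|2n}$ with controlled seminorms; this is exactly what \thmref{Tctu} (continuity of $T$ on the Schwartz space, together with the fact that its eigenvalues $\pm 1,\pm i$ give a continuous inverse) supplies, so no further work is required beyond the Cauchy--Schwarz estimate above.
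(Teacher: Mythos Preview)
Your argument is correct, and it takes a genuinely different route from the paper's own proof. The paper passes to the product basis $\{\phi^b_{i,p,l}\phi^f_{s,q,t}\}$: from the boundedness of $(a_{j,k,l})$ and the polynomial estimate of Lemma~\ref{afschatting} it obtains $|c_{i,p,l,s,q,t}|^2\le C^\ast(2i+p+1)^{n+2}$, and then checks, for each fixed Grassmann component $(s,q,t)$, that $\sum_{i,p,l}c_{i,p,l,s,q,t}\phi^b_{i,p,l}$ is a tempered distribution on $\mR^m$ by pairing against a test function and invoking the bosonic norms $\|\cdot\|_{n+2}$.

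Your proof stays entirely in the super Hermite basis and never touches the product basis or the explicit change-of-basis bounds. You observe instead that the duality pairing $\int_{\mR^{m|2n}}(\cdot)(\cdot)$ and the Hilbert pairing $\langle\cdot|\cdot\rangle_2$ differ by the continuous automorphism $T$ of $\cS(\mR^m)_{m|2n}$ (Theorem~\ref{Tctu}), whose eigenvalues have modulus one, so that the substitution $\eta=\overline{T^{-1}(\chi)}$ converts $\int f_N\chi$ into the partial sum $\sum a_{j,k,l}\overline{\eta_{j,k,l}}$, after which Cauchy--Schwarz and the seminorm $\|\cdot\|_0$ of Corollary~\ref{CHrepr} finish the job. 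Both proofs ultimately rest on the technical estimates of Section~3 (the paper uses Lemma~\ref{afschatting} directly; you use it through Theorem~\ref{Tctu} and Corollary~\ref{CHrepr}), but your packaging is more conceptual and avoids the componentwise bookkeeping. The paper's approach, on the other hand, yields an explicit seminorm ($\|\cdot\|_{n+2}$) controlling the functional, whereas your bound through $T^{-1}$ and complex conjugation is less explicit about which $r_0$ suffices.
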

\begin{proof}
It is well-known that elements of $\cS'(\mR^m)$ are derivatives of (almost everywhere) continuous functions with polynomial growth, (see e.g. \cite{MR0209834, MR0278067}). The sequence $a_{j,k,l}$ is clearly bounded, so there exists a constant $C$ such that
\[
|a_{j,k,l}| \le C.
\]
We can express $f$ in the product basis $\phi^b_{i,p,l}\phi^f_{s,q,t}$ with purely bosonic and fermionic Hermite functions,
\[
f=\sum_{i,p,l,s,q,t}c_{i,p,l,s,q,t}\phi^b_{i,p,l}\phi^f_{s,q,t}.
\]
Lemma \ref{afschatting} implies there exists a constant $C^{\ast}$ for which
\[
|c_{i,p,l,s,q,t}|^2 \le C^{\ast}(2i+p+1)^{n+2}
\]
for all $i,p,l,s,q,t$ with $C^\ast=n^2D^2\sum_{j=0}^{n+2}\binom{n+2}{j}(2n)^jC^2$. This lemma can be used although $f$ is not in $L_2(\mR^m)_{m|2n}$ by using the partial sums of $f$, which are in $\cP\exp(-R^2/2)$. Now take a general $g(\ux)=\sum_{i,p,l}d_{i,p,l}\phi^b_{i,p,l}\in\cS(\mR^m)$. The formal expression
\begin{eqnarray*}
\left|\int_{\mR^m}\left(\sum_{i,p,l}c_{i,p,l,s,q,t}\phi^b_{i,p,l}(\ux)\right)g(\ux)dV(\ux)\right|^2&=&|\sum_{i,p,l}c_{i,p,l,s,q,t}d_{i,p,l}|^2\\
&\le& \sum_{i,p,l}C^{\ast}(2i+p+1)^{n+2}|d_{i,p,l}|^2\\
&=&C^{\ast}||g(\ux)||^2_{n+2}
\end{eqnarray*}
shows that the series $\sum_{i,p,l}c_{i,p,l,s,q,t}\phi^b_{i,p,l}(\ux)$ converges to a continuous linear functional on $\cS(\mR^m)$, which implies $f\in \cS'(\mR^m)\otimes\Lambda_{2n}$.
\end{proof}

The previous lemma implies that we obtain the Gelfand triple (\cite{MR0238534})
\begin{eqnarray*}
\cS(\mR^{m})\otimes \Lambda_{2n}\subset \bold{L}_2(\mR^{m|2n}) \subset \cS'(\mR^m)\otimes \Lambda_{2n}.
\end{eqnarray*}

By definition of the weak topology on $\cS'(\mR^m)$ we find the following lemma.
\begin{lemma}
\label{intecht}
For $f\in  \bold{L}_2(\mR^{m|2n})$ defined by the sequence $(a_{j,k,l})$, the following relation holds
\[
a_{j,k,l}=\int_{\mR^{m|2n}}f{T(\phi_{j,k,l})}
\]
where $\int_{\mR^{m|2n}}f{T(\phi_{j,k,l})}$ denotes the action of the tempered distribution $f$ with values in $\Lambda_{2n}$ on $T(\phi_{j,k,l})\in\cS(\mR^m)\otimes\Lambda_{2n}$.
\end{lemma}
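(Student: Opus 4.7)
The plan is to recover $a_{j,k,l}$ as an $\mathbf{L}_2$-expansion coefficient of $f$, reinterpret the defining inner product as an integral against the Schwartz function $T(\phi_{j,k,l})$, and then pass to the limit using the weak-$\ast$ continuity of the inclusion $\mathbf{L}_2(\mR^{m|2n})\hookrightarrow\cS'(\mR^m)\otimes\Lambda_{2n}$ that is implicit in Lemma~\ref{HilbertSchwartz}.

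Concretely, write $f=\sum_{j,k,l}a_{j,k,l}\phi_{j,k,l}$ and let $f_N=\sum_{j+k\le N}a_{j,k,l}\phi_{j,k,l}\in\cP\exp(-R^2/2)$ be its $N$-th partial sum, so that $f_N\to f$ in $\mathbf{L}_2(\mR^{m|2n})$ by Definition~\ref{definprod}. By the proof of Theorem~\ref{Tctu}, $T(\phi_{j_0,k_0,l_0})$ is a unimodular scalar multiple of $\phi_{j_0,k_0,l_0}$, and therefore lies in $\cS(\mR^m)\otimes\Lambda_{2n}$. Using the integral representation of $\langle\cdot|\cdot\rangle_2$ on $\cS(\mR^m)\otimes\Lambda_{2n}$ from Theorem~\ref{defsuper} (extended as at the end of Section~\ref{extension}) together with the orthonormality of the super Hermite basis (Theorem~\ref{orthCH}), one obtains, for every $N\ge j_0+k_0$,
\[
\int_{\mR^{m|2n}} f_N\, T(\phi_{j_0,k_0,l_0}) \;=\; \langle f_N | \phi_{j_0,k_0,l_0}\rangle_2 \;=\; a_{j_0,k_0,l_0}.
\]

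It remains to check that the left-hand side converges to $\int_{\mR^{m|2n}} f\,T(\phi_{j_0,k_0,l_0})$ as $N\to\infty$. This is the continuity of the embedding $\mathbf{L}_2(\mR^{m|2n})\hookrightarrow\cS'(\mR^m)\otimes\Lambda_{2n}$ tested against a fixed Schwartz function. The estimate produced in the proof of Lemma~\ref{HilbertSchwartz} gives, for any $h\in\mathbf{L}_2(\mR^{m|2n})$ with Hermite coefficients $(b_{j,k,l})$ and any $g\in\cS(\mR^m)\otimes\Lambda_{2n}$, a bound of the shape
\[
\left|\int_{\mR^{m|2n}} h\, g\right| \;\le\; C\, \sup_{j,k,l}|b_{j,k,l}|\cdot \|g\|_{n+2} \;\le\; C\,\|h\|_{\mathbf{L}_2}\,\|g\|_{n+2},
\]
where we use $\sup_{j,k,l}|b_{j,k,l}|\le\|h\|_{\mathbf{L}_2}$. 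Applied to $h=f-f_N$ this forces the required convergence, and the lemma follows.

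The main (mild) obstacle is simply to observe that the polynomial-growth bound of Lemma~\ref{HilbertSchwartz}, originally stated in terms of a uniform bound on the Hermite coefficients, upgrades to a continuity statement in $\mathbf{L}_2$-norm; no further technical ingredient is required.
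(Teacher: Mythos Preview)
Your argument is correct and is precisely the natural elaboration of the paper's treatment: the paper offers no explicit proof beyond the single preamble sentence ``By definition of the weak topology on $\cS'(\mR^m)$ we find the following lemma,'' and your partial-sum plus continuity argument is exactly how one unpacks that remark. The only cosmetic point is that the estimate from Lemma~\ref{HilbertSchwartz} is stated component-wise in the fermionic basis (for each fixed $(s,q,t)$, against $g\in\cS(\mR^m)$), so to get your displayed inequality for $g\in\cS(\mR^m)\otimes\Lambda_{2n}$ you should briefly note that one sums the finitely many Grassmann components; this is routine and changes nothing.
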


The action of the derivatives and multiplication with variables on elements of the Hilbert space is already defined, as it is defined on $\cS'(\mR^m)\otimes\Lambda_{2n}$. 
\begin{definition}
\label{actieSp}
The action of elements of the algebra $Alg(X_i,\partial_{X_j})$ generated by the variables and derivatives on elements of $\cS'(\mR^m)\otimes \Lambda_{2n}$ is defined by the following rules:
\begin{itemize}
\item left multiplication with ${x\grave{}}_i$ and the derivation $\partial_{{x\grave{}}_j}$ on $\cS'(\mR^m)\otimes \Lambda_{2n}$ commute with $\cS'(\mR^m)$ and are defined on $\Lambda_{2n}$ in the standard way,
\item left multiplication with ${x}_i$ and the derivation $\partial_{{x}_j}$ on $\cS'(\mR^m)\otimes \Lambda_{2n}$ commute with $\Lambda_{2n}$ and are defined on $\cS'(\mR^m)$ in the standard way: for $f\in\cS'(\mR^m)$ and $g\in \cS(\mR^m)$,
\[
\int_{\mR^m}(x_if)gdV(\ux)=\int_{\mR^m}f(x_ig)dV(\ux)\mbox{ and }\int_{\mR^m}(\partial_{x_i}f)gdV(\ux)=-\int_{\mR^m}f(\partial_{x_i}g)dV(\ux).
\]
\end{itemize}
\end{definition}

\begin{definition}
Consider an $\cO\in Alg(X_i,\partial_{X_j})$. The elements $f\in\bold{L}_2(\mR^{m|2n})$ for which $\cO f$, as defined in definition \ref{actieSp} satisfy $\cO f\in\bold{L}_2(\mR^{m|2n})$, are said to be in the domain $\mD(\cO)$ of $\cO$. 
\end{definition}
The action of $\nabla^2$ can also be expressed using the fact that 
\[
\nabla^2\sum_{j,k,l}a_{j,k,l}\phi_{j,k,l}=\sum_{j,k,l}a_{j,k,l}\nabla^2\phi_{j,k,l}
\]
in $\cS'(\mR^{m})\otimes\Lambda_{2n}$ and the properties in lemma \ref{eigCH}, which lead to
\begin{eqnarray}
\nonumber
-\nabla^2\phi_{j,k,l}&=&(2j+k+\frac{M}{2})\phi_{j,k,l}\\
\label{expnabla2}
&+&\sqrt{(j+1)(j+\frac{M}{2}+k)}\phi_{j+1,k,l}+\sqrt{j(j+\frac{M}{2}+k-1)}\phi_{j-1,k,l}.
\end{eqnarray}

In particular, for regular functions for which the actions of the derivatives or of multiplication with variables exist, this definition coincides with the usual action. The elements of $\cP\exp(-R^2/2)$ are such functions, which implies that $\mD(\mathcal{O})$ is dense in $L_2(\mR^{m|2n})$ for $\mathcal{O}$ equal to $\nabla^2$, $\mE+M/2$ or multiplication with $R^2$.

\begin{theorem}
\label{hermitisch}
For the densely defined operators on $\bold{L}_2(\mR^{m|2n})$, $\nabla^2$, $\mE+M/2$ and multiplication with $R^2$, it holds that $\nabla^2$ and $R^2$ are symmetric, while $\mE+M/2$ is skew-symmetric.
\end{theorem}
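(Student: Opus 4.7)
The plan is to reduce each of the three claims to the computation of matrix elements in the orthonormal basis $\{\phi_{j,k,l}\}$ that defines $\bold{L}_2(\mR^{m|2n})$. Since the inner product on $\bold{L}_2(\mR^{m|2n})$ is by definition the $\ell^2$ pairing of Hermite coefficients, and $\mathcal{O}f$ lies in $\bold{L}_2(\mR^{m|2n})$ for every $f\in\mD(\mathcal{O})$, Parseval's identity reduces everything to showing that the matrix of $\mathcal{O}$ is (conjugate-)symmetric or skew-symmetric.

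First, I would combine the three identities of Lemma \ref{eigCH} by taking sums and differences to obtain the tridiagonal expansions
\begin{align*}
\nabla^2\phi_{j,k,l}&=-(2j+k+\tfrac{M}{2})\phi_{j,k,l}-\sqrt{(j+1)(j+k+\tfrac{M}{2})}\,\phi_{j+1,k,l}-\sqrt{j(j+k+\tfrac{M}{2}-1)}\,\phi_{j-1,k,l},\\
R^2\phi_{j,k,l}&=\phantom{-}(2j+k+\tfrac{M}{2})\phi_{j,k,l}-\sqrt{(j+1)(j+k+\tfrac{M}{2})}\,\phi_{j+1,k,l}-\sqrt{j(j+k+\tfrac{M}{2}-1)}\,\phi_{j-1,k,l},\\
(\mE+\tfrac{M}{2})\phi_{j,k,l}&=\phantom{-(2j+k+\tfrac{M}{2})\phi_{j,k,l}-{}}\sqrt{(j+1)(j+k+\tfrac{M}{2})}\,\phi_{j+1,k,l}-\sqrt{j(j+k+\tfrac{M}{2}-1)}\,\phi_{j-1,k,l}.
\end{align*}
In particular, each operator is diagonal in $(k,l)$ and tridiagonal in $j$, with real coefficients.

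Second, writing $A^{\mathcal{O}}_{(j,k,l),(j',k',l')}:=\langle\phi_{j,k,l}|\mathcal{O}\phi_{j',k',l'}\rangle_2$ and reading off the matrix from the expansions above, one checks the off-diagonal entries. The coefficient of $\phi_{j+1,k,l}$ in $\mathcal{O}\phi_{j,k,l}$ must be compared with the coefficient of $\phi_{j,k,l}$ in $\mathcal{O}\phi_{j+1,k,l}$ (i.e.\ the $j\to j-1$ term with $j$ replaced by $j+1$). For $\nabla^2$ and $R^2$ both coefficients equal $-\sqrt{(j+1)(j+k+M/2)}$, so the matrices are symmetric; for $\mE+M/2$ the two coefficients are $+\sqrt{(j+1)(j+k+M/2)}$ and $-\sqrt{(j+1)(j+k+M/2)}$, and the diagonal vanishes, so the matrix is skew-symmetric.

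Third, for $f=\sum a_{j,k,l}\phi_{j,k,l}$ and $g=\sum b_{j,k,l}\phi_{j,k,l}$ in $\mD(\mathcal{O})$, the hypothesis $\mathcal{O}f,\mathcal{O}g\in\bold{L}_2(\mR^{m|2n})$ ensures that the coefficient sequences of $\mathcal{O}f$ and $\mathcal{O}g$ are square-summable, so by Parseval
\[
\langle\mathcal{O}f|g\rangle_2=\sum_{i,j} A^{\mathcal{O}}_{j,i}\,a_i\,\overline{b_j},\qquad \langle f|\mathcal{O}g\rangle_2=\sum_{i,j}A^{\mathcal{O}}_{j,i}\,a_j\,\overline{b_i}.
\]
The tridiagonal structure makes these double sums absolutely convergent (only three nonzero entries per row/column), so relabelling is trivially legal and the required identity $\langle\mathcal{O}f|g\rangle_2=\pm\langle f|\mathcal{O}g\rangle_2$ follows from the symmetry/skew-symmetry of $A^{\mathcal{O}}$ established in the previous step.

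The only potential pitfall is the interchange of summation when $f$ or $g$ is not polynomial-times-Gaussian, but this is harmless because each row and column of $A^{\mathcal{O}}$ has at most three nonzero entries, so the double sums collapse to single sums of the form $\sum_{j,k,l}(\cdots)$ that converge absolutely by Cauchy--Schwarz applied to the $\ell^2$ sequences $(a_{j,k,l})$ and $(\mathcal{O}g)_{j,k,l}$ (or their analogues). No further analytic work is needed.
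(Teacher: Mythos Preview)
Your argument is correct and follows essentially the same approach as the paper: the paper also derives the tridiagonal expansion of $\nabla^2\phi_{j,k,l}$ (its equation \eqref{expnabla2}, identical to your first line) and then simply states that the symmetry is ``easily seen to be true'' from it, with the other two operators handled similarly. Your version is more explicit---you write out all three expansions and verify the matrix symmetry and the absolute convergence---but the underlying idea is the same.
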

\begin{proof}
We need to prove that for $f,g\in\mD(\nabla^2)$, $\langle \nabla^2 f|g\rangle_2=\langle f|\nabla^2 g\rangle_2$. This is easily seen to be true from the expression \eqref{expnabla2}. The proof for the other operators is similar.
\end{proof}

For use in section \ref{Heissection}, multiplication with
\begin{eqnarray*}
R&=&\sum_{j=0}^n\frac{\theta^{2j}}{j!}\frac{\Gamma(\frac{3}{2})}{\Gamma(\frac{3}{2}-j)}r^{1-2j}
\end{eqnarray*}
is needed. This is not defined everywhere on $\cS'(\mR^m)\otimes\Lambda_{2n}$. We start with the following definition.
\begin{definition}
\label{deffkl}
For an element $f\in\bold{L}_2(\mR^{m|2n})$ associated with the sequence $(a_{j,k,l})$ the following functions $f_{k,l}$ are associated:
\begin{eqnarray*}
f_{k,l}(u^2)=\sum_{j=0}^\infty a_{j,k,l}\frac{L_j^{\frac{M}{2}+k-1}(u^2)}{\zeta_{j,k}^M}\exp(-u^2/2) & & \in L_2(\mR^+,u^{M+2k-1}du).
\end{eqnarray*}
In the $\bold{L}_2(\mR^{m|2n})$-topology the following expression converges,
\begin{eqnarray*}
f(\bold{x})&=&\sum_{k,l}f_{k,l}(R^2)H_{k}^{(l)}.
\end{eqnarray*}
The element $f_{k,l}(R^2)H_{k}^{(l)}\in \bold{L}_2(\mR^{m|2n})$ is given by $\sum_{j}a_{j,k,l}\phi_{j,k,l}$, or represented by the sequence $c_{i,p,t}=\delta_{pk}\delta_{tl}a_{i,p,t}$.
\end{definition}

This leads to the definition of multiplication with $R$.
\begin{definition}
Let  $f\in \bold{L}_2(\mR^{m|2n})$ be given by $\sum_{k,l}f_{k,l}(R^2)H_k^{(l)}$. If $uf_{k,l}(u^2)\in L_2(\mR^+,u^{M-1}du)$ for each $k,l$ then 
\begin{eqnarray*}
Rf_{k,l}(R^2)H_{k}^{(l)} &\in &  \bold{L}_2(\mR^{m|2n})
\end{eqnarray*}
as in definition \ref{deffkl}. If the series converges in $\bold{L}_2(\mR^{m|2n})$ then $Rf$ is defined as
\begin{eqnarray*}
Rf(\bold{x})&=&\sum_{k,l}Rf_{k,l}(R^2)H_{k}^{(l)}.
\end{eqnarray*}
\end{definition}

This densely defined operator is symmetric.
\begin{lemma}
\label{Rsymm}
For $f,g\in \bold{L}_2(\mR^{m|2n})$ elements of $\mD(R)$, the relation
\begin{eqnarray*}
\langle R\,f|g\rangle_2&=&\langle f|R\,g\rangle_2
\end{eqnarray*}
holds.
\end{lemma}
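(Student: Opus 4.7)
\emph{Proof plan.} The strategy is to split $\bold{L}_{2}(\mR^{m|2n})$ into the $(k,l)$-sectors of definition \ref{deffkl} and reduce the claim to the triviality that multiplication by a real positive variable is symmetric on a weighted $L_{2}$-space on $\mR^{+}$. First I would write $f=\sum_{k,l}f_{k,l}(R^{2})H_{k}^{(l)}$ and $g=\sum_{k,l}g_{k,l}(R^{2})H_{k}^{(l)}$. The element $f_{k,l}(R^{2})H_{k}^{(l)}$ corresponds to a coefficient sequence supported only on indices with that fixed $(k,l)$, so by definition \ref{definprod} different sectors are mutually orthogonal and $\langle f|g\rangle_{2}=\sum_{k,l}\langle f_{k,l}(R^{2})H_{k}^{(l)}|g_{k,l}(R^{2})H_{k}^{(l)}\rangle_{2}$. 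Since the scalar operator $R$ (see definition \ref{sphsymm}) preserves each sector, it suffices to prove symmetry of $R$ on a single $(k,l)$-sector.

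Second, I would identify each $(k,l)$-sector with $L_{2}(\mR^{+},u^{M+2k-1}du)$ via $f_{k,l}(R^{2})H_{k}^{(l)}\mapsto f_{k,l}(u^{2})$. The classical Laguerre orthogonality relation (with $\alpha=M/2+k-1$, after the substitution $t=u^{2}$) gives
\[
\int_{0}^{\infty}\frac{L_{j}^{\alpha}(u^{2})}{\zeta_{j,k}^{M}}\,\frac{L_{i}^{\alpha}(u^{2})}{\zeta_{i,k}^{M}}\,e^{-u^{2}}\,u^{M+2k-1}du=\delta_{ji},
\]
so these functions form an orthonormal basis of $L_{2}(\mR^{+},u^{M+2k-1}du)$ and the identification is an isometric isomorphism. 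Consequently
\[
\langle f_{k,l}(R^{2})H_{k}^{(l)}\,|\,g_{k,l}(R^{2})H_{k}^{(l)}\rangle_{2}=\int_{0}^{\infty}f_{k,l}(u^{2})\,\overline{g_{k,l}(u^{2})}\,u^{M+2k-1}du.
\]

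Third, I would translate the action of $R$ through the isometry. Because $R=\sqrt{R^{2}}$ in the sense of definition \ref{sphsymm}, the product $R\cdot f_{k,l}(R^{2})$ corresponds to the function $u\mapsto u\,f_{k,l}(u^{2})$; the assumption $f,g\in\mathcal{D}(R)$ ensures that $uf_{k,l}$ and $ug_{k,l}$ lie in the same weighted $L_{2}$-space and that the resulting series define elements of $\bold{L}_{2}(\mR^{m|2n})$. The required identity $\langle Rf|g\rangle_{2}=\langle f|Rg\rangle_{2}$ then reduces, sector by sector, to
\[
\int_{0}^{\infty}\!(u f_{k,l}(u^{2}))\,\overline{g_{k,l}(u^{2})}\,u^{M+2k-1}du=\int_{0}^{\infty}\!f_{k,l}(u^{2})\,\overline{(u g_{k,l}(u^{2}))}\,u^{M+2k-1}du,
\]
which is trivial since $u$ is real and positive.

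The only delicate point is the consistency check that the Taylor-series definition \ref{sphsymm} of $R=\sqrt{R^{2}}$, when multiplied with the expansion \ref{deffkl} of the radial profile $f_{k,l}(R^{2})$, really produces the element corresponding to $u\,f_{k,l}(u^{2})$ in the above isometric identification. This can be verified by first checking the identity on the dense subspace $\cP\exp(-R^{2}/2)$, where $f_{k,l}$ is a polynomial times $\exp(-u^{2}/2)$ and definition \ref{sphsymm} reduces to an explicit polynomial computation in $r^{2}$ and $\theta^{2}$, and then extending by continuity using the isometry.
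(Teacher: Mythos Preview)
Your proposal is correct and follows essentially the same route as the paper: decompose into $(k,l)$-sectors, identify each sector isometrically with $L_{2}(\mR^{+},u^{M+2k-1}du)$ so that $R$ becomes multiplication by the real variable $u$, and conclude symmetry from the trivial one-dimensional case. The paper's proof is more terse (it simply records the identity $\langle h_{1}(R^{2})H_{k}^{(l)}|h_{2}(R^{2})H_{k}^{(l)}\rangle_{2}=\int_{\mR^{+}}h_{1}(u^{2})\overline{h_{2}(u^{2})}\,u^{M+2k-1}du$ without spelling out the Laguerre-orthogonality verification or the density argument you sketch), but the underlying argument is the same.
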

\begin{proof}
The definition of the inner product and the notation in definition \ref{deffkl} leads to
\begin{eqnarray*}
\langle R\,f|g\rangle_2&=&\sum_{k,l}\langle (R\,f)_{k,l}(R^2)H_{k}^{(l)}|g_{k,l}(R^2)H_k^{(l)}\rangle_2\\
&=&\sum_{k,l}\langle R\, f_{k,l}(R^2)H_{k}^{(l)}|g_{k,l}(R^2)H_k^{(l)}\rangle_2.
\end{eqnarray*}
The lemma then follows from the observation
\begin{eqnarray*}
\langle h_1(R^2)H_{k}^{(l)}|h_2(R^2)H_k^{(l)}\rangle_2&=&\int_{\mR^+}h_1(u^2)h_2(u^2)u^{M+2k-1}du
\end{eqnarray*}
for $h_1(u^2),h_2(u^2)\in L_2(\mR^+,u^{M+2k-1}du)$.
\end{proof}

\begin{theorem}
\label{isom}
If a set of functions $f_{k,j}(r_{\uy}^2)H^{b(t)}_{M,k}(\uy)$, $j,k\in\mN$, $t=1,\cdots,\dim\cH_{M,k}^b$ constitutes a basis for $L_2(\mR^M)$, then the set of generalized functions 
\begin{eqnarray*}
\{f_{k,j}(R^2)H_k^{(l)}|j,k\in\mN,l=1,\cdots,\dim \cH_k\},
\end{eqnarray*}
with $\{H_k^{(l)}\}$ a basis for $\cH_k$, constitutes a basis for $\bold{L}_2(\mR^{m|2n})$.
\end{theorem}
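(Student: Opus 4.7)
My proof plan relies on decomposing both Hilbert spaces into sectors labeled by spherical harmonics, each sector being isometric to the same weighted radial Hilbert space. This reduces the theorem to the observation that a spherical-harmonic basis of $L_2(\mR^M)$ with the prescribed radial structure must come from a radial basis of $L_2(\mR^+, u^{M+2k-1}du)$ for each $k$.

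First, on the super side, definition \ref{deffkl} already provides, for every $f \in \bold{L}_2(\mR^{m|2n})$, an $\bold{L}_2$-convergent expansion $f = \sum_{k,l} f_{k,l}(R^2) H_k^{(l)}$, and the calculation inside the proof of lemma \ref{Rsymm} yields the key inner-product identity
\[
\langle h_1(R^2) H_k^{(l)} \,|\, h_2(R^2) H_k^{(l)} \rangle_2 \; = \; \int_0^\infty h_1(u^2)\,\overline{h_2(u^2)}\, u^{M+2k-1}\, du .
\]
Together with the $\langle \cdot|\cdot\rangle_2$-orthogonality of the $\phi_{j,k,l}$ across distinct $(k,l)$ (taking first the basis of $\cH_k$ satisfying \eqref{orthbasissuper} and then transporting to the general basis $\{H_k^{(l)}\}$), this identifies the closure $V_{k,l}$ of $\{h(R^2)H_k^{(l)}\}$ isometrically with $L_2(\mR^+, u^{M+2k-1}du)$ and gives the Hilbert direct sum $\bold{L}_2(\mR^{m|2n}) = \bigoplus_{k,l} V_{k,l}$.

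Second, on the bosonic side, polar coordinates together with the orthogonality of polynomial spherical harmonics of distinct degrees on $\mS^{M-1}$ yield the analogous decomposition $L_2(\mR^M) = \bigoplus_{k,t} W_{k,t}$, where $W_{k,t}$ is the closure of $\{g(r_{\uy}^2) H_{M,k}^{b(t)}\}$, and is isometric to $L_2(\mR^+, u^{M+2k-1}du)$ up to a fixed constant coming from $\int_{\mS^{M-1}} |H_{M,k}^{b(t)}|^2\, d\sigma$ and the homogeneity $|H_{M,k}^{b(t)}(\uy)|^2 = r_{\uy}^{2k}|H_{M,k}^{b(t)}(\uy/r_{\uy})|^2$. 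Uniqueness of this decomposition, combined with the hypothesis that $\{f_{k,j}(r_{\uy}^2)H_{M,k}^{b(t)}\}_{j,k,t}$ is a basis of $L_2(\mR^M)$, forces, for every fixed $k$, the radial family $\{f_{k,j}(u^2)\}_{j \in \mN}$ to be a basis of $L_2(\mR^+, u^{M+2k-1}du)$.

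Finally I would transport this radial basis back through the isometry $V_{k,l} \cong L_2(\mR^+, u^{M+2k-1}du)$. Since $\{H_k^{(l)}\}_l$ spans $\cH_k$, every element of the $k$-block $\bigoplus_l V_{k,l}$ admits a unique expansion in $\{f_{k,j}(R^2)H_k^{(l)}\}_{j,l}$ (uniqueness of the $l$-coefficients from $\{H_k^{(l)}\}$ being a basis of $\cH_k$, uniqueness of the $j$-coefficients from the previous step); summing over $k$ yields the theorem. The main delicate point is verifying that the radial weights match on the two sides; this is transparent once one notes that on the bosonic side the weight arises as (polar Jacobian) $\times$ (homogeneity) $= r^{M-1}dr \cdot r^{2k}$, while on the super side it appears directly with exponent $M+2k-1$ through the formula in lemma \ref{Rsymm}. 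No convergence issue arises, since all decompositions are orthogonal with closed sectors.
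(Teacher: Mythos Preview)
Your proposal is correct and follows essentially the same strategy as the paper: decompose both Hilbert spaces into sectors indexed by spherical harmonics, identify each sector with the radial space $L_2(\mR^+,u^{M+2k-1}du)$, and transport the basis sector by sector. The only cosmetic difference is that the paper defines the isomorphism $\chi:L_2(\mR^M)_{k,t}\to\bold{L}_2(\mR^{m|2n})_{k,l}$ directly (checking it carries bosonic Hermite functions to super Hermite functions), whereas you factor through the common radial model; your version is slightly more explicit about the general basis $\{H_k^{(l)}\}$ of $\cH_k$, which the paper leaves implicit.
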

\begin{proof}
When $f(r_{\uy}^2)H_{k}^{b(t)}(\uy)$ is an element of $L_2(\mR^M)$, then $f(R^2)H_k^{(l)}(\bold{x})$ is an element of $\bold{L}_2(\mR^{m|2n})$, this follows from the discussion before proposition \ref{Hilbertdist}. This also generates a morphism $\chi:L_2(\mR^M)_{k,t}\to \bold{L}_2(\mR^{m|2n})_{k,l}$, with $ \bold{L}_2(\mR^{m|2n})_{k,l}$ the subspace of $ \bold{L}_2(\mR^{m|2n})$ generated by $\{\phi_{j,k,l}(\bold{x}),j\in\mN\}$ with $k$ and $l$ fixed,
\[
\chi[f(r_{\uy}^2)H_{k}^{b(t)}(\uy)]=f(R^2)H_k^{(l)}(\bold{x}).
\]
This morphism satisfies $\chi[\phi^b_{j,k,t}(\uy)]=\phi_{j,k,l}(\bold{x})$ and therefore is an isomorphism. Since $\{f_{k,j}(r_{\uy}^2)H^{b(t)}_k(\uy),j\in\mN\}$ is a basis for $L_2(\mR^M)_{k,l}$, $\{f_{k,j}(R^2)H_k^{(l)}(\bold{x}),j\in\mN\}$ is a basis for $\bold{L}_2(\mR^{m|2n})_{k,l}$. This implies that every $\phi_{j,k,l}(\bold{x})$ can be expanded in terms of the $f_{k,j}(R^2)H_k^{(l)}(\bold{x})$.
\end{proof}

\begin{remark}
In view of the Gelfand triple, there are some elements of the Hilbert space $\bold{L}_2(\mR^{m|2n})$ which are usually regarded as generalized functions. There are also regular functions (elements of $L_2(\mR^m)\otimes\Lambda_{2n}$) which are only regarded as elements of $\cS'(\mR^m)\otimes\Lambda_{2n}$, but not as elements of the Hilbert space.
\end{remark}

This remark and the two Gelfand triples can be captured in the following venn diagram.

\def\firstcircle{(0:2.3cm) ellipse (2.7cm and 2.2cm)}
\def\secondcircle{(0:3.5cm) circle (1.2cm)}
\def\thirdcircle{(0:4.7cm) ellipse (2.7cm and 2.2cm)}
\def\fourthcircle{(0:3.5cm) ellipse (5.6cm and 3.2cm)}
\def\f{(-1.5cm:4.3cm) circle (0cm)}

\begin{tikzpicture}[line width=0.25pt]
\tikz[label distance = 2mm];

    \draw \firstcircle node[text=black] {$\bold{L}_2(\mR^{m|2n})\qquad\qquad\qquad\qquad\qquad$};
    \draw \secondcircle node [text=black] {$\cS(\mR^m)\otimes\Lambda_{2n}$};
    \draw \thirdcircle node [text=white,black] {$\qquad\qquad\qquad\qquad\quad L_2(\mR^m)\otimes\Lambda_{2n}$};
    \draw \fourthcircle;
    \draw \f node [text=black] {$\cS'(\mR^m)\otimes\Lambda_{2n}$};

\end{tikzpicture}

Now we consider the function $f(\bold{x})\in L_2(\mR^m)\otimes\Lambda_{2n}$ from the proof of theorem \ref{nogoL2}. Since $\langle f| f\rangle_2$ would be infinite we obtain $f(\bold{x})\not\in \bold{L}_2(\mR^{m|2n})$. In particular the venn diagram above shows this also implies $f(\bold{x})\not\in \cS(\mR^m)\otimes\Lambda_{2n}$. This can also be seen from corollary \ref{CHrepr} and equation \eqref{normin2}, $||f||_0$ is infinite. Since $f(\bold{x})\in L_2(\mR^m)\otimes\Lambda_{2n}$, the expression $||f||_0^\ast$ is finite, so the equivalence of norms implies that for some $r>0$, $||f||_r^\ast$ must be infinite.

As a final result of this section, we obtain a Parseval theorem for the super Fourier transform with respect to both inner products $\langle . |. \rangle_{1}$ and $\langle . |. \rangle_{2}$.

\begin{theorem}{(Parseval)}
\label{Pars}
The adjoint of $\cF_{m | 2n}^{\pm} $ under both of the inner products $\langle.|.\rangle_1$ and $\langle .|.\rangle_2$ is given by $\cF_{m | 2n}^{\mp} $. This implies that for $f$ and $g\in L_2(\mR^m)_{m|2n}$ one has
\[
\langle f|g\rangle_1=\langle \cF_{m | 2n}^{\pm} (f)|\cF_{m | 2n}^{\pm} (g)\rangle_1
\]
and for $f$ and $g\in \bold{L}_2(\mR^{m|2n})$
\[
\langle f|g\rangle_2=\langle \cF_{m | 2n}^{\pm} (f)|\cF_{m | 2n}^{\pm} (g)\rangle_2.
\]
\end{theorem}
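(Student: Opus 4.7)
\medskip

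The strategy is to exploit the fact that $\cF^{\pm}_{m|2n}$ admits an orthonormal basis of eigenvectors (with unimodular eigenvalues) with respect to each of the two inner products, and then compute Parseval directly on the basis expansion. Note first that from the eigenvalue relation \eqref{FourCH} one has $\cF_{m|2n}^{+}\cF_{m|2n}^{-}(\phi_{j,k,l})=\phi_{j,k,l}$, hence $\cF_{m|2n}^{\mp}$ is the two-sided inverse of $\cF_{m|2n}^{\pm}$ on $\cP\exp(-R^{2}/2)$ (and, after the extensions below, on the full Hilbert spaces). Once the Parseval identity is established, the adjoint statement $(\cF_{m|2n}^{\pm})^{\dagger}=\cF_{m|2n}^{\mp}$ follows because a norm-preserving bijection on a Hilbert space is unitary.

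For the inner product $\langle \cdot|\cdot\rangle_{1}$ on $L_{2}(\mR^{m})_{m|2n}$, the plan is to work in the product basis $\{\phi^{b}_{i,p,l}\phi^{f}_{s,q,t}\}$, which is orthonormal with respect to $\langle\cdot|\cdot\rangle_{1}$. Since $\cF^{\pm}_{m|2n}=\cF^{\pm}_{m|0}\circ\cF^{\pm}_{0|2n}$ and each factor acts as multiplication by the unimodular scalar $\exp(\pm i(2i+p)\pi/2)$ respectively $\exp(\pm i(2s+q)\pi/2)$ on the bosonic, respectively fermionic, Hermite basis vectors, the product basis consists of eigenvectors of $\cF^{\pm}_{m|2n}$ with eigenvalues of modulus one. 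For $f=\sum c_{i,p,l,s,q,t}\phi^{b}_{i,p,l}\phi^{f}_{s,q,t}$ and $g=\sum d_{i,p,l,s,q,t}\phi^{b}_{i,p,l}\phi^{f}_{s,q,t}$ in $L_{2}(\mR^{m})_{m|2n}$ the computation
\[
\langle\cF^{\pm}_{m|2n}(f)|\cF^{\pm}_{m|2n}(g)\rangle_{1}=\sum c_{i,p,l,s,q,t}\overline{d_{i,p,l,s,q,t}}\,|e^{\pm i(2i+p+2s+q)\pi/2}|^{2}=\langle f|g\rangle_{1}
\]
is then immediate.

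For $\langle\cdot|\cdot\rangle_{2}$ on $\bold{L}_{2}(\mR^{m|2n})$, the super Hermite functions $\{\phi_{j,k,l}\}$ are by definition \ref{Hilbert} an orthonormal basis, and by \eqref{FourCH} they are eigenvectors of $\cF^{\pm}_{m|2n}$ with unimodular eigenvalues. The main technical point is to extend $\cF^{\pm}_{m|2n}$ from $\cS(\mR^{m})_{m|2n}$ to $\bold{L}_{2}(\mR^{m|2n})$: I would do this by declaring, for $f$ represented by the sequence $(a_{j,k,l})$ with $\sum|a_{j,k,l}|^{2}<\infty$, that
\[
\cF^{\pm}_{m|2n}(f)=\sum_{j,k,l}a_{j,k,l}\,e^{\pm i(2j+k)\pi/2}\phi_{j,k,l}.
\]
Because $|e^{\pm i(2j+k)\pi/2}|=1$, the resulting sequence is again square summable, so $\cF^{\pm}_{m|2n}$ preserves $\bold{L}_{2}(\mR^{m|2n})$ and its norm. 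Compatibility of this extension with the already-defined Fourier transform on $\cS'(\mR^{m})\otimes\Lambda_{2n}$ follows from lemma \ref{intecht} together with the fact that both definitions agree on the dense subspace $\cP\exp(-R^{2}/2)$ on which \eqref{FourCH} is the original statement. The Parseval identity $\langle\cF^{\pm}_{m|2n}(f)|\cF^{\pm}_{m|2n}(g)\rangle_{2}=\langle f|g\rangle_{2}$ then reduces to the same unimodular cancellation as above, using the orthonormal expansion provided by definition \ref{Hilbert}.

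The only nontrivial obstacle is checking that the sequence-based extension of $\cF^{\pm}_{m|2n}$ to $\bold{L}_{2}(\mR^{m|2n})$ really coincides with the distributional extension mentioned after \eqref{FourR}; here the density of $\cS(\mR^{m})_{m|2n}$ in $\bold{L}_{2}(\mR^{m|2n})$ (via corollary \ref{CHrepr}) together with the continuity of both extensions on the common eigenbasis $\{\phi_{j,k,l}\}$ closes the gap. Once Parseval is in hand, the adjoint relation $(\cF^{\pm}_{m|2n})^{\dagger}=\cF^{\mp}_{m|2n}$ is read off from $\langle \cF^{\pm}_{m|2n}f|g\rangle=\langle \cF^{\pm}_{m|2n}f|\cF^{\pm}_{m|2n}\cF^{\mp}_{m|2n}g\rangle=\langle f|\cF^{\mp}_{m|2n}g\rangle$, valid in both Hilbert spaces.
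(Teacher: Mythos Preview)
Your proof is correct. The treatment of $\langle\cdot|\cdot\rangle_2$ is essentially identical to the paper's: both extend $\cF^{\pm}_{m|2n}$ to $\bold{L}_2(\mR^{m|2n})$ via the dense subspace $\cS(\mR^m)_{m|2n}$ and read off Parseval from the fact that the super Hermite functions form an orthonormal eigenbasis with unimodular eigenvalues.

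For $\langle\cdot|\cdot\rangle_1$ you take a genuinely different route. The paper invokes the integral identity $\int f\overline{g}=\int\cF^{\pm}(f)\overline{\cF^{\pm}(g)}$ from \cite{DBS9} and then reduces the problem to showing that the Hodge star $\ast$ commutes with $\cF^{\pm}_{0|2n}$, which it proves by observing that the fermionic Hermite functions are simultaneous eigenvectors of both operators (via \eqref{CHstar} and \eqref{FourCH}). You bypass the $\ast$-argument entirely by working directly in the $\langle\cdot|\cdot\rangle_1$-orthonormal product basis $\{\phi^b_{i,p,l}\phi^f_{s,q,t}\}$ and using the factorization $\cF^{\pm}_{m|2n}=\cF^{\pm}_{m|0}\circ\cF^{\pm}_{0|2n}$ to see that these are eigenvectors with unimodular eigenvalues. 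Your approach is more self-contained (it does not cite the integral identity from \cite{DBS9}) and treats both inner products by the same mechanism; the paper's approach, on the other hand, isolates the structural fact that $\ast$ and $\cF^{\pm}_{0|2n}$ commute, which is of independent interest. Note that your argument does rely on the fermionic Hermite functions being eigenvectors of $\cF^{\pm}_{0|2n}$; this is not stated as a separate formula in the paper but is exactly what the paper also uses when it asserts that $\ast$ and $\cF^{\pm}_{0|2n}$ share an eigenbasis.
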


\begin{proof}
In \cite{DBS9} it was proven that for $f, g \in L_2(\mR^m)_{m|2n}$, the following holds,
\begin{equation*}
\int_{\mR^{m|2n},\bold{x}} f(\bold{x}) \overline{g(\bold{x})} = \int_{\mR^{m|2n},\bold{y}} \cF_{m | 2n}^{\pm} (f)(\bold{y}) \overline{\cF_{m | 2n}^{\pm} (g)(\bold{y})}.
\end{equation*}
This implies that
\begin{eqnarray*}
\langle f|g\rangle_1&=&\int_{\mR^{m|2n},\bold{x}} f(\bold{x}) \ast\overline{g(\bold{x})}\\
&=& \int_{\mR^{m|2n},\bold{y}} \cF_{m | 2n}^{\pm} (f)(\bold{y}) \overline{\cF_{m | 2n}^{\pm} (\ast g)(\bold{y})}.
\end{eqnarray*}
In order to prove the first part of the theorem we therefore need to show that $\ast\cF_{m | 2n}^{\pm}(f)=\cF_{m | 2n}^{\pm}(\ast f)$ for $f$ in $L_2(\mR^m)_{m|2n}$.  Since $\cF_{m | 2n}^{\pm} = \cF_{m | 0}^{\pm} \circ \cF_{0 | 2n}^{\pm}$, this is equivalent to proving $\ast\cF_{0 | 2n}^{\pm}(\alpha)=\cF_{0 | 2n}^{\pm}(\ast \alpha)$ for $\alpha$ in $\Lambda_{2n}$. Equations \eqref{CHstar} and \eqref{FourCH} imply that $\ast$ and $\cF^\pm_{0|2n}$ have a coinciding basis of eigenvectors which proves they commute.

The Fourier transform on $\bold{L}(\mR^{m|2n})$ is the Hahn-Banach extension of the Fourier transform on $\cS(\mR^m)_{m|2n}$. The second part of the theorem is then immediately proven by the orthonormality of the super Hermite functions and equation \eqref{FourCH}.
\end{proof}

\section{Orthosymplectically invariant quantum problems}
\setcounter{equation}{0}

In this section we study orthosymplectically invariant Schr\"odinger equations in superspace. We prove that the solutions of the Schr\"odinger equations derived in \cite{CDBS3} form a complete set. We also derive a criterion for essential self-adjointness for orthosymplectically invariant Hamiltonians.

When spherically symmetric quantum Hamiltonians (such as the (an)harmonic oscillator \cite{DBS3} or the hydrogen atom \cite{MR2395482}) are generalized to superspace we get super Hamiltonians with an $\mathfrak{osp}(m|2n)$ invariance. In \cite{MR2395482} the energy eigenvalues and corresponding eigenspaces were determined for the quantum Kepler problem (hydrogen atom) in superspace. In \cite{DBS3} the basis of Hermite functions was constructed for the quantum harmonic oscillator in superspace. In \cite{CDBS3} general orthosymplectically invariant Schr\"odinger equations were studied in the context of orthosymplectically invariant functions and harmonic analysis. They were solved using the results from the purely bosonic case.

It was proven in \cite{CDBS3} that a general orthosymplectically invariant Hamiltonian is of the form
\begin{equation}
\label{superham}
H=-\frac{1}{2}\nabla^2+V(R^2),
\end{equation}
with $V(R^2)$ defined in definition \ref{sphsymm}. For every such a super Hamiltonian we can also consider the special case of the bosonic Hamiltonian of the form 
\begin{equation}
\label{bosham}
H_b=-\frac{1}{2}\nabla_{b,M}^2+V(r_{\uy}^2)
\end{equation}
in $M$ bosonic dimensions, $\uy\in\mR^M$ and $\nabla_{b,M}^2=\sum_{j=1}^M\partial_{y_j}^2$. With these notations, the following was proven in theorem 7 in \cite{CDBS3}.
 
\begin{lemma}
\label{superopl}
If the function $H_k^bf(r^2_{\uy})$ with $H_k^b\in\cH_{k,M}^b$ is an eigenvector of the $M$-dimensional Hamiltonian (\ref{bosham}) with eigenvalue $E$, then
\[
\left[-\frac{1}{2}\nabla^2+V(R^2)\right]H_k f(R^2)=EH_kf(R^2),
\]
for an arbitrary $H_k\in\cH_k$ in superspace $\mR^{m|2n}$. In case $f(r^2_{\uy})$ is $n$ times differentiable $f(R^2)$ is given by definition \ref{sphsymm}. If not, $f(R^2)$ is defined formally as element of $\bold{L}_2(\mR^{m|2n})$ by a series expansion (definition \ref{Hilbert}) or by a Taylor expansion as an element of $\cS'(\mR^{m})\otimes\Lambda_{2n}$ (definition \ref{sphsymm}).
\end{lemma}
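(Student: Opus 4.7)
The strategy is to reduce the super eigenvalue equation to the same radial ODE that governs the bosonic problem. The key tool is formula \eqref{laplradharm}, which states that on any $H_k\in\cH_k$ in superspace,
\[
\nabla^2\bigl(h(R^2)H_k\bigr)=\bigl(4R^2h''(R^2)+(4k+2M)h'(R^2)\bigr)H_k .
\]
This formula has exactly the same shape as the bosonic formula in $M$ dimensions, namely $\nabla^2_{b,M}\bigl(h(r_{\uy}^2)H_k^b\bigr)=(4r_{\uy}^2h''+(4k+2M)h')H_k^b$, the only difference being the replacement $r_{\uy}^2\mapsto R^2$. This dimensional continuity is the heart of the argument.

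First I would assume $f$ is smooth enough (say $f\in C^{n+2}$) so that $f(R^2)$ is a bona fide element of $C^0(\mR^m)_{m|2n}$ via definition \ref{sphsymm} and \eqref{laplradharm} applies. From the hypothesis $(-\tfrac12\nabla^2_{b,M}+V)H_k^bf(r_{\uy}^2)=EH_k^bf(r_{\uy}^2)$ and the bosonic version of \eqref{laplradharm} I would extract the radial identity
\[
-2t f''(t)-(2k+M)f'(t)+V(t)f(t)=Ef(t),\qquad t>0.
\]
This is an identity between functions of a single real variable $t$. Substituting $t=R^2$ and applying \eqref{laplradharm} in superspace gives
\[
\Bigl(-\tfrac12\nabla^2+V(R^2)\Bigr)H_kf(R^2)=\bigl(-2R^2f''(R^2)-(2k+M)f'(R^2)+V(R^2)f(R^2)\bigr)H_k=EH_kf(R^2),
\]
which is the claim.

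Next I would remove the smoothness assumption. When $f$ is not $n$-times differentiable, $f(R^2)$ is only defined as a formal series in $\bold{L}_2(\mR^{m|2n})$ or as a tempered distribution. In either case one may write $f(r_{\uy}^2)=\sum_j a_{j,k}\phi^b_{j,k,t}(\uy)/H_k^{b(t)}(\uy)$ (i.e.\ expand the radial part in bosonic Hermite functions, as justified by theorem \ref{isom}) and obtain a corresponding expansion $f(R^2)H_k=\sum_j a_{j,k}\phi_{j,k,l}$ in superspace. Each $\phi_{j,k,l}$ is smooth and satisfies the identity just proved, and the operator $-\tfrac12\nabla^2+V(R^2)$ acts term by term on such expansions in $\bold{L}_2(\mR^{m|2n})$ and in $\cS'(\mR^m)\otimes\Lambda_{2n}$ (by definition \ref{actieSp}). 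Hence the identity extends by linearity and continuity to the general case.

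The main obstacle is the second step: justifying that $V(R^2)$ and $\nabla^2$ act term-by-term on the formal series when $f$ is not smooth. This requires showing that the convergence of $\sum_j a_{j,k}\phi_{j,k,l}$ in the relevant topology is preserved under these operators, which follows from theorem \ref{hermitisch} (symmetry of $\nabla^2$ on $\bold{L}_2(\mR^{m|2n})$) combined with the fact that multiplication by $V(R^2)$ is compatible with the weak distributional definition used in \ref{sphsymm}. Once the smooth case is handled via the dimensional continuity of \eqref{laplradharm} and the non-smooth case via the Hermite expansion, the proof is complete.
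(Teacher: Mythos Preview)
The paper does not actually prove this lemma; it cites it as theorem~7 of \cite{CDBS3}. So there is no in-paper proof to compare against. Your argument for the smooth case is exactly the right one and is almost certainly what \cite{CDBS3} does: formula~\eqref{laplradharm} shows that the radial operator obtained by restricting $-\tfrac12\nabla^2+V(R^2)$ to functions of the form $h(R^2)H_k$ is literally $-2t\,\partial_t^2-(2k+M)\partial_t+V(t)$, which coincides with the bosonic radial operator in $M$ dimensions, so the radial ODE is identical and the eigenvalue equation transfers verbatim.

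Your handling of the non-smooth case has the right destination but a slightly garbled route. The sentence ``each $\phi_{j,k,l}$ is smooth and satisfies the identity just proved'' is not correct as stated: the identity you proved is $H\bigl(f(R^2)H_k\bigr)=E f(R^2)H_k$ for the specific $f$, not for Laguerre polynomials. The clean way to phrase the extension is to observe that the isomorphism $\chi:L_2(\mR^M)_{k,t}\to\bold{L}_2(\mR^{m|2n})_{k,l}$ from the proof of theorem~\ref{isom} intertwines the two Hamiltonians. Indeed, by the smooth case (applied with $h$ any polynomial) both Hamiltonians act on the Hermite basis via the same tridiagonal recursion \eqref{expnabla2} plus the same multiplication rule for $V$, hence $\chi\circ H_b=H\circ\chi$ on the dense subspace $\cP\exp(-R^2/2)$, and this extends to the full Hilbert spaces by continuity. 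With that intertwining in hand, $H_b\psi=E\psi$ immediately gives $H(\chi\psi)=E(\chi\psi)$, which is the statement. This repairs the gap you yourself flagged regarding term-by-term action of $V(R^2)$.
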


Now the Hilbert space structure is obtained we can prove that this set of solutions is complete.
\begin{theorem}
\label{oplSchro}
If $f_{k,j}(r_{\uy}^2)H^{b(l)}_k$ is an orthonormal basis for $L_2(\mR^{M})$ of eigenvectors of the bosonic Schr\"odinger equation in $M$ dimensions (\ref{bosham}), then $f_{k,j}(R^2)H^{(l)}_k$ is an orthonormal basis for $\bold{L}_2(\mR^{m|2n})$ of eigenvectors of the Schr\"odinger equation
\[
\left[-\frac{1}{2}\nabla^2+V(R^2)\right]\chi(\bold{x})=E\chi(\bold{x}).
\]
\end{theorem}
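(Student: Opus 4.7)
The plan is to combine three results already at hand. Lemma \ref{superopl} delivers the eigenvalue equation: for every $(k,j,l)$, the object $f_{k,j}(R^2) H_k^{(l)}$ — interpreted, when $f_{k,j}$ lacks regularity, via the Laguerre expansion of Definition \ref{deffkl} as an element of $\bold{L}_2(\mR^{m|2n})$ — solves $[-\nabla^2/2 + V(R^2)]\chi = E\chi$ with the same eigenvalue $E_{k,j}$ as the corresponding bosonic problem on $\mR^M$. Theorem \ref{isom} supplies the completeness half of the conclusion, since the hypothesis on the bosonic family is exactly the one it requires. What remains is to upgrade completeness to orthonormality with respect to $\langle \cdot|\cdot\rangle_2$.

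For orthonormality I would use the isomorphism $\chi: L_2(\mR^M)_{k,t}\to\bold{L}_2(\mR^{m|2n})_{k,l}$ built in the proof of Theorem \ref{isom}, which sends $\phi^b_{i,k,t}(\uy)$ to $\phi_{i,k,l}(\bold{x})$. Expanding the hypothesized basis element in the bosonic spherical Hermite basis gives
\[
f_{k,j}(r_{\uy}^2) H_{M,k}^{b(t)}(\uy) = \sum_i c_i^{(k,j,t)}\phi^b_{i,k,t}(\uy),
\]
and the assumed $L_2(\mR^M)$-orthonormality of the left-hand family (together with the $L_2$-orthonormality of the bosonic Hermite functions) forces $\sum_i c_i^{(k,j,t)}\overline{c_i^{(k,j',t)}}=\delta_{jj'}$. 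Transporting the expansion through $\chi$ yields $f_{k,j}(R^2)H_k^{(l)} = \sum_i c_i^{(k,j,t)}\phi_{i,k,l}(\bold{x})$, and then Theorem \ref{orthCH} immediately gives $\langle f_{k,j}(R^2)H_k^{(l)} | f_{k,j'}(R^2)H_k^{(l)}\rangle_2 = \delta_{jj'}$ on each fixed $(k,l)$-stratum. Orthogonality between different strata $(k,l)\neq(k',l')$ is automatic, since the two resulting super-Hermite expansions involve disjoint index sets, and Theorem \ref{orthCH} combined with the normalization \eqref{orthbasissuper} on $\{H_k^{(l)}\}$ makes the cross inner products vanish.

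The one genuinely non-routine point — and hence the main obstacle — is the identification $\chi[\phi^b_{i,k,t}]=\phi_{i,k,l}$ with matching Laguerre normalizations $\zeta_{i,k}^M$. This rests on the fact that $\phi^b_{i,k,t}$ in $\mR^M$ and $\phi_{i,k,l}$ in superspace are both constructed from the same Laguerre polynomials $L_i^{M/2+k-1}$ with constants depending only on the super-dimension $M$, which is precisely the engine driving the proof of Theorem \ref{isom}. Once this dictionary is in hand, the rest of the argument is bookkeeping: the eigenfunction property, the orthonormality, and the completeness all follow by pulling back to the purely bosonic statement in $M$ dimensions and invoking the corresponding result there.
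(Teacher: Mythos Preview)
Your proposal is correct and follows the paper's approach. The paper's proof is a one-line citation of Lemma \ref{superopl} and Theorem \ref{isom}; you have made explicit the orthonormality verification that the paper leaves implicit in the isometry $\chi$ (which sends the orthonormal Hermite basis $\phi^b_{i,k,t}(\uy)$ to the orthonormal super Hermite basis $\phi_{i,k,l}(\bold{x})$), but the underlying mechanism is identical.
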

\begin{proof}
This follows from lemma \ref{superopl} and theorem \ref{isom}.
\end{proof}

To find the multiplicities for the energy levels it is important to note that the dimension of the spherical harmonics of degree $k$ (in superspace) is given by
\[
\dim\cH_k=\sum_{i=0}^{\min(k,2n)}\binom{2n}{i}\binom{k-i+m-1}{m-1}-\sum_{i=0}^{\min(k-2,2n)}\binom{2n}{i}\binom{k-i+m-3}{m-1},
\]
see \cite{DBS5}.

The intersection of $L_2(\mR^m)_{m|2n}$ and $\bold{L}_2(\mR^{m|2n})$ has an important property for orthosymplectically invariant Schr\"odinger equations.
\begin{theorem}
\label{inter}
If the function $f(R^2)H_k$, with $f(R^2)$ as definition \ref{sphsymm} (so with $f\in C^n(\mR^+)$), is an element of $L_2(\mR^m)_{m|2n}$ for each $H_k\in\cH_k$, then
\begin{eqnarray*}
f(R^2)H_k&\in&\bold{L}_2(\mR^{m|2n}),
\end{eqnarray*}
for each $H_k\in\cH_k$.
\end{theorem}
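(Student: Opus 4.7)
The plan is to reduce the claim, via the isomorphism $\chi$ constructed in the proof of Theorem~\ref{isom}, to a scalar integrability condition on $f$, and then to extract that condition from the hypothesis by a judicious choice of $H_k$. The key point is that $f(R^2)H_k^{(l)}(\bold{x})\in\bold{L}_2(\mR^{m|2n})$ is equivalent, under $\chi$, to $f\in L_2(\mR^+,u^{M/2+k-1}du)$, which does not depend on the particular $H_k\in\cH_k$; hence one privileged choice of $H_k$ suffices.

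I would first observe that any nonzero purely bosonic spherical harmonic $H_k^b\in\cH_k^b$ (which exists for $m\ge 1$ and $k\ge 0$) is automatically an element of $\cH_k$: since the fermionic derivations vanish on functions of $\ux$ alone, $\nabla^2 H_k^b=\nabla_b^2 H_k^b=0$. The hypothesis then gives $f(R^2)H_k^b\in L_2(\mR^m)_{m|2n}$. Expanding $f(R^2)H_k^b=H_k^b(\ux)\sum_{j=0}^n\frac{\theta^{2j}}{j!}f^{(j)}(r^2)$ via Definition~\ref{sphsymm}, and using that the monomials $\theta^{2j}$ sit in distinct Grassmann degrees (hence are pairwise orthogonal with strictly positive norms for $\langle\cdot|\cdot\rangle_{\Lambda_{2n}}$), one obtains
\[
\|f(R^2)H_k^b\|_1^2=\sum_{j=0}^n c_j\int_0^\infty r^{m+2k-1}|f^{(j)}(r^2)|^2\,dr
\]
with strictly positive $c_j$. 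In particular the $j=0$ term is finite, so $\int_0^\infty u^{m/2+k-1}|f(u)|^2\,du<\infty$ after the substitution $u=r^2$.

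I would then compare the weights $u^{m/2+k-1}$ and $u^{M/2+k-1}=u^{m/2-n+k-1}$. On $[1,\infty)$ we have $u^{M/2+k-1}\le u^{m/2+k-1}$ (since $n\ge 0$), so tail integrability transfers directly. On $(0,1]$, the hypothesis $f\in C^n(\mR^+)$ makes $f$ bounded near $0$, and the standing assumption $M>0$ together with $k\ge 0$ gives $M/2+k>0$, so $\int_0^1 u^{M/2+k-1}|f(u)|^2\,du$ converges. Thus $\int_0^\infty u^{M/2+k-1}|f(u)|^2\,du<\infty$, which is precisely the condition that $f(r_{\uy}^2)H_{M,k}^{b(t)}(\uy)\in L_2(\mR^M)$. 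Applying the isomorphism $\chi:L_2(\mR^M)_{k,t}\to\bold{L}_2(\mR^{m|2n})_{k,l}$ from Theorem~\ref{isom} yields $f(R^2)H_k^{(l)}(\bold{x})\in\bold{L}_2(\mR^{m|2n})$ for every basis element of $\cH_k$, and since $\cH_k$ is finite dimensional, linearity gives the result for every $H_k\in\cH_k$.

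The principal subtlety I expect is the comparison of the two radial weights near the origin: the hypothesis naturally supplies integrability against the heavier weight $r^{m+2k-1}$, while the target Hilbert space demands the lighter weight $r^{M+2k-1}$, which is worse at $0$. This is precisely where the assumption $f\in C^n(\mR^+)$ (ensuring $f$ is bounded near $0$) and the standing hypothesis $M>0$ (ensuring the total exponent $M/2+k$ is strictly positive) are used in an essential way. Everything else is bookkeeping once the purely bosonic $H_k^b$ is chosen to isolate the single-variable integrability of $f$.
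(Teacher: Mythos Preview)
Your proof is correct and takes a genuinely different route from the paper's. Both begin by specializing to a purely bosonic $H_k^b\in\cH_k^b\subset\cH_k$ and end by invoking Theorem~\ref{isom}, but the extraction of radial integrability differs. The paper computes $\int_{\mR^{m|2n}}f(R^2)^2(H_k^b)^2$ and applies a dimensional-reduction identity (theorem~4 of \cite{CDBS3}) to show this equals a positive constant times $\int_0^\infty r^{M+2k-1}f^2(r^2)\,dr$, landing on the target condition in one stroke. You instead use orthogonality of the $\theta^{2j}$ under $\langle\cdot|\cdot\rangle_{\Lambda_{2n}}$ to isolate the $j=0$ piece $\int_0^\infty r^{m+2k-1}|f(r^2)|^2\,dr<\infty$, and then pass to the lighter weight $r^{M+2k-1}$ by a tail/origin split, using $M\le m$ on $[1,\infty)$ and boundedness of $f$ near $0$ together with $M+2k>0$ on $(0,1]$. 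Your argument is more elementary and self-contained (no appeal to \cite{CDBS3}); the paper's is shorter once that external identity is granted and has the advantage of producing the exact $M$-dimensional weight directly rather than by comparison. One minor wording point: boundedness of $f$ near $0$ is not a consequence of $f\in C^n(\mR^+)$ alone if $\mR^+$ denotes the open half-line; it follows rather from the stipulation in Definition~\ref{sphsymm} that $f(R^2)\in C^0(\mR^m)_{m|2n}$.
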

\begin{proof}
Since $H_k$ is arbitrary, we consider $H_k=H_k^b\in\cH_k^b$, a normalized bosonic spherical harmonic. The condition $f(R^2)H_k^b\in L_2(\mR^m)_{m|2n}$ implies
\begin{eqnarray*}
f(R^2)f(R^2)H_k^bH_k^b&\in&L_1(\mR^m)_{m|2n}.
\end{eqnarray*}
This implies the following expression is finite
\begin{eqnarray*}
\int_{\mR^{m|2n}}f(R^2)f(R^2)H_k^bH_k^b&=&\int_{0}^\infty r^{m+2k-1}dr\int_Bf(R^2)f(R^2)
\end{eqnarray*}
The proof of theorem 4 in \cite{CDBS3} then shows this is equal to
\begin{eqnarray*}
c\int_{0}^\infty r^{M+2k-1}f^2(r^2)dr %=(-1)^i\sum_{j=0}^{n-q}c_j(-1)^{n-q-j}\frac{\Gamma(\frac{m}{2}+p+2i-j)}{\Gamma(\frac{M}{2}+p+q+2i)}$
\end{eqnarray*}
for some coefficient $c$. This coefficient can be calculated by taking the example $f(R^2)=\exp(-R^2/2)$ and using equation \eqref{superint2} and lemma \ref{SSin4}
\begin{eqnarray*}
\int_{\mR^{m|2n}}\exp(-R^2)H_k^bH_k^b&=&\frac{1}{2}\Gamma(2k+\frac{M}{2})a_{0,k,0}b_{0,k,0}\\
&=&\frac{\Gamma(\frac{m}{2}+k)}{\Gamma(\frac{M}{2}+k)n!}\int_0^\infty  r^{M+2k-1}\exp(-r^2)dr,
\end{eqnarray*}
so $c=\frac{\Gamma(\frac{m}{2}+k)}{\Gamma(\frac{M}{2}+k)n!}$. Therefore 
\begin{eqnarray*}
\int_{0}^\infty r^{M+2k-1} f(r^2)&=&\frac{\Gamma(\frac{M}{2}+k)n!}{\Gamma(\frac{m}{2}+k)}\int_{\mR^{m|2n}}f(R^2)f(R^2)H_k^bH_k^b<\infty
\end{eqnarray*}
and $f(v^2)\in L_2(\mR^+,v^{M+2k-1}dv)$, from which it is clear that $f(R^2)H_k\in\bold{L}_2(\mR^{m|2n})$ for each $H_k\in \cH_k$ (see theorem \ref{isom}).
\end{proof}

\begin{remark}
This theorem is interesting for orthosymplectically invariant Schr\"odinger equations. If an eigenfunction of the typical form $f(R^2)H_k$ is found, which is an element of $L_2(\mR^m)_{m|2n}$, theorem \ref{inter} implies this is an actual solution, inside the Hilbert space. This is for instance the case for the solutions of the quantum Kepler problem obtained in \cite{MR2395482}.
\end{remark}
It is however still possible, in general, that there are solutions in $\bold{L}_2(\mR^{m|2n})$ which are not contained in $L_2(\mR^{m})_{m|2n}$.

We conclude this section with a criterion for essential self-adjointness for orthosymplectically invariant Hamiltonians in superspace. The proof is based on the classical case, theorem $X.11$ in \cite{MR0493420}.

\begin{theorem}
For $V\in C^n(\mR^+)$, the Hamiltonian 
\[
H=-\frac{1}{2}\nabla^2+V(R^2).
\]
on $\bold{L}_2(\mR^{m|2n})$ is essentially self-adjoint on $D_{m|2n}$, with $D=C_0^\infty(\mR^m\backslash \{0\})$, the $C^\infty$ functions with compact support away from the origin, if 
\[
V(u)+\frac{(M-1)(M-3)}{8}\frac{1}{u} \ge \frac{3}{8u}
\]
for $0\le u\le u_0$ for some $u_0>0$.
\end{theorem}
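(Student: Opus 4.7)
The strategy is to reduce the question sector-by-sector to the classical essential self-adjointness of radial Schr\"odinger operators on $M$-dimensional Euclidean space, after which the Reed–Simon theorem X.11 in \cite{MR0493420} applies directly. First I would use theorem~\ref{isom} to split the Hilbert space as an orthogonal sum of $\mathfrak{osp}$-angular-momentum sectors
\[
\bold{L}_2(\mR^{m|2n}) \;=\; \bigoplus_{k,l} \bold{L}_2(\mR^{m|2n})_{k,l},
\]
where $\bold{L}_2(\mR^{m|2n})_{k,l}$ is the closure of the span of $\{\phi_{j,k,l}\,|\,j\in\mN\}$. Via the unitary $\chi$ constructed in the proof of theorem~\ref{isom}, each such sector is unitarily equivalent to the ordinary angular-momentum-$k$ sector $L_2(\mR^M)_{k,t}$ of the bosonic $L_2(\mR^M)$. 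Lemma~\ref{superopl} guarantees that under $\chi$ the super Hamiltonian $H=-\tfrac12\nabla^2+V(R^2)$ corresponds sector-wise exactly to the $M$-dimensional bosonic Hamiltonian $H_b=-\tfrac12\nabla_{b,M}^2+V(r_{\uy}^2)$ restricted to the angular-momentum-$k$ sector.

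Second, I would exhibit a dense core within $D_{m|2n}$ for each sector. For any $f\in C_0^\infty((0,\infty))$ and any $H_k^{(l)}\in\cH_k$, the superfunction $f(R^2)H_k^{(l)}$ (with $f(R^2)$ given by the finite Taylor expansion of definition~\ref{sphsymm}) is smooth and compactly supported away from $\ux=0$, hence lies in $D_{m|2n}=C_0^\infty(\mR^m\setminus\{0\})\otimes\Lambda_{2n}$. Via $\chi^{-1}$ these are precisely the functions $f(r_{\uy}^2)H_{M,k}^{b(t)}$ with $f\in C_0^\infty((0,\infty))$, which are a classical core for the restriction of $H_b$ to $L_2(\mR^M)_{k,t}$.

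Third, on each sector I would apply the standard substitution $g(r)=r^{(M-1)/2}f(r^2)$ which provides a unitary $L_2(\mR^+,r^{M+2k-1}dr)\to L_2(\mR^+,dr)$, transforming $H_b$ into the one-dimensional Schr\"odinger operator
\[
L_k \;=\; -\frac{1}{2}\frac{d^2}{dr^2} \;+\; \frac{(M-1)(M-3)}{8r^2} \;+\; \frac{k(k+M-2)}{2r^2} \;+\; V(r^2).
\]
By Reed–Simon X.11, $L_k$ is essentially self-adjoint on $C_0^\infty((0,\infty))$ whenever the total coefficient of $1/(2r^2)$ in the effective potential is at least $3/4$ near the origin, i.e.\ whenever
\[
V(r^2)+\frac{(M-1)(M-3)}{8r^2}+\frac{k(k+M-2)}{2r^2}\;\ge\;\frac{3}{8r^2}
\]
for $r$ small. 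Since $k(k+M-2)\ge 0$, the worst case is $k=0$, which (in the variable $u=r^2$) is precisely the hypothesis. Assembling sector by sector, $\overline{H|_{D_{m|2n}}}$ is self-adjoint on $\bigoplus_{k,l}\overline{\mathrm{dom}(L_k)}$, which proves the theorem.

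The main technical obstacle is the second step: one must verify that the family $\{f(R^2)H_k^{(l)}:f\in C_0^\infty((0,\infty))\}\subset D_{m|2n}$, viewed inside the sector $\bold{L}_2(\mR^{m|2n})_{k,l}$, is dense and is carried by $\chi^{-1}$ to a genuine core for the $M$-dimensional radial operator. Here one must be careful that $f(R^2)$ is only \emph{formally} a function of $R^2$ (it is a polynomial of degree $n$ in $\theta^2$ whose coefficients are derivatives of $f$), and check that the Fischer-type expansion used to identify sectors is compatible with taking $C_0^\infty$ limits; once this compatibility is in place, the remainder of the argument is bookkeeping plus the classical Reed–Simon criterion.
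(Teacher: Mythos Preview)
Your proposal is correct and follows essentially the same route as the paper: decompose $\bold{L}_2(\mR^{m|2n})$ into angular sectors, reduce each sector to a one-dimensional radial Schr\"odinger operator on the half-line, and invoke the Reed--Simon limit-point criterion (the paper goes straight to $L_2(\mR^+, r^{M+2k-1}dr)$ and cites theorem~X.10 rather than detouring through $L_2(\mR^M)$ and~X.11, and it cites Reed--Simon~VIII.33 to reassemble the sectors, which is exactly the bookkeeping you flag in your last paragraph). One harmless slip: the map $g(r)=r^{(M-1)/2}f(r^2)$ is not an isometry from $L_2(\mR^+,r^{M+2k-1}dr)$ onto $L_2(\mR^+,dr)$ for $k>0$---the correct exponent is $(M+2k-1)/2$---though your expression for $L_k$ is the right one either way.
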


\begin{proof}
The Hilbert space $\bold{L}_2(\mR^{m|2n})$ can be decomposed as 
\[
\bold{L}_2(\mR^{m|2n})=\oplus_{k=0}^{\infty}\bold{L}_2(\mR^{m|2n})_k
\]
with $\bold{L}_2(\mR^{m|2n})_k$ generated by the functions $\phi_{j,k,l}$ with $j\in\mN$, $l=1,\cdots,\dim \cH_k$. We  investigate $H$ on $D_{m|2n}\cap \bold{L}_2(\mR^{m|2n})$, in particular we define $\bold{L}_2^k$ and $\widetilde{\bold{L}_2^k}$ by 
\begin{eqnarray*}
\bold{L}_2(\mR^{m|2n})_k=\bold{L}_2^k\otimes\cH_k&\mbox{and}&\bold{L}_2(\mR^{m|2n})_k\cap D_{m|2n}=\widetilde{\bold{L}_2^k}\otimes\cH_k.
\end{eqnarray*} 
The inner product on $\bold{L}_2^k$ is given by $\langle f(R^2)|g(R^2)\rangle=\int_0^\infty r^{M+2k-1}f(r^2)\overline{g(r^2)}dr$. Using formula (\ref{laplradharm}) we find that on $\bold{L}_2(\mR^{m|2n})_k\cap D_{m|2n}$ the action of $H$ is given by
\[
H|_{\bold{L}_2(\mR^{m|2n})_k\cap D_{m|2n}}=H^{(k)}\otimes I_{\cH_k},
\]
with $I_{\cH_k}$ the unity and 
\begin{eqnarray*}
H^{(k)}f(R^2)=-2R^2f^{(2)}(R^2)-(2k+M)f^{(1)}(R^2)+V(R^2)f(R^2).
\end{eqnarray*}
 By theorem $VIII.33$ in \cite{MR0751959} we need only to prove that for each $k$, $H^{(k)}$ is essentially self-adjoint on $\widetilde{\bold{L}_2^k}$. To prove this we use the isomorphism of Hilbert spaces between $\bold{L}_2^k$ and $L_2(\mR^+,r^{M+2k-1}dr)$ given by $f(R^2)\to f(r^2)$. This isomorphism transforms $H^{(k)}$ into 
\begin{eqnarray*}
-\frac{1}{2}\frac{d^2}{dr^2}-\left(\frac{M+2k-1}{2r}\right)\frac{d}{dr}+V(r^2),
\end{eqnarray*}
which is essentially self-adjoint if 
\[
V(r^2)+\frac{(M-1)(M-3)}{8}\frac{1}{r^2} \ge \frac{3}{8r^2},
\]
for $r$ near zero, see theorem $X.10$ in \cite{MR0493420}.
\end{proof}

\section{The integrability of an $\mathfrak{sl}_2$ representation}
\setcounter{equation}{0}

\label{integrability}

On $\cS(\mR^m)\otimes\Lambda_{2n}$ we have the representation of $\mathfrak{sl}_2$ given by $iR^2/2$, $i\nabla^2/2$ and $\mE+M/2$, which is skew-symmetric in the Hilbert space $\bold{L}_2(\mR^{m|2n})$ by theorem \ref{hermitisch}. As was done in \cite{MR2352481} for Dunkl harmonic analysis we can prove that this representation is integrable using Nelson's theorem (\cite{MR0107176}). Because the representation is densely defined and skew-symmetric we only need to prove that the Casimir operator is essentially self-adjoint. This Casimir operator is given by (\cite{MR1151617})
\begin{eqnarray*}
\cC=(\mE+M/2)^2-\frac{1}{2}(R^2\nabla^2+\nabla^2R^2).
\end{eqnarray*}

To prove the essential self-adjointness of this operator we use the criterion in lemma \ref{critess} and rewrite the Casimir operator as
\begin{eqnarray*}
\cC&=&(\mE+M/2)^2-\frac{1}{4}\left((R^2+\nabla^2)^2-(R^2-\nabla^2)^2\right)\\
&=&(\mE+M/2)^2-\left(\frac{1}{2}(R^2+\nabla^2)\right)^2+\frac{1}{4}(R^2-\nabla^2)^2\\
&=&\frac{1}{2}\left\{\,\mE+M/2+\frac{1}{2}(R^2+\nabla^2)\,\,,\,\,\mE+M/2-\frac{1}{2}(R^2+\nabla^2)\,\right\}+\frac{1}{4}(R^2-\nabla^2)^2.
\end{eqnarray*}
Combining this with lemma \ref{eigCH} yields $\cC\phi_{j,k,l}=\lambda_{j,k,l}\phi_{j,k,l}$ for some constants $\lambda_{j,k,l}$, with $\phi_{j,k,l}$ the complete orthonormal set in $\bold{L}_2(\mR^{m|2n})$ of Hermite functions. This can also be found by calculating the Casimir operator using equations \eqref{commRD} and \eqref{LB}
\[
\cC=\frac{M}{2}\left(\frac{M}{2}-2\right)-\Delta_{LB}.
\]
Equations \eqref{LBH} and \eqref{LBR} then imply
\[
\cC\phi_{j,k,l}=(k+\frac{M}{2}-2)(k+\frac{M}{2})\phi_{j,k,l}.
\]

So we find 
\begin{theorem}
The representation of $\mathfrak{sl}_2$ on $\cS(\mR^m)\otimes \Lambda_{2n}$ given by $iR^2/2$, $i\nabla^2/2$ and $\mE+M/2$  exponentiates to define a unique unitary representation of $\widetilde{SL(2,\mR)}$, the universal covering of $SL(2,\mR)$, on $\bold{L}_2(\mR^{m|2n})$.
\end{theorem}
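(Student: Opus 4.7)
The plan is to invoke Nelson's theorem on analytic vectors for Lie algebra representations: a skew-symmetric representation of a finite-dimensional Lie algebra on a dense domain of a Hilbert space integrates to a unitary representation of the simply connected covering group provided the Casimir operator is essentially self-adjoint. So the proof reduces to verifying three ingredients on the invariant dense domain $\cS(\mR^m)\otimes\Lambda_{2n}\subset\bold{L}_2(\mR^{m|2n})$: (i) the three generators $iR^2/2$, $i\nabla^2/2$ and $\mE+M/2$ are skew-symmetric, (ii) they satisfy the $\mathfrak{sl}_2$ bracket relations, and (iii) the Casimir $\cC$ is essentially self-adjoint.

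Ingredient (i) is exactly the content of theorem \ref{hermitisch} (together with the observation that $\cS(\mR^m)\otimes\Lambda_{2n}$ lies in the domain of all three operators, since the norms in corollary \ref{CHrepr} dominate polynomial factors in $(2j+k+1)$). Ingredient (ii) is the commutation relation \eqref{commRD} together with $[\mE,R^2]=2R^2$ and $[\mE,\nabla^2]=-2\nabla^2$, which hold on polynomials and therefore on $\cP\exp(-R^2/2)$, and extend by density to $\cS(\mR^m)\otimes\Lambda_{2n}$. The invariance of $\cS(\mR^m)\otimes\Lambda_{2n}$ under each generator follows again from corollary \ref{CHrepr}.

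For ingredient (iii), which is the main point, I would rewrite the Casimir in two equivalent ways. Using the $\mathfrak{sl}_2$ relations \eqref{commRD} and the definition \eqref{LB} of the Laplace-Beltrami operator, the Casimir simplifies to
\begin{equation*}
\cC=(\mE+M/2)^2-\tfrac{1}{2}(R^2\nabla^2+\nabla^2R^2)=\tfrac{M}{2}\bigl(\tfrac{M}{2}-2\bigr)-\Delta_{LB}.
\end{equation*}
Combining the spectral identity \eqref{LBH} with the commutation property \eqref{LBR} then yields $\Delta_{LB}\phi_{j,k,l}=-k(k+M-2)\phi_{j,k,l}$, because $\phi_{j,k,l}=L_j^{M/2+k-1}(R^2)H_k^{(l)}\exp(-R^2/2)$ with $H_k^{(l)}\in\cH_k$ and $\Delta_{LB}$ commutes with multiplication by any function of $R^2$ and with $\exp(-R^2/2)$. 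Hence
\begin{equation*}
\cC\phi_{j,k,l}=\bigl(k+\tfrac{M}{2}-2\bigr)\bigl(k+\tfrac{M}{2}\bigr)\phi_{j,k,l}.
\end{equation*}

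By theorem \ref{orthCH} and definition \ref{Hilbert}, the family $\{\phi_{j,k,l}\}$ is a complete orthonormal basis for $\bold{L}_2(\mR^{m|2n})$, and it is contained in $\mathcal{D}(\cC)$ since each $\phi_{j,k,l}$ belongs to the invariant domain $\cS(\mR^m)\otimes\Lambda_{2n}$. The eigenvalues are real (this is where one uses $M\in\mR$, hence our standing hypothesis $M>0$), so lemma \ref{critess} implies that $\cC$ is essentially self-adjoint. Nelson's theorem then produces the unique unitary representation of $\widetilde{SL(2,\mR)}$ on $\bold{L}_2(\mR^{m|2n})$ integrating the given $\mathfrak{sl}_2$ action. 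The main (and only nontrivial) obstacle is the diagonalization of $\cC$ on the Hermite basis; the rewriting of $\cC$ in terms of $\Delta_{LB}$ is what makes this step immediate, bypassing a direct computation from the symmetric form $\tfrac12\{\mE+M/2+\tfrac12(R^2+\nabla^2),\mE+M/2-\tfrac12(R^2+\nabla^2)\}+\tfrac14(R^2-\nabla^2)^2$ that would otherwise require applying lemma \ref{eigCH} twice.
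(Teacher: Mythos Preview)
Your proposal is correct and follows essentially the same approach as the paper: both invoke Nelson's theorem, reduce to the essential self-adjointness of the Casimir $\cC$, and establish the latter via lemma \ref{critess} by diagonalizing $\cC$ on the complete orthonormal Hermite basis $\{\phi_{j,k,l}\}$. The paper in fact presents \emph{both} computations you mention---first the anticommutator form combined with lemma \ref{eigCH}, then the rewriting $\cC=\tfrac{M}{2}(\tfrac{M}{2}-2)-\Delta_{LB}$ together with \eqref{LBH} and \eqref{LBR}---arriving at the same eigenvalue $(k+\tfrac{M}{2}-2)(k+\tfrac{M}{2})$ that you obtain.
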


As a consequence of this and by the properties of the super Fourier transform acting on the Hermite functions (\cite{DBS3, DBS9}), we find that the super Fourier transform (\ref{Four}) can be written as 
\begin{equation}
\label{FT2}
\cF^\pm_{m|2n}=\exp\left(\mp\frac{i\pi M}{4}\right)\,\exp\left(\pm\frac{i\pi}{4}(R^2-\nabla^2)\right),
\end{equation}
which was formally done in \cite{DBS9}.

\section{The uncertainty principle for the super Fourier transform}
\label{Heissection}

\setcounter{equation}{0}
We formulate the Heisenberg uncertainty principle in superspace by means of a Heisenberg inequality for the Fourier transform on $\mR^{m|2n}$. This is a generalization of the bosonic case, see e.g. corollary $2.8$ in \cite{MR1448337}.
\begin{theorem}(Heisenberg inequality)\\
\label{Heis}
For all $f(\bold{x})\in\bold{L}_2(\mR^{m|2n})$ the super Fourier transform satisfies
\begin{eqnarray*}
||Rf(\bold{x})||_0\,||R\cF^{\pm}_{m|2n}(f)(\bold{x})||_0&\ge& \frac{M}{2}||f||_0^2.
\end{eqnarray*}
\end{theorem}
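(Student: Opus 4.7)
The strategy is to reduce the super Heisenberg inequality to the classical Heisenberg uncertainty inequality on the Euclidean space $\mR^M$, by exploiting the isomorphism $\chi$ of Theorem \ref{isom} which identifies the super Hermite basis $\{\phi_{j,k,l}\}$ with a bosonic Hermite basis on $\mR^M$. First, I rewrite the two factors as expectation values of quadratic operators. Using the symmetry of $R$ (Lemma \ref{Rsymm}), the intertwining relation \eqref{FourR} and the Parseval theorem \ref{Pars},
\begin{equation*}
\|Rf\|_0^2 = \langle f|R^2 f\rangle_2, \qquad \|R\cF^{\pm}_{m|2n}(f)\|_0^2 = \langle \cF^{\pm}_{m|2n}(f)|R^2\cF^{\pm}_{m|2n}(f)\rangle_2 = -\langle f|\nabla^2 f\rangle_2.
\end{equation*}
Thus it suffices to prove $\langle f|R^2 f\rangle_2\,\langle f|{-}\nabla^2 f\rangle_2 \geq (M/2)^2\|f\|_0^4$.

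Second, I decompose the Hilbert space into sectors. Let $\bold{L}_2(\mR^{m|2n})_{k,l}$ denote the closed span of $\{\phi_{j,k,l}:j\in\mN\}$. Combining the two ladder formulas in Lemma \ref{eigCH} (by summing and subtracting them, and using the diagonal action of $H=(R^2-\nabla^2)/2$) yields explicit tridiagonal expansions for $R^2\phi_{j,k,l}$ and $-\nabla^2\phi_{j,k,l}$ which only involve $\phi_{j-1,k,l},\phi_{j,k,l},\phi_{j+1,k,l}$; in particular, each sector is invariant under both operators. Orthogonality of distinct sectors with respect to $\langle\cdot|\cdot\rangle_2$ then gives the Pythagorean splittings $\|Rf\|_0^2=\sum_{k,l}\|Rf_{k,l}\|_0^2$ and $\|R\cF^{\pm}_{m|2n}(f)\|_0^2=\sum_{k,l}\|R\cF^{\pm}_{m|2n}(f_{k,l})\|_0^2$, where $f=\sum_{k,l}f_{k,l}$ is the sector decomposition of $f$.

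Third, I invoke the isomorphism $\chi:L_2(\mR^M)_{k,t}\to\bold{L}_2(\mR^{m|2n})_{k,l}$ of Theorem \ref{isom} sending $\phi^b_{j,k,t}(\uy)$ to $\phi_{j,k,l}(\bold{x})$. Lemma \ref{eigCH} holds with $M$ as the dimension for purely bosonic Hermite functions on $\mR^M$, so the tridiagonal matrix elements just computed for the super pair $(R^2,-\nabla^2)$ in the basis $\{\phi_{j,k,l}\}$ coincide with those of the bosonic pair $(r_{\uy}^2,-\nabla_{b,M}^2)$ in the basis $\{\phi^b_{j,k,t}\}$. Hence $\chi$ intertwines $R^2\leftrightarrow r_{\uy}^2$ and $-\nabla^2\leftrightarrow-\nabla_{b,M}^2$. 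Setting $g:=\chi^{-1}(f_{k,l})\in L_2(\mR^M)$, the classical Heisenberg inequality on $\mR^M$ (e.g.\ Corollary 2.8 of \cite{MR1448337}) applied to $g$ yields
\begin{equation*}
\|Rf_{k,l}\|_0\,\|R\cF^{\pm}_{m|2n}(f_{k,l})\|_0 = \|\uy\,g\|_{L_2(\mR^M)}\,\|\nabla_{b,M}g\|_{L_2(\mR^M)} \geq \frac{M}{2}\|g\|_{L_2(\mR^M)}^2 = \frac{M}{2}\|f_{k,l}\|_0^2.
\end{equation*}

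Finally, writing $A_{k,l}:=\|Rf_{k,l}\|_0^2$ and $B_{k,l}:=\|R\cF^{\pm}_{m|2n}(f_{k,l})\|_0^2$, the Cauchy--Schwarz inequality for non-negative sequences yields
\begin{equation*}
\|Rf\|_0\,\|R\cF^{\pm}_{m|2n}(f)\|_0 = \sqrt{\textstyle\sum_{k,l}A_{k,l}}\sqrt{\textstyle\sum_{k,l}B_{k,l}} \geq \sum_{k,l}\sqrt{A_{k,l}B_{k,l}} \geq \frac{M}{2}\sum_{k,l}\|f_{k,l}\|_0^2 = \frac{M}{2}\|f\|_0^2.
\end{equation*}
The main technical point I anticipate is verifying in the third step that the Hilbert-space isomorphism $\chi$ truly intertwines $-\nabla^2$ with $-\nabla_{b,M}^2$; once the matrix elements of both operators in the Hermite basis are read off (a short bookkeeping exercise from Lemma \ref{eigCH}), the reduction to the bosonic Heisenberg inequality on $\mR^M$ is the heart of the proof and the final Cauchy--Schwarz summation is routine. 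A minor subtlety is that the inequality is only non-trivial for $f$ in the domains of $R$ and $R\cF^{\pm}_{m|2n}$.
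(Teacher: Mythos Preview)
Your proof is correct but follows a genuinely different route from the paper. The paper first establishes the additive inequality
\[
\|Rf\|_0^2+\|R\cF^{\pm}_{m|2n}(f)\|_0^2\ \ge\ M\|f\|_0^2
\]
directly from the spectral lower bound of the super harmonic oscillator $H=(R^2-\nabla^2)/2$ (Lemma~\ref{eigCH}), and then obtains the product form by the standard dilation trick: apply the additive bound to $f_{(c)}$ (defined via $f_s(c\bold{x})$) and optimize over $c>0$. In other words, the paper rederives the Heisenberg inequality from scratch in the super setting, never invoking the bosonic inequality on $\mR^M$. Your argument instead transports each $(k,l)$-sector through the isomorphism $\chi$ of Theorem~\ref{isom} onto $L_2(\mR^M)_{k,t}$, reads off that the matrix elements of $(R^2,-\nabla^2)$ and $(r_{\uy}^2,-\nabla_{b,M}^2)$ in the Hermite bases coincide, applies the classical Heisenberg inequality on $\mR^M$ to $g=\chi^{-1}(f_{k,l})$, and recombines with Cauchy--Schwarz. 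Your route is more structural and makes the ``dimensional continuation'' philosophy explicit, at the cost of importing the bosonic result as a black box; the paper's route is self-contained and arguably shorter, since the oscillator spectrum plus dilation already encodes everything. A small remark on your Step~3: although $\nabla_{b,M}g$ itself need not remain in the sector, only $\|\nabla_{b,M}g\|^2=-\langle g,\nabla_{b,M}^2 g\rangle$ is used, and $\nabla_{b,M}^2$ does preserve the sector, so the identification goes through as you intend.
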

\begin{proof}
First we prove the weaker inequality
\begin{eqnarray}
\label{hulpineq}
||Rf(\bold{x})||_0^2+||R\cF^{\pm}_{m|2n}(f)(\bold{x})||_0^2&\ge& M||f||_0^2.
\end{eqnarray}
The left-hand side of inequality \eqref{hulpineq} is calculated using lemma \ref{Rsymm}, equation \eqref{FourR} and theorem \ref{Pars},
\begin{eqnarray*}
& &\langle Rf(\bold{x})|Rf(\bold{x})\rangle_2+\langle R\cF^{\pm}_{m|2n}(f)(\bold{x})|R\cF^{\pm}_{m|2n}(f)(\bold{x})\rangle_2\\
&=&\langle R^2f(\bold{x})|f(\bold{x})\rangle_2-\langle \cF^{\pm}_{m|2n}(\nabla^2f)(\bold{x})|\cF^{\pm}_{m|2n}(f)(\bold{x})\rangle_2\\
&=&2\langle Hf(\bold{x})|f(\bold{x})\rangle_2=2\langle f| H|f\rangle_2
\end{eqnarray*}
with $H=\frac{1}{2}(R^2-\nabla^2)$ the hermitian hamiltonian of the harmonic oscillator. The inequality \eqref{hulpineq} then follows from the spectrum of the harmonic oscillator, see lemma \ref{eigCH}. The inequality in the theorem can then be calculated in the classical way. First we define $f_{(c)}(\bold{x})$ for $f(\bold{x})\in\bold{L}_2(\mR^{m|2n})$ and $c\in\mR^+$. If $f(\bold{x})=\lim_{s\to\infty}f_s(\bold{x})$ for $f_s\in\cS(\mR^m)_{m|2n}$ then the sequence $\{f_s(c\bold{x})|s\in\mN\}$ is a Banach sequence in $\cS(\mR^m)_{m|2n}$ with respect to the $\bold{L}_2(\mR^{m|2n})$-topology. This can be seen from equation \eqref{superint} which implies
\begin{eqnarray*}
\langle g(c\bold{x})|h(c\bold{x})\rangle_2&=&c^{-M}\langle g(\bold{x})|h(\bold{x})\rangle_2
\end{eqnarray*}
for $g$ and $h$ in $\cS(\mR^{m|2n})$. The function $f_{(c)}$ is then defined as the limit of the Cauchy sequence. This definition implies $||f_{(c)}||_0=c^{-M/2}||f||_0$. A short calculation shows that $\cF^\pm_{m|2n}(f_{(c)})=c^{-M}(\cF^\pm_{m|2n}(f))_{(1/c)}$.

Inequality \eqref{hulpineq} for $f_{(c)}(\bold{x})$ then implies
\begin{eqnarray*}
c^{-M-2}||Rf(\bold{x})||_0^2+c^{2-M}||R\cF^{\pm}_{m|2n}(f)(\bold{x})||_0^2&\ge& Mc^{-M}||f||_0^2.
\end{eqnarray*}
The theorem is then proven by taking $c=||Rf(\bold{x})||_0/||R\cF^{\pm}_{m|2n}(f)(\bold{x})||_0$.
\end{proof}

\begin{corollary}
The inequality in theorem \ref{Heis} is an equality if and only if $f\in \bold{L}_2(\mR^{m|2n})$ is of the form
\begin{eqnarray*}
f(\bold{x})&=&\lambda\exp(-\mu R^2)
\end{eqnarray*}
with $\lambda\in\mC$ and $\mu\in\mR^+$.
\end{corollary}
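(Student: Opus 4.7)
The plan is to trace the equality case backward through the two steps used in the proof of Theorem~\ref{Heis}: the auxiliary inequality~\eqref{hulpineq} and the subsequent substitution of a dilation parameter. Equality in Heisenberg forces equality at the critical step.

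Assume $f \neq 0$ saturates the Heisenberg inequality. Since $(M/2)\|f\|_0^2 > 0$, both $\|Rf\|_0$ and $\|R\cF^\pm_{m|2n}(f)\|_0$ are strictly positive, so the dilation parameter $c_0$ defined by $c_0^2 = \|Rf\|_0/\|R\cF^\pm_{m|2n}(f)\|_0$ is well-defined. Using the scaling identities
\[
\|Rf_{(c)}\|_0^2 = c^{-M-2}\|Rf\|_0^2, \qquad \|R\cF^\pm_{m|2n}f_{(c)}\|_0^2 = c^{2-M}\|R\cF^\pm_{m|2n}f\|_0^2, \qquad \|f_{(c)}\|_0^2 = c^{-M}\|f\|_0^2
\]
(the first two following from equation~\eqref{superint} and the identity $\cF^\pm_{m|2n}(f_{(c)}) = c^{-M}(\cF^\pm_{m|2n}f)_{(1/c)}$ recorded in the proof of Theorem~\ref{Heis}), a direct computation shows that at $c = c_0$ the inequality \eqref{hulpineq} applied to $f_{(c_0)}$ simplifies, after multiplication by $c_0^{M}$, to exactly the Heisenberg bound for $f$. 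Consequently, Heisenberg equality for $f$ is equivalent to equality in~\eqref{hulpineq} for $g := f_{(c_0)}$.

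Now the proof of Theorem~\ref{Heis} rewrites the left-hand side of \eqref{hulpineq} as $2\langle g | Hg\rangle_2$ with $H = \tfrac{1}{2}(R^2 - \nabla^2)$, so the equality condition for $g$ reads $\langle g | Hg\rangle_2 = (M/2)\|g\|_0^2$. Lemma~\ref{eigCH} and Theorem~\ref{orthCH} diagonalize $H$ on $\bold{L}_2(\mR^{m|2n})$ through the orthonormal Hermite basis $\{\phi_{j,k,l}\}$ with eigenvalues $2j+k+M/2 \ge M/2$, and since $\dim\cH_0 = 1$ the minimal eigenspace is one-dimensional and spanned by $\phi_{0,0,1}$, which is a positive scalar multiple of $e^{-R^2/2}$. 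Hence $g = \lambda' e^{-R^2/2}$ for some $\lambda' \in \mC$. Inverting the dilation---the classical function $\lambda' e^{-R^2/(2c_0^2)}\in \cS(\mR^m)_{m|2n}$ has dilation $(\cdot)_{(c_0)}$ equal to $g$, and dilation is a continuous linear automorphism of $\bold{L}_2(\mR^{m|2n})$ with inverse $h \mapsto h_{(1/c_0)}$---one identifies $f(\bold{x}) = \lambda e^{-\mu R^2}$ with $\mu = 1/(2c_0^2) > 0$.

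For the converse, write $f = \lambda(e^{-R^2/2})_{(\sqrt{2\mu})}$. Since $e^{-R^2/2}$ is a positive multiple of $\phi_{0,0,1}$, it is an eigenvector of $H$ with eigenvalue $M/2$ and, by \eqref{FourCH} with $j=k=0$, a fixed point of $\cF^\pm_{m|2n}$; consequently \eqref{hulpineq} holds with equality and $\|Re^{-R^2/2}\|_0 = \|R\cF^\pm_{m|2n}(e^{-R^2/2})\|_0$, which together yield Heisenberg equality for $e^{-R^2/2}$. By the scaling identities above both sides of the Heisenberg bound scale by the common factor $c^{-M}$ under $f \mapsto f_{(c)}$, so equality transfers to $f$. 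The main technical subtlety throughout is the inversion step, where $f$ is a priori only an abstract element of $\bold{L}_2(\mR^{m|2n})$ and need not be a classical function; this is handled by the continuity and invertibility of dilation on $\bold{L}_2(\mR^{m|2n})$, which follow from the Banach sequence construction used to define $f_{(c)}$ in the proof of Theorem~\ref{Heis}.
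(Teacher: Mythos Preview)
Your proof is correct and follows the same approach as the paper's: reduce equality in the Heisenberg inequality to equality in the auxiliary inequality~\eqref{hulpineq} for the suitably dilated function, then use the spectral decomposition of $H$ to identify the Gaussian. Your treatment is in fact slightly more careful than the paper's in two places---you correctly set the optimal dilation parameter via $c_0^2=\|Rf\|_0/\|R\cF^\pm_{m|2n}(f)\|_0$ (the paper's stated $c$ and its final exponent $1/(2c_0)$ have compensating slips), and you spell out the converse and the inversion of dilation on $\bold{L}_2(\mR^{m|2n})$ explicitly.
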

\begin{proof}
From the proof of theorem \ref{Heis} it is clear that inequality \eqref{hulpineq} is an equality if and only if $f(\bold{x})=\lambda \exp(-R^2/2)$ for a certain $\lambda\in\mC$. Now assume that for $f$ the inequality in theorem \ref{Heis} is an equality. This implies that for $c_0=||Rf(\bold{x})||_0/||R\cF^{\pm}_{m|2n}(f)(\bold{x})||_0$, the equality
\begin{eqnarray*}
c_0^{-2}||Rf(\bold{x})||_0^2+c_0^{2}||R\cF^{\pm}_{m|2n}(f)(\bold{x})||_0^2&=& M||f||_0^2,
\end{eqnarray*}
holds. This implies that for $f_{(c_0)}$ as defined in the proof of theorem \ref{Heis}, equation \eqref{hulpineq} is an equality, so $f_{(c_0)}(\bold{x})=\lambda \exp(-R^2/2)$ or
\begin{eqnarray*}
f(\bold{x})&=&\lambda\exp(-R^2/(2c_0)).
\end{eqnarray*} 
This proves the only if part.

The if part can be proven from the same considerations.
\end{proof}

\section{List of notations}

Some notations for spherical harmonics, Hermite functions and Hilbert spaces used in this paper are listed below.

\begin{tabular}{ll}
$M=m-2n>0$& super-dimension\\
$\ux$& vectorvariable on $\mR^m$\\
$\uxb$& vectorvariable on $\Lambda_{2n}$\\
$\bold{x}$& vectorvariable on $\mR^{m|2n}$\\
$\uy$& vectorvariable on $\mR^M$\\
$\cH_p^b$& spherical harmonics on $\mR^m$ of degree $p$ \\
$\cH_q^f$& spherical harmonics in $\Lambda_{2n}$ of degree $q$ \\
$\cH_k$& spherical harmonics on $\mR^{m|2n}$ of degree $k$\\
$\cH_{M,k}^b$& spherical harmonics on $\mR^M$ of degree $k$\\
$H_p^{b(l)}$& orthonormal basis for $\cH_p^b$ \\
$H_q^{f(t)}$& orthonormal basis for $\cH_f^q$\\
$H_k^{(l)}$& orthonormal basis for $\cH_k$\\
$H_{M,k}^{b(t)}$& orthonormal basis for $\cH_{M,k}^b$\\
$\phi^b_{i,p,l}(\ux)=\frac{1}{\zeta_{i,p}^m}L_i^{\frac{m}{2}+p-1}(r^2)H_p^{b(l)}\exp(-\frac{r^2}{2})$ & Hermite functions on $\mR^{m}$\\
$\phi_{s,q,t}^f(\uxb)=\frac{1}{\zeta_{s,q}^f}L_i^{q-n-1}(\theta^2)H_q^{f(t)}\exp(-\frac{\theta^2}{2})$& Hermite functions in $\Lambda_{2n}$\\
$\phi_{j,k,l}(\bold{x})=\frac{1}{\zeta_{j,k}^M}L_j^{\frac{M}{2}+k-1}(R^2)H_k^{(l)}\exp(-\frac{R^2}{2})$& Hermite functions on $\mR^{m|2n}$\\
$\phi_{j,k,p,q,l,t}=\phi_{j,2k+p+q,r[k,p,q,l,t]}$& Hermite functions on $\mR^{m|2n}$ with basis $\cH$ of lemma \ref{SSin4}\\
$\phi_{j,k,l}(\uy)=\frac{1}{\zeta_{j,k}^M}L_j^{\frac{M}{2}+k-1}(r_{\uy}^2)H_{M,k}^{b(l)}\exp(-\frac{r_{\uy}^2}{2})$& Hermite functions on $\mR^{M}$\\
$L_{2}(\mR^m)$& Hilbert space of square integrable functions on $\mR^m$\\
$L_2(\mR^m)_{m|2n}$& tensor product of $L_2(\mR^m)$ and $\Lambda_{2n}$\\
$\bold{L}_{2}(\mR^{m|2n})$& Hilbert space on $\mR^{m|2n}$ corresponding to $\langle\cdot|\cdot\rangle_2$\\
\end{tabular}

\end{document}